\def\beq{\begin{equation}}
\def\eeq{\end{equation}}
\def\bea{\begin{eqnarray}}
\def\eea{\end{eqnarray}}
\def\beann{\begin{eqnarray*}}
\def\eeann{\end{eqnarray*}}
\let\a=\alpha \let\be=\beta \let\g=\gamma \let\de=\delta
\let\e=\varepsilon  \let\h=\eta \let\th=\theta
\let\eps=\epsilon
 \let\k=\kappa \let\la=\lambda \let\m=\mu
\let\n=\nu \let\x=\xi \let\p=\pi \let\r=\rho \let\s=\sigma
 \let\Ph=\phi  
 \let\G=\Gamma \let\D=\Delta
\let\qd=\quad  
\def\epp{\, .}
\def\epc{\, ,}
\theoremstyle{plain}
\newtheorem{lemma}{Lemma}
\newtheorem{corollary}{Corollary}
\newtheorem*{corollary*}{Corollary}
\newtheorem*{conjecture*}{Conjecture}
\theoremstyle{definition}
\def\2{\frac{1}{2}} \def\4{\frac{1}{4}}
\def\6{\partial}
\def\+{\dagger}
\def\<{\langle} \def\>{\rangle}
\let\then=\Rightarrow
\def\CO{{\cal O}}
\def\i{{\rm i}}
\def\rd{{\rm d}}
\def\re{{\rm e}}
\DeclareMathOperator{\sh}{sh}
\DeclareMathOperator{\ch}{ch}
\DeclareMathOperator{\tgh}{th}
\DeclareMathOperator{\cth}{cth}
\DeclareMathOperator{\Tr}{Tr}
\DeclareMathOperator{\sign}{sign}
\DeclareMathOperator{\End}{End}
\def\Re{{\rm Re\,}} \def\Im{{\rm Im\,}}
\def\fa{\mathfrak{a}}
\newcommand\uq{\overline{u}}
\renewcommand{\appendix}{%
   \renewcommand{\section}{
        \secdef\Appendix\sAppendix}%
   \setcounter{section}{0}%
   \renewcommand{\thesection}{\Alph{section}}%
   \renewcommand{\theequation}{\thesection.\arabic{equation}}%
}
\newcommand{\Appendix}[2][?]{%
     \refstepcounter{section}%
     \setcounter{equation}{0}%
     \addcontentsline{toc}{appendix}%
          {\protect\numberline{\appendixname~\thesection} #1}%
     \vspace{\baselineskip}%
     {\noindent\large\bfseries\appendixname\ \thesection: #2\par}%
     \sectionmark{#1}\vspace{\baselineskip}}
\newcommand{\sAppendix}[1]{%
     {\noindent\large\bfseries\appendixname\:: #1\par}%
     \sectionmark{#1}\vspace{\baselineskip}}
\begin{document}

\thispagestyle{empty}

\begin{center}

{\Large {\bf Low-temperature large-distance asymptotics of the transversal
two-point functions of the XXZ chain
\\}}

\vspace{7mm}

{\large
Maxime Dugave\footnote{e-mail: dugave@uni-wuppertal.de},
Frank G\"{o}hmann\footnote{e-mail: goehmann@uni-wuppertal.de}}%
\\[1ex]
Fachbereich C -- Physik, Bergische Universit\"at Wuppertal,\\
42097 Wuppertal, Germany\\[2.5ex]
{\large Karol K. Kozlowski\footnote{e-mail: karol.kozlowski@u-bourgogne.fr}}%
\\[1ex]
IMB, UMR 5584 du CNRS,
Universit\'e de Bourgogne, France

\vspace{40mm}

{\large {\bf Abstract}}

\end{center}

\begin{list}{}{\addtolength{\rightmargin}{9mm}
               \addtolength{\topsep}{-5mm}}
\item
We derive the low-temperature large-distance asymptotics
of the transversal two-point functions of the XXZ chain
by summing up the asymptotically dominant terms of their
expansion into form factors of the quantum transfer matrix.
Our asymptotic formulae are numerically efficient and
match well with known results for vanishing magnetic
field and for short distances and magnetic fields below
the saturation field.
\\[2ex]
{\it PACS: 05.30.-d, 75.10.Pq}
\end{list}

\clearpage

\section{Introduction}
Many-body quantum systems undergoing second order phase
transitions renormalize to conformally invariant quantum
field theories at their critical points. The form of their
two-point functions at critical points is therefore
determined by the conformal group \cite{Polyakov70}.
These statements, which are at the heart of our understanding
of many-body quantum statistical systems are hard to
verify directly by analytic means, even for integrable
models. For the paradigmatic spin-$1/2$ anisotropic
Heisenberg chain a direct calculation of the large-distance
asymptotics of its two-point ground state correlation functions
was carried out only rather recently by two different methods
\cite{KKMST09a,KKMST11b}.

The anisotropic Heisenberg chain, or XXZ chain, is a model
of spins interacting with their nearest neighbours by
a linear combination of exchange and Ising interactions.
For local spins $1/2$ its Hamiltonian
\begin{equation} \label{ham}
     H_L = J \sum_{j = - L + 1}^L \Bigl( \s_{j-1}^x \s_j^x + \s_{j-1}^y \s_j^y
                               + \D \bigl( \s_{j-1}^z \s_j^z - 1 \bigr) \Bigr)
\end{equation}
is integrable. Here the $\s_j^\a$, $\a = x, y, z$, are the
Pauli matrices embedded into $\End \bigl( {\mathbb C}_2^{\otimes 2L} \bigr)$
and periodic boundary conditions, $\s_{-L}^\a = \s_L^\a$,
are implied. $J$ stands for the exchange energy and $\D$ for the
anisotropy parameter.

The first of the two above mentioned methods \cite{KKMST09a} is
based on the asymptotic analysis of a multiple integral
representation of a generating function of the longitudinal
two-point functions \cite{JiMi96,KMST05a}, while the second one
\cite{KKMST11b} relies on the analysis of form factors
\cite{KMT99a,KKMST09b,KKMST11a} (matrix elements of local
operators between the ground state and excited states of
the Hamiltonian). A major difficulty that had to be overcome
in the form factor approach comes from the fact that the
individual form factors of a system at a critical point
vanish in the thermodynamic limit. A possible way to cope
with this problem is to keep control over the finite-size
scaling behaviour of the form factors and to sum up the
leading finite-size contributions to the form factor
expansion before taking the thermodynamic limit. It was 
observed in \cite{KKMST11b} that such type of summation can
be performed by means of a combinatorial formula that arose
in the context of the representation theory of the infinite
symmetric group \cite{KOV93}.

In our previous work \cite{DGK13a} we combined the form
factor approach with the quantum transfer matrix approach
to the thermodynamics of integrable models \cite{Suzuki85,%
SuIn87,Kluemper93}. We calculated form
factors of the quantum transfer matrix that appear in the
form factor expansion of the thermal two-point correlation
functions of the XXZ chain. We like to call these form factors
thermal form factors. For finite temperatures the form
factor expansion of the two-point functions is a large-distance
asymptotic series expansion. Each term in the series consists
of an amplitude times a factor describing exponential decay.
The latter is characterized by a correlation length determined
by the ratio of an eigenvalue belonging to an excited state
of the quantum transfer matrix and the dominant eigenvalue.
The amplitudes are products of two thermal form factors.
For sufficiently high temperatures a few leading terms in
the form factor expansion are expected to determine the
correlation functions accurately. We plan to study this
intermediate and high-temperature regime in a separate work.

Here we continue the exploration of the low-temperature
regime, begun in \cite{DGK13a}, with the analysis of the
transversal correlation functions. In \cite{DGK13a} we studied
the low-temperature behaviour of the longitudinal correlation
functions in the antiferromagnetic critical regime for $|\D| < 1$.
In this regime the gap between the dominant eigenvalue and the
lowest excited-state eigenvalues of the quantum transfer matrix closes
in the zero-temperature limit $T \rightarrow 0$, and infinitely
many correlation lengths diverge. As a consequence of the overall
finiteness of the correlation functions the corresponding amplitudes
and form factors must necessarily vanish. The amplitudes 
decay as powers of $T$ with generally non-integer exponents whose
values depend on $\D$ and on the magnetic field in $z$-direction $h$.
Like in case of the usual transfer matrix these `critical' amplitudes
and correlation lengths can be classified by particle-hole
excitations above two Fermi points. The corresponding contributions
to the form factor series can be summed up by means of
the same summation formula as in case of the ordinary transfer
matrix \cite{KKMST11b}.

We observed this in our analysis of the low-temperature large-distance
asymptotics of the longitudinal two-point functions \cite{DGK13a}. 
In this work we carry out a similar low-temperature analysis of
the transversal two-point functions. We take the opportunity to
close some gaps in the arguments of our previous work. In particular,
we provide a careful and more complete discussion of the low-temperature
behaviour of the solutions of the non-linear integral equations
\cite{Kluemper92,Kluemper93} that are fundamental for the quantum
transfer matrix approach to the thermodynamics of integrable models.
We also show that the final formulae for the asymptotics of the
two-point functions are efficient. Our expressions for the leading
amplitudes can easily be evaluated numerically. Our formulae hold
for any finite magnetic field. In the limit of vanishing field we
compare them with the amplitudes for $h = 0$ obtained by Lukyanov
\cite{Lukyanov98,Lukyanov99} using field theoretical methods.
We also find amazing agreement, when we compare the low-temperature
magnetic field dependence of the third neighbour correlation functions
following from our large-distance asymptotic expansion with the known
exact results \cite{BDGKSW08}.

\section{Form factor expansion of two-point functions}
The thermal expectation value of an operator $X$ acting on the
space of states of the XXZ chain is defined as
\begin{equation}
     \< X \> = \lim_{L \rightarrow \infty}
               \frac{\Tr \bigl\{\re^{- H_L/T + h S^z/T} X \bigr\}}
                    {\Tr \{\re^{- H_L/T + h S^z/T}\}} \epc
\end{equation}
where
\begin{equation}
     S^z = \2 \sum_{j = - L + 1}^L \s_j^z
\end{equation}
is the conserved $z$-component of the total spin.

In our previous work \cite{DGK13a} we considered the longitudinal
and transversal two-point functions $\< \s_1^z \s_{m+1}^z \>$ and
$\< \s_1^- \s_{m+1}^+ \>$. We studied the longitudinal two-point
function by means of a generating function. For the transversal
two-point function no convenient generating function is known.
We considered a direct form-factor expansion of the form
\begin{equation} \label{pmexp}
     \<\s_1^- \s_{m+1}^+\> = \sum_{n=1}^\infty A_n^{-+} \re^{- m/\x_n} \epp
\end{equation}
Every term in this series is characterized by a correlation length
$\x_n$ and by an amplitude~$A_n^{-+}$.

Correlation lengths had been studied in various contexts before
\cite{KlSc03,SSSU99,Takahashi91,Kluemper92,PaKl13}. They are related to
ratios $\r_n$ of quantum transfer matrix eigenvalues \cite{Suzuki85}
in the so-called Trotter limit,
\begin{equation}
     \re^{- 1/\x_n} = \r_n \epp
\end{equation}
Here $\r_n = \r_n (0|0)$ and
\begin{equation} \label{rhoint}
     \r_n (\la|\a) = q^{s + \a}
        \exp \biggl\{\int_{{\cal C}_n} \frac{\rd \m}{2\p \i} \: \re (\m - \la)
	\ln \biggl( \frac{1 + \fa_n (\m|\a)}{1 + \fa_0 (\m)} \biggr) \biggr\} \epp
\end{equation}

In this expression $q = \re^\h$ parameterizes the anisotropy parameter,
$\D = (q + q^{-1})/2$. In order to avoid case differentiations we shall
assume in the following that $\h = - \i \g$, $\g \in (0,\p/2)$. Then
$0 < \D < 1$. We shall comment on the case $- 1 < \D < 0$ below and
plan to treat the `massive case' $\D > 1$ in a separate work. The integer
$s$ refers to the $z$-component of the pseudo spin of the exited state
of the quantum transfer matrix that characterizes $\x_n$. For the
transversal two-point functions (\ref{pmexp}) it must be equal to
the eigenvalue belonging to $\s^+$ under the adjoint action of $\s^z/2$,
\textit{i.e.}\ $s = 1$.

The auxiliary functions $\fa_n (\cdot|\a)$ are solutions of nonlinear
integral equations,
\begin{equation} \label{nlie}
     \ln (\fa_n (\la|\a)) =  - 2 (\k + \a + s) \h - \be \re (\la)
        - \int_{{\cal C}_n} \frac{\rd \m}{2\p \i} \:
	  K(\la - \m) \ln \bigl(1 + \fa_n (\m|\a)\bigr) \epp
\end{equation}
As usual in integrable systems everything here is determined by one- and
two-particle data. The function
\begin{equation} \label{baree}
     \re (\la) = \cth (\la) - \cth(\la + \h)
\end{equation}
is the bare energy, and the kernel
\begin{equation} \label{kernel}
     K(\la) = \cth (\la - \h) - \cth (\la + \h)
\end{equation}
is the derivative of the bare two-particle scattering phase. The
twist $\a \in {\mathbb C}$ is an important regularization parameter
as will become clear below. It has a physical meaning in the
conformal limit \cite{BJMS10}. In equation (\ref{nlie}) it appears
as a shift of the rescaled magnetic field $\k$. This field and the
rescaled inverse temperature $\be$ are defined as
\begin{equation}
     \k = \frac{h}{2 \h T} \epc \qd \be = \frac{2J \sh(\h)}{T} \epp
\end{equation}
The physical correlation functions are obtained in the limit
$\a \rightarrow 0$. For this reason we may assume that
$0 < |\a| \ll 1$.

Solutions of (\ref{nlie}) are characterized by equivalence classes of
contours ${\cal C}_n$, two contours being equivalent if they admit
the same solution $\fa_n (\cdot|\a)$. The `canonical contour' ${\cal C}_0$
associated with the dominant eigenvalue of the quantum transfer matrix
consists of two straight lines parallel to the real axis and passing
through $\mp \i (\g/2 - 0^+)$. It is oriented in such a way that
it encircles the real axis in a counterclockwise manner. The corresponding
untwisted auxiliary function, the solution $\fa_0 = \fa_0 (\cdot|0)$ of
(\ref{nlie}) in the pseudo spin $s = 0$ sector, characterizes the thermal
equilibrium of the model. It parameterizes the free energy per lattice site
\cite{Kluemper93}, the multiple-integral representations of
temperature correlation functions \cite{GKS04a,GKS05} and the
physical part of the temperature correlation functions in factorized
form \cite{BoGo09}. For $s \ne 0$ there are reference contours
${\cal C}_{0,s}$ associated with the eigenvalue of largest modulus
in the respective sector. The case $s = 1$ will be described in
detail below.
\begin{figure}
\begin{center}
\includegraphics[width=.8\textwidth]{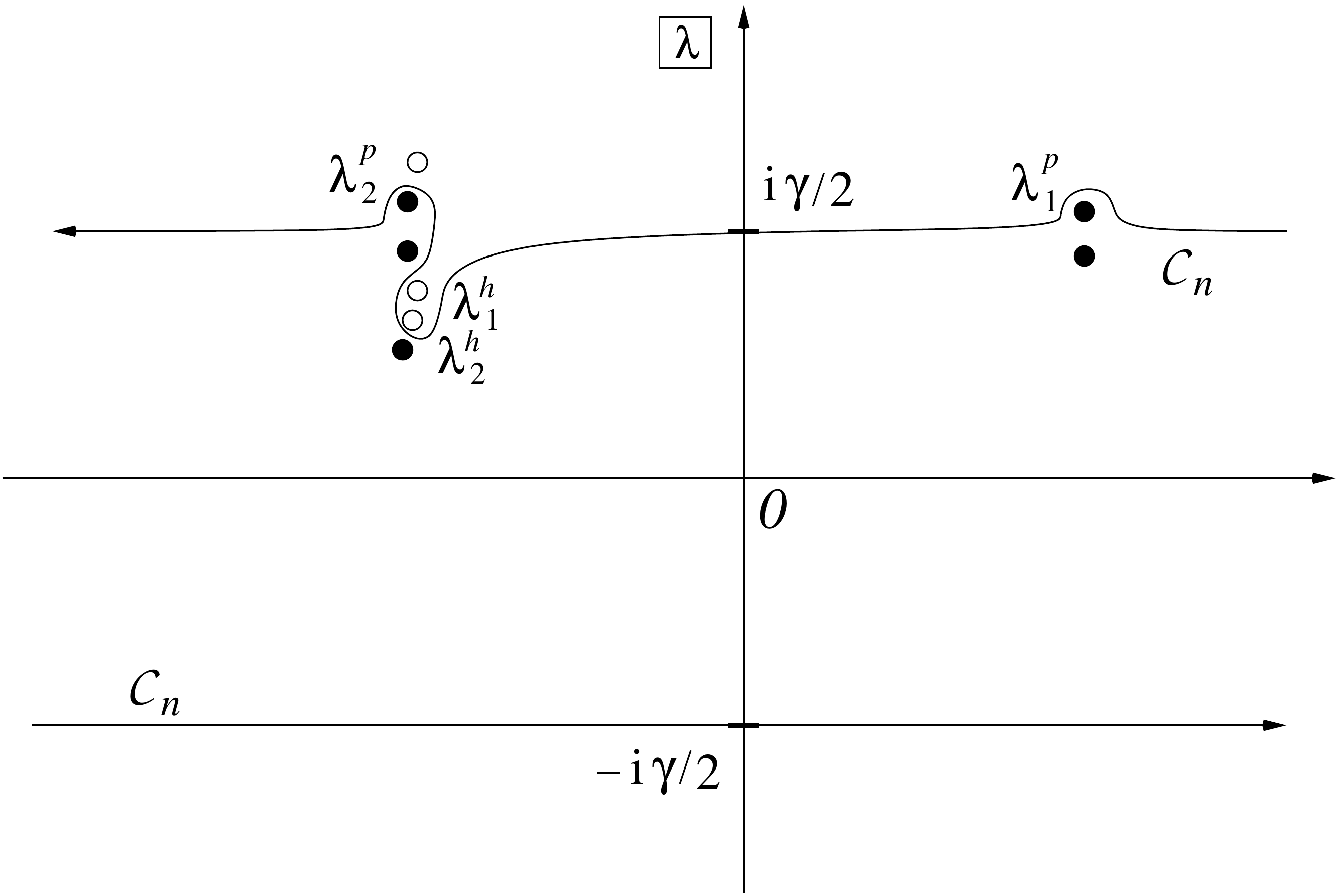}
\caption{\label{fig:cn} Sketch of a contour ${\cal C}_n$ for $0 < \D < 1$.
Holes $\la_j^h$ inside the `fundamental strip' between $- \i \g/2$ and
$\i \g/2$ are excluded from the contour, and particles $\la_k^p$ outside
the fundamental strip are included into the contour.
}
\end{center}
\end{figure}
The contours ${\cal C}_n$ are deformations of the reference contours.
Zeros $\la^h_j$ of $1 + \fa_n (\cdot|\a)$ inside ${\cal C}_{0,s}$
but outside ${\cal C}_n$ will be called holes, and zeros $\la_k^p$ of
$1 + \fa_n (\cdot|\a)$ outside ${\cal C}_{0,s}$ but inside ${\cal C}_n$
will be called particles.

Sets of zeros and holes and equivalence classes of contours are
in one-to-one correspondence. This can be seen by deforming
the contours in (\ref{nlie}) into the reference contours ${\cal C}_{0,s}$
(`straightening the contours'). Then the particles and holes appear
as explicit parameters in the driving terms of the integral equation
and must be determined by the subsidiary conditions $1 +
\fa_n (\la^{p, h}|\a) = 0$. This formulation will be useful in
section~\ref{sec:lowt} on the low-temperature limit.

Expressions for the amplitudes $A_n^{-+}$ in the form-factor
expansion of the transversal correlation functions were obtained
in \cite{DGK13a}, $A_n^{-+} = \lim_{\a \rightarrow 0}
A_n^{-+} (0|\a)$, where
\begin{multline} \label{amp}
     A_n^{-+} (\x|\a) = \frac{\overline{G}_+^- (\x) \overline{G}_-^+ (\x)}
                          {(q^{1 + \a} - q^{- 1 - \a})(q^\a - q^{-\a})} \\[1ex] \times
                     \exp \biggl\{ -
		        \int_{{\cal C}_n} \frac{\rd \la}{2 \p \i} \:
			\ln \bigl( \r_n (\la|\a) \bigr) \6_\la
			\ln \biggl( \frac{1 + \fa_n (\la|\a)}
			    {1 + \fa_0 (\la)} \biggr) \biggr\}\\[1ex] \times
		     \frac{\det_{\rd m^\a_+, {\cal C}_n}
		           \bigl\{ 1 - \widehat{K}_{1-\a} \bigr\}
		           \det_{\rd m^\a_-, {\cal C}_n}
		           \bigl\{ 1 - \widehat{K}_{1+\a} \bigr\}}
			  {\det_{\rd m^\a_0, {\cal C}_n}
			   \bigl\{ 1 - \widehat{K} \bigr\}
			   \det_{\rd m, {\cal C}_n}
			   \bigl\{ 1 - \widehat{K} \bigr\}} \epp
\end{multline}

Here, for $s = \pm$,
\begin{equation}
     \overline{G}_s^\pm (\x)
        = \lim_{\Re \la \rightarrow \pm \infty} \overline{G}_s (\la, \x)
\end{equation}
and $\overline{G}_\pm (\la, \x)$ is the solution of the linear
integral equation
\begin{multline} \label{liegbar}
     \overline{G}_\pm (\la, \x) = 
        - \cth(\la - \x) + q^{\a \mp 1} \r_n^{\pm 1} (\x|\a) \cth(\la - \x - \h) \\
	+ \int_{{\cal C}_n} \rd m_\pm^\a (\m) \overline{G}_\pm (\m, \x)
	  K_{\a \mp 1} (\m - \la)
\end{multline}
with deformed kernel
\begin{equation}
     K_\a (\la) = q^{- \a} \cth(\la - \h) - q^\a \cth(\la + \h) \epp
\end{equation}
Note that $\overline{G}_-^+$ is analytic in $\a$ and that
$\overline{G}_-^+|_{\a = 0} = 0$ which implies that the limit
$\a \rightarrow 0$ exists in (\ref{amp}). The vanishing of
$\overline{G}_-^+$ at $\a = 0$ came out rather indirectly in the
derivation of (\ref{amp}) in \cite{DGK13a}. We wish to point
out that it can be shown more directly using the technique
developed in \cite{BoGo10}.

The `measures' $\rd m^\a_\eps$, $\e = -, 0, +$ and $\rd m$ are defined by
\begin{subequations}
\begin{align}
     \rd m_-^\a (\la) &
        = \frac{\rd \la \: \r_n^{-1} (\la|\a)}{2 \p \i (1 + \fa_0 (\la))} \epc \qd
     \rd m_+^\a (\la)
        = \frac{\rd \la \: \r_n (\la|\a)}{2 \p \i (1 + \fa_n (\la|\a))} \epc \\
     \rd m (\la) & = \frac{\rd \la}{2 \p \i (1 + \fa_0 (\la))} \epc \qd
     \rd m_0^\a (\la) = \frac{\rd \la}{2 \p \i (1 + \fa_n (\la|\a))} \epp
\end{align}
\end{subequations}
The determinants in (\ref{amp}) are Fredholm determinants of integral
operators defined by the respective kernels and measures and by the
integration contours ${\cal C}_n$ (see \cite{DGK13a} for more details).

We would like to emphasize that the amplitudes are entirely described
in terms of functions that appeared earlier in the description of
the thermodynamic properties, the correlation length and the correlation
functions of the model. These are the auxiliary functions $\fa_n$
\cite{Kluemper92,Kluemper93}, the eigenvalue ratios $\r_n$ \cite{JMS08,BoGo09}
and the deformed kernel $K_\a$ \cite{BGKS07}. The parameter $s$ corresponds
to the spins of the operators $\s^\pm$ occurring in the transversal
correlation function (see \cite{DGK13a}).

\section{The low-temperature limit}
\label{sec:lowt}
Equations (\ref{rhoint}) and (\ref{amp}) for the correlation lengths
and amplitudes in the asymptotic series (\ref{pmexp}) are valid for all
positive temperatures and magnetic fields and, with an appropriate
adaption of the definition of the contours, also for all $\D > - 1$.
In the general case we do not expect any further simplification
of these formulae. Their further analysis will have to rely on
numerical calculations. In the low-temperature limit, for $|\D| < 1$
and $h$ below the critical field in particular, we expect universal
behaviour described by conformal field theory, and further simplification
should be possible.

A similar low-temperature analysis was carried out for a
generating function of the density-density correlation functions
of the Bose gas with delta-function interaction \cite{KMS11a,KMS11b}
and for a generating function of the longitudinal two-point functions
of the XXZ chain \cite{DGK13a}. We shall see that the latter analysis
basically carries over to the case of the transversal correlation
functions considered in this work. It consists of the following
steps. We first straighten the contours in the nonlinear integral
equations (\ref{nlie}) and in the expressions of the correlation
lengths (\ref{rhoint}) and amplitudes (\ref{amp}), making the
dependence on particle and hole positions explicit. Then we
perform a low-temperature analysis of the nonlinear integral
equations (\ref{nlie}) using a generalization of Sommerfeld's
lemma \cite{Sommerfeld28}. Note that this is only possible for
non-zero magnetic field as it relies on the existence of a pair
of Fermi points. The low-temperature data from the analysis of
the nonlinear integral equations then enter the analysis of the
correlation lengths and amplitudes. It turns out that infinitely
many correlation lengths diverge in the low-temperature limit
and that the corresponding amplitudes all vanish as $T$ goes to
zero. Therefore, using a formula derived in \cite{KOV93,Olshanski03,%
KKMST11b}, we first sum up the leading low-temperature
contributions which then allows us to perform the zero-temperature
limit. We shall take the opportunity to present some details and
proofs that were omitted in our previous work \cite{DGK13a}.
\subsection{Straightening the contours}
We denote the number of holes by $n_h$ and the number of particles
by $n_p$ and define the functions
\begin{equation}
     E(\la) = \ln \biggl( \frac{\sh (\la)}{\sh(\h + \la)} \biggr) \epc \qd
     \th (\la) = \ln \biggl( \frac{\sh (\h - \la)}{\sh(\h + \la)} \biggr)
\end{equation}
which are antiderivatives of the bare energy $\re (\la)$ and of
the kernel $K(\la)$. We further introduce the shorthand notation
\begin{equation} \label{defz}
     z(\la) = - \frac{1}{2 \p \i}
                \ln \biggl( \frac{1 + \fa_n (\la|\a)}{1 + \fa_0 (\la)} \biggr) \epp
\end{equation}

A straightening of the contours in (\ref{nlie}) leads to
\begin{multline} \label{nliecs}
     \ln (\fa_n (\la|\a)) =  \i \p s
        - 2 (\k + \a) \h - \be \re (\la) \\
        + \sum_{j=1}^{n_h} \th (\la - \la_j^h)
        - \sum_{j=1}^{n_p} \th (\la - \la_j^p)
        - \int_{{\cal C}_{0,s}} \frac{\rd \m}{2\p \i} \:
	  K(\la - \m) \ln \bigl(1 + \fa_n (\m|\a)\bigr) \epp
\end{multline}
Here the contour ${\cal C}_{0,s}$ is chosen in such a way that
\begin{equation} \label{npnhs}
     n_h - n_p = s \epp
\end{equation}
In the free fermion limit $\g \rightarrow \p/2$ and in the
low-temperature limit (see below) we shall see that ${\cal C}_{0,0}$
is equal to the canonical contour ${\cal C}_0$. We shall also
find an explicit description of ${\cal C}_{0,1}$.
Equation (\ref{nliecs}) defines an $(n_p + n_h)$-parametric family of
functions, depending on $\{\la_j^h\}$ and $\{\la_j^p\}$. The individual
functions $\fa_n (\cdot|\a)$ are then determined by the subsidiary conditions
\begin{equation} \label{subs}
     \fa_n (\la_j^h|\a) = \fa_n (\la_k^p|\a) = - 1 \epc \qd j = 1, \dots, n_h \epc \qd
                       k = 1, \dots, n_p
\end{equation}
fixing these parameters to a discrete set of values.

Straightening the contours the eigenvalue ratios take the form
\begin{equation} \label{evratc0}
     \r_n (\la|\a) = q^\a \exp \biggl\{
        \sum_{j=1}^{n_h} E (\la_j^h - \la ) - \sum_{j=1}^{n_p} E (\la_j^p - \la)
	- \int_{{\cal C}_{0,s}} \rd \m \: \re (\m - \la) z(\m) \biggr\}
\end{equation}
for $\la$ inside ${\cal C}_{0,s}$.

For the amplitudes we concentrate on the exponential term in (\ref{amp}).
Upon straightening the contours it turns into
\begin{align} \label{amps}
     A_n^{(0)} (\a) = 
     \exp \biggl\{ - \int_{{\cal C}_n} & \frac{\rd \la}{2 \p \i} \:
			\ln \bigl( \r_n (\la|\a) \bigr) \6_\la
			\ln \biggl( \frac{1 + \fa_n (\la|\a)}
			    {1 + \fa_0 (\la)} \biggr) \biggr\}\\[2ex]
     = & (-1)^s q^{s \a}
	         \frac{\prod_{j = 1}^{n_h} \prod_{k = 1}^{n_p}
	               \sh(\la_j^h - \la_k^p + \h) \sh(\la_j^h - \la_k^p - \h)}
                      {\Bigl[ \prod_{j, k = 1}^{n_h} \sh(\la_j^h - \la_k^h - \h) \Bigr]
                       \Bigl[ \prod_{j, k = 1}^{n_p} \sh(\la_j^p - \la_k^p + \h) \Bigr]}
		 \notag \\[.5ex]
        & \times 
	         \frac{\biggl[ \prod_{\substack{j, k = 1 \\ j \ne k}}^{n_h}
	                      \sh(\la_j^h - \la_k^h) \biggr]
                       \biggl[ \prod_{\substack{j, k = 1 \\ j \ne k}}^{n_p}
	                      \sh(\la_j^p - \la_k^p) \biggr]}
                      {\Bigl[ \prod_{j=1}^{n_h} \prod_{k=1}^{n_p}
		              \sh(\la_j^h - \la_k^p) \Bigr]^2}
			  \notag \\[.5ex]
        & \times 
	  \biggl[ \prod_{j=1}^{n_p}
	          \Bigl( \6_\la \re^{- 2 \p \i z(\la)} \Bigr)^{-1}_{\la = \la_j^p} \biggr]
	  \biggl[ \prod_{j=1}^{n_h}
	          \Bigl( \6_\la \re^{- 2 \p \i z(\la)} \Bigr)^{-1}_{\la = \la_j^h} \biggr]
		  \notag \displaybreak[0] \\[.5ex]
        & \times
	    \exp \biggl\{ \sum_{j=1}^{n_p} \int_{{\cal C}_{0,s}} \rd \la \: z(\la)
	                  \bigl( \re(\la - \la_j^p) - \re(\la_j^p - \la) \bigr) \biggr\}
			  \notag \\[.5ex]
        & \times
	    \exp \biggl\{ \sum_{j=1}^{n_h} \int_{{\cal C}_{0,s}} \rd \la \: z(\la)
	                  \bigl( \re(\la_j^h - \la) - \re(\la - \la_j^h) \bigr) \biggr\}
			      \notag \\[.5ex]
        & \times
	    \exp \biggl\{ - \int_{{\cal C}_{0,s}} \rd \la \: \int_{{\cal C}_{0,s}'} \rd \m \:
	         z(\la) \re' (\la - \m) z(\m) \biggr\}
		       \epp
\end{align}
Here ${\cal C}_{0,s}'$ is a contour infinitesimally close to ${\cal C}_{0,s}$ and
inside ${\cal C}_{0,s}$. The Fredholm determinants and the factors
$\overline{G}_\pm$ need a different treatment which will be discussed
below. In \cite{DGK13a} we called the above contribution to the amplitude
the `universal part', since it is of the same form in the longitudinal
and transversal case.

\subsection{Low-temperature analysis of the nonlinear integral equations}
In \cite{DGK13a} we performed a low-temperature analysis of the nonlinear
integral equations (\ref{nliecs}) that also suits for our present purposes.
Here we shall reproduce this analysis and take the opportunity to add
more details, to make our arguments more rigorous and to clarify the appearance 
and disappearance of certain phases `$p$'.

We introduce the notations
\begin{equation}
     \e_0 (\la) = h - \frac{4 J(1 - \D^2)}{\ch (2 \la) - \D}
\end{equation}
and
\begin{equation}
     u_0 (\la) = - T \ln \bigl( \fa_0 (\la + \i \g/2) \bigr) \epc \qd
     u(\la) = - T \ln \bigl( \fa_n (\la + \i \g/2 |\a) \bigr) \epc
\end{equation}
where we omitted the index $n$ and the dependence on $\a$ in the
definition of $u$. Then (\ref{nliecs}) implies that
\begin{multline} \label{nlieu}
     u (\la) = \e_0 (\la) + T \biggl[ 2 \p \i (\a' - s/2)
        + \sum_{j=1}^{n_p} \th (\la - \la_j^p + \i \g/2)
        - \sum_{j=1}^{n_h} \th (\la - \la_j^h + \i \g/2) \biggr] \\
        + T \int_{{\cal C}_{0,s} - \i \g/2} \frac{\rd \m}{2\p \i} \:
	  K(\la - \m) \ln \Bigl(1 + \re^{- \frac{u(\m)}T} \Bigr) \epc
\end{multline}
where $\a' = \h \a/\i \p$. A similar equation without the contribution
proportional to $T$ in the driving term holds for $u_0$,
\begin{equation} \label{nlieu0}
     u_0 (\la) = \e_0 (\la)
                 + T \int_{{\cal C}_0 - \i \g/2} \frac{\rd \m}{2\p \i} \:
		            K(\la - \m) \ln \Bigl(1 + \re^{- \frac{u_0(\m)}T} \Bigr) \epp
\end{equation}
Since $\th$ is bounded on the contour ${\cal C}_{0,s} - \i \g/2$, the terms in
square brackets in (\ref{nlieu}) may be neglected compared to $\e_0 (\la)$
when $T$ becomes small. Thus, $u$ and $u_0$ have the same zero temperature
limit $\e$.

Intuitively the zero-temperature limit of (\ref{nlieu}) and (\ref{nlieu0}) is
rather clear: the integrals in (\ref{nlieu}) and (\ref{nlieu0}) vanish on those
parts of the contour on which $\Re \e > 0$ and are nonzero on their complement.
From the behaviour of the driving term $\e_0 (\la)$ one may guess that this
complement is an interval $[-Q, Q]$ on the real axis. This can be stated more
precisely. For $0 < h < h_c = 4J(1 + \D)$ we define the dressed energy $\e$ as
the solution of the linear integral equation
\begin{equation} \label{lieeps}
     \e (\la) = \e_0 (\la) + \int_{-Q}^Q \frac{\rd \m}{2\p \i}
                \: K(\la - \m) \e (\m) \epc
\end{equation}
where $Q > 0$ as a function of $h$ is uniquely determined by the condition
$\e (Q) = 0$ (for a proof of the latter statement see \cite{DGK13bpp}).
$\e$ is real and even on $\mathbb R$ and monotonously increasing on
${\mathbb R}_+$. One can further prove \cite{DGK13bpp} that, for all
$\g \in (0,\p/2)$, $\Re \e > h/4 > 0$ on ${\mathbb R} - \i \g + \i 0$
which is the lower part of the integration contour in (\ref{nlieu}),
(\ref{nlieu0}).

Assuming uniqueness of the solutions of (\ref{nlieu}) and (\ref{nlieu0})
the above properties of $\e$ allow us to conclude that $\lim_{T \rightarrow 0}
u(\la) = \lim_{T \rightarrow 0} u_0 (\la) = \e (\la)$. This follows from the
following `generalized Sommerfeld lemma' which also allows us to obtain
the first and second order temperature corrections below.
\begin{lemma} \label{lem:sommerfeld}
Let $T > 0$. Let $u, f$ be holomorphic in an open set containing a contour
${\cal C}_u$, and f bounded on ${\cal C}_u$. Let $\ln \bigl(1 +
\re^{- u(\la)/T} \bigr)$ be continuous on ${\cal C}_u$ (this means
we consider ${\cal C}_u$ as a contour on the Riemann surface of
$\ln \bigl(1 + \re^{- u(\la)/T} \bigr)$ realized as a multi-sheeted
cover of the complex plane with cuts along the curves where the argument
of the logarithm is negative, say). Let $v = \Re u$, $w = \Im u$. Assume that
$v$ has exactly two zeros $Q_\pm$ on ${\cal C}_u$ dividing ${\cal C}_u$ into
two parts, ${\cal C}_u^-$ on which $v$ is negative and ${\cal C}_u^+$ on
which $v$ is positive. Let ${\cal C}_u$ be oriented in such a way that
$Q_-$ comes before $Q_+$ on ${\cal C}_u^-$. If there is a $p \in {\mathbb Z}$
such that $w(Q_\pm) = 2\p p T$ then there is a choice of branches of
$\ln \bigl(1 + \re^{- u(\la)/T} \bigr)$ for which
\begin{multline}
     T \int_{{\cal C}_u} \rd \la \: f(\la) \ln \Bigl(1 + \re^{- \frac{u(\la)}T} \Bigr)
        = - \int_{Q_-}^{Q_+} \rd \la \: f(\la) \bigl(u(\la) - 2 \p \i p T\bigr) \\
	  + \frac{T^2 \p^2}{6}
	    \biggl( \frac{f(Q_+)}{u'(Q_+)} - \frac{f(Q_-)}{u'(Q_-)} \biggr)
          + \CO (T^4) \epp
\end{multline}
\end{lemma}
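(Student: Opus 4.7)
The plan is to carry out a contour-adapted version of the classical Sommerfeld expansion, in which the two Fermi points are replaced by the zeros $Q_\pm$ of $v=\Re u$. First, I would split $\mathcal{C}_u=\mathcal{C}_u^-\cup\mathcal{C}_u^+$ according to the sign of $v$, and observe that the assumption $w(Q_\pm)=2\pi p T$ with $p\in\mathbb{Z}$ makes the quantity $1+e^{-u(\lambda)/T}$ return to the value $2$ at both endpoints $Q_\pm$, since $e^{-u(Q_\pm)/T}=e^{-2\pi i p}=1$. This allows me to choose a single-valued branch of $\ln(1+e^{-u/T})$ along the whole of $\mathcal{C}_u^+$ (on which $e^{-u/T}$ is exponentially small and $\ln(1+e^{-u/T})$ itself is small) and to choose a branch on $\mathcal{C}_u^-$ so that the identity
\begin{equation*}
\ln\bigl(1+e^{-u/T}\bigr)=-u/T+2\pi i p+\ln\bigl(1+e^{u/T}\bigr)
\end{equation*}
holds all the way along $\mathcal{C}_u^-$; the quantization condition is precisely what is needed for the integer in this identity to be constant along the contour.

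Substituting this identity in the part of the integral over $\mathcal{C}_u^-$ produces the principal term: by the holomorphy of $f$ and $u$ in a neighbourhood of $\mathcal{C}_u^-$, the path from $Q_-$ to $Q_+$ can be deformed freely, so
\begin{equation*}
T\int_{\mathcal{C}_u^-}f(\lambda)\bigl[-u(\lambda)/T+2\pi i p\bigr]d\lambda=-\int_{Q_-}^{Q_+}f(\lambda)\bigl(u(\lambda)-2\pi i p T\bigr)d\lambda.
\end{equation*}
What remains are the two residual integrals $R_+=T\int_{\mathcal{C}_u^+}f\ln(1+e^{-u/T})d\lambda$ and $R_-=T\int_{\mathcal{C}_u^-}f\ln(1+e^{u/T})d\lambda$. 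On each of them, the boundedness of $f$ and the strict positivity of $|\!\Re u|$ away from $Q_\pm$ imply that the contributions from any compact subcontour bounded away from $\{Q_-,Q_+\}$ are $O(e^{-c/T})$ for some $c>0$, hence absorbed into the $O(T^4)$ error.

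The next step is the local analysis at $Q_\pm$. Since $v$ changes sign transversally there, $u'(Q_\pm)\ne 0$, so the map $\eta=(u(\lambda)-2\pi i p T)/T$ is a local biholomorphism sending a neighbourhood of $Q_\pm$ onto a neighbourhood of $0$; the image contour meets the imaginary axis transversally, and using the absence of singularities of $\ln(1+e^{\mp\eta})$ in the strip $|\Im\eta|<\pi$ it can be straightened to the real $\eta$-axis by Cauchy's theorem. Summing the two contributions at $Q_+$ (coming from $\mathcal{C}_u^+$ just after $Q_+$ with $e^{-u/T}=e^{-\eta}$ and from $\mathcal{C}_u^-$ just before $Q_+$ with $e^{u/T}=e^{-\tilde\eta}$, $\tilde\eta=-\eta$) gives
\begin{equation*}
T^2\int_0^{\infty}\ln(1+e^{-\eta})\bigl[(f/u')(Q_++T\eta/u'(Q_+))+(f/u')(Q_+-T\eta/u'(Q_+))\bigr]d\eta+O(T^\infty).
\end{equation*}
The parenthesis is even in $T\eta$, so its expansion contains only even powers; the leading term $2(f/u')(Q_+)$ combined with $\int_0^\infty\ln(1+e^{-\eta})d\eta=\pi^2/12$ yields $T^2\pi^2(f/u')(Q_+)/6$. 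The orientation convention makes the analogous contribution at $Q_-$ enter with a minus sign, giving exactly the stated $T^2$ correction. The parity in $T\eta$ kills the $O(T^3)$ term, so the next non-trivial correction is $O(T^4)$.

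The main obstacle is the branch-of-logarithm bookkeeping carried out in the first two paragraphs: one must verify that the integer $p$ appearing in the identity on $\mathcal{C}_u^-$ is the same as the one occurring in the quantization $w(Q_\pm)=2\pi p T$, and that no additional $2\pi i$ jumps occur as $\lambda$ traverses $\mathcal{C}_u^-$. This is precisely the step where the hypothesis that $\ln(1+e^{-u/T})$ is continuous on $\mathcal{C}_u$, together with the two-zero structure of $v$, is used; the remaining parts of the argument are essentially a contour deformation plus the classical Sommerfeld integral.
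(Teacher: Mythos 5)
Your proposal is correct and follows essentially the same route as the paper: switch the logarithm on ${\cal C}_u^-$ via the quantization condition $u(Q_\pm)=2\p\i p T$ to extract the leading term $-\int_{Q_-}^{Q_+} f\,(u-2\p\i p T)$, then localize the residual integrals at $Q_\pm$, straighten to a contour on which $\uq=u-2\p\i pT$ is real (your $\eta$-variable is the paper's parameterization $x=v(\la)$ on the constant-$\Im u$ arcs $J_\pm^\de$), and use the evenness of $\ln(1+\re^{-|x|})$ to obtain the $T^2\p^2/6$ term with an $\CO(T^4)$ error. The branch bookkeeping you flag as the main obstacle is handled in the paper by the auxiliary function $g(\la)=u/T+\ln(1+\re^{-u/T})-\ln(1+\re^{u/T})$, which is continuous with $\re^{g}=1$ and hence constant on ${\cal C}_u^-$, together with a choice of branches pinning its value to $2\p\i p$; this closes exactly the gap you identify.
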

\begin{proof}
Let $F$ be an antiderivative of $f$. Then 
\begin{multline}
     T \int_{{\cal C}_u^-} \rd \la \: f(\la) \ln \Bigl(1 + \re^{- \frac{u(\la)}T} \Bigr) \\
        = \int_{{\cal C}_u^-} \rd \la \:
	         \Bigl\{ T f(\la) \ln \Bigl(1 + \re^{\frac{u(\la)}T} \Bigr)
                 - f(\la) u(\la) + T \6_\la F(\la) g(\la) \Bigr\} \epc
\end{multline}
where
\begin{equation} \label{defg}
     g(\la) = \frac{u(\la)}T + \ln \Bigl(1 + \re^{- \frac{u(\la)}T} \Bigr)
                       - \ln \Bigl(1 + \re^{\frac{u(\la)}T} \Bigr) \epp
\end{equation}
This function is continuous on the contour ${\cal C}_u^-$ no matter how
we fix the branches of the logarithm, since the first two terms on the
right hand side are continuous by hypothesis and since $\Re \bigl(1 +
\re^{u(\la)/T} \bigr) \ge 0$ on ${\cal C}_u^-$. On the other hand
$\exp \bigl\{ g(\la) \bigr\} = 1$. $\then \exists n \in {\mathbb Z}$ such that
$g(\la) = 2 \p \i n$. Thus, no matter how we chose the branches of the
logarithms, there is always an $n \in {\mathbb Z}$ such that
\begin{multline} \label{switchlogpm}
     T \int_{{\cal C}_u^-} \rd \la \: f(\la) \ln \Bigl(1 + \re^{- \frac{u(\la)}T} \Bigr) \\
        = T \int_{{\cal C}_u^-} \rd \la \: f(\la) \ln \Bigl(1 + \re^{\frac{u(\la)}T} \Bigr)
          - \int_{{\cal C}_u^-} \rd \la \: f(\la) \bigl(u(\la) - 2 \p \i n T\bigr) \epp
\end{multline}

Our goal is to estimate the integral over ${\cal C}_u$ for small positive $T$.
If $\la \in {\cal C}_u^- \setminus \{Q_-, Q_+\}$ then $v(\la) < 0$ and
$\re^{u(\la)/T} = {\cal O} (T^\infty)$. Moreover, if we fix the branch of
$\ln \bigl(1 +\re^{u(Q_-)/T} \bigr)$ such that $\bigl| \arg \bigl(1 +
\re^{u(Q_-)/T} \bigr) \bigr| < \p/2\ \then \bigl| \arg \bigl(1 +
\re^{u(\la)/T} \bigr) \bigr| < \p/2$ for all $\la \in {\cal C}_u^-$,
and the first term on the right hand side of (\ref{switchlogpm}) has
no ${\cal O} (T)$ contribution. On any part of ${\cal C}_u$ that is
disconnected with $Q_-$ a similar argument applies. Since $v(\la) > 0$
on these parts, we may assume that $\bigl| \arg \bigl(1 +
\re^{- u(\la)/T} \bigr) \bigr| < \p/2$ and that there is no ${\cal O} (T)$
contribution. For the part of ${\cal C}_u^+$ that is connected with
$Q_-$ we chose the branch in such a way that $\bigl| \arg \bigl(1 +
\re^{- u(Q_-)/T} \bigr) \bigr| < \p/2$. Then there is again no 
${\cal O} (T)$ contribution. Now if
\begin{equation}
     u(Q_+) = u(Q_-) = 2 \p \i p T \epc
\end{equation}
for some $p \in {\mathbb Z}$ then (\ref{defg}) with $\la = Q_-$ implies
that $g(Q_-) = 2 \p \i n = 2 \pi \i p$. The logarithms cancel each other
because of our choice of branches. Using once more (\ref{defg}), this time
at $Q_+$, we conclude that
\begin{equation}
     \ln \Bigl( 1 + \re^{- \frac{u(Q_+)}T} \Bigr)
        = \ln \Bigl( 1 + \re^{\frac{u(Q_+)}T} \Bigr) \epp
\end{equation}
Thus, at $\la = Q_+$ we can continue $\ln \bigl(1 + \re^{u(\la)/T} \bigr)$
on ${\cal C}_u^-$ continuously into $\ln \bigl(1 + \re^{- u(\la)/T} \bigr)$
on ${\cal C}_u^+$, and $\bigl| \arg \bigl(1 + \re^{- u(Q_+)/T} \bigr) \bigr|
< \p/2$ which means that there is no ${\cal O} (T)$ contribution on
the part of ${\cal C}_u^+$ connected with $Q_+$ either.

It follows from (\ref{switchlogpm}) that
\begin{multline}
     \D I := T \int_{{\cal C}_u} \rd \la \:
                  f(\la) \ln \Bigl(1 + \re^{- \frac{u(\la)}T} \Bigr)
	    + \int_{Q_-}^{Q_+} \rd \la \: f(\la) \bigl(u(\la) - 2 \p \i p T\bigr) \\
          = T \int_{{\cal C}_u} \rd \la \:
	             f(\la) \ln \Bigl(1 + \re^{- \frac{u(\la) \sign (v(\la))}T} \Bigr)
		     \epc
\end{multline}
where the logarithm on the right hand side is continuous at $Q_\pm$ and
where the real part of the exponent on the right hand side is negative everywhere
except at $Q_\pm$. This means that the leading contribution to the integral for
$T \rightarrow 0$ comes from the (infinitesimally small) vicinities of these two
points. In order to quantify the leading contribution we fix $\de > 0$ small enough.
Since $u$ and $f$ are holomorphic we can deform the contour locally in a small
vicinity of $Q_\pm$ into contours $J_\pm^\de$ such that $w(\la) = 2 \p p T$ for
$\la \in J_\pm^\de$ and $v (\la) = \pm \de$ at the boundaries of $J_\pm^\de$.
Note that (for $\de$ small enough) $v$ is monotonic on $J_\pm^\de$, since it
has simple zeros at $Q_\pm$. It follows that
\begin{equation}
     \D I = T \int_{J_-^\de \cup J_+^\de} \rd \la \:
                    f(\la) \ln \Bigl(1 + \re^{- \frac{|v(\la)|}T} \Bigr)
            + \CO (T^\infty) \epp
\end{equation}
We parameterize $J_\pm^\de$ by $x = v(\la) \Leftrightarrow \la = v^{-1} (x)$.
Then
\begin{multline} \label{correctminus}
     T \int_{J_-^\de} \rd \la \: f(\la) \ln \Bigl(1 + \re^{- \frac{|v(\la)|}T} \Bigr)
        = T \int_\de^{-\de} \rd x \: \frac{f(v^{-1} (x))}{v'(v^{-1} (x))}
	                    \ln \Bigl(1 + \re^{- \frac{|x|}T} \Bigr) \\
        = - \frac{T^2 \p^2}{6} \frac{f(Q_-)}{u'(Q_-)} + \CO (T^4) \epp
\end{multline}
Here we have used that $v^{-1} (0) = Q_-$ and that $\ln \bigl(1 + \re^{- |x|}\bigr)$
is even. At $Q_+$ we can perform a similar calculation with the only difference
that $v(\la)$ is ascending in the direction of the contour, whence the sign will
be positive.
\end{proof}
This lemma can be directly applied to the function $\e(\la)$ defined by
(\ref{lieeps}) which meets the requirements of the lemma with ${\cal C}_\e
= {\cal C}_0 - \i \g/2$, $Q_\pm = \pm Q$ and $p = 0$. Thus,
\begin{equation} \label{epsinnlie}
     \e (\la) - \e_0 (\la) - T \int_{{\cal C}_0 - \i \g/2} \frac{\rd \m}{2\p \i} \:
                                  K(\la - \m) \ln \Bigl(1 + \re^{- \frac{\e(\m)}T} \Bigr)
				  = {\cal O} (T^2) \epp
\end{equation}
Comparing with (\ref{nlieu}), (\ref{nlieu0}) we see that asymptotically
for small $T$ these equations are indeed satisfied by $\e$. Equivalently,
$\fa_n (\la|\a) \sim \fa_0 (\la) \sim \re^{- \e(\la - \i \g/2)/T}$.
\begin{figure}[t]
\begin{center}
\includegraphics[width=.9\textwidth]{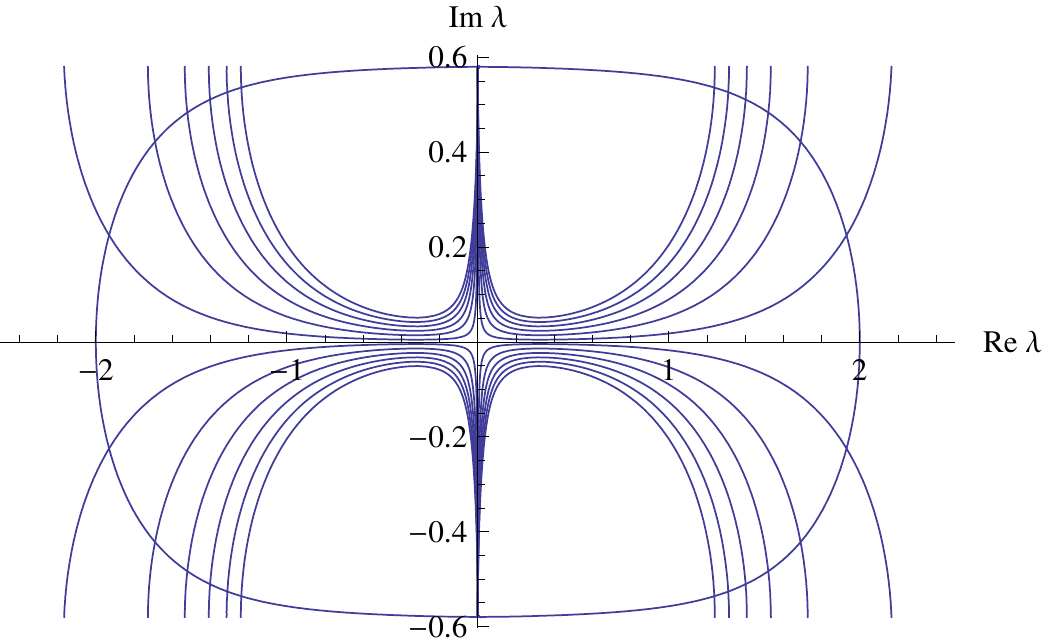}
\caption{\label{fig:br_anull} Solutions of the equation $\e (\la) =
- (2n -1 ) \p \i T$ in the complex plane depicted by the intersections
of the curve $\Re \e (\la) = 0$ (closed curve encircling the origin)
with the curves $\Im \e(\la) = - (2n - 1) \p T$ (open curves).
$J = 1$, $T = 0.01$, $\D = 0.4$, $h = 0.051$ and $n = -5, -4, \dots, 6$
in this example. The solutions form an infinite  sequence of points
on the line $\Re \e (\la) = 0$ with two limit points at $\pm \i \gamma/2$
(here $\gamma/2 = 0.580$). Shown are only the points farthest away from
these limit points.
}
\end{center}
\end{figure}

The function $u_0$ characterizing the dominant eigenvalue is approximated
by $\e$ with an error of second order in $T$. Hence, up to terms of the
order of $T^2$, the zeros of $1 + \fa_0$ are determined by
\begin{equation} \label{lowtfree}
     \e (\la) = - (2n - 1) \p \i T \epc \qd n \in {\mathbb Z} \epp
\end{equation}
Recall that those zeros that are located below the real axis are
the Bethe roots (shifted by $- \i \g/2$) of the dominant state of the
quantum transfer matrix in the Trotter limit (see e.g.\ \cite{GKS04a}).

The solutions of equation (\ref{lowtfree}) for a specific choice
of parameters are shown in Figure~\ref{fig:br_anull}. On the closed
curve encircling the origin and intersecting the real axis at
$\pm 2$ the function $\e$ is purely imaginary. This closed
curve $\G$ is intersected by the curves $\G_n$ of constant
imaginary part $\Im (\e (\la)) = - (2n - 1) \p T$. The intersection
points are the solutions of (\ref{lowtfree}). The function $\e$
has two zeros at $\pm Q = \pm 2$ and two poles at $\pm \i \g/2
= \pm 0.580 \i$ on $\G$. On each of the two arcs, running from
$- \i \g/2$ through $Q$ to $\i \g/2$ and back from $\i \g/2$ through
$- Q$ to $- \i \g/2$, the values of $\e$ increase from
$- \i \infty$ to $+ \i \infty$. Thus, there are infinitely many
solutions of (\ref{lowtfree}) on $\G$ clustering at $\pm \i \g/2$.
The figure shows only the solutions closest to the real axis.

From the picture we can also understand the qualitative behaviour
of the function $\fa_0$: $|\fa_0| = 1$ on $\G$, $|\fa_0 (\la)| > 1$ for
$\la$ inside $\G$ and $|\fa_0 (\la)| < 1$ for $\la$ outside $\G$.
Moreover, $\fa_0 $ is real negative on $\G_n$ with $\fa_0 (\la)
< - 1$ for $\la$ inside $\G$ and $-1 < \fa_0 (\la) < 0$ for $\la$
outside~$\G$. This means that the $\G_n$ are the canonical cuts
needed to construct the Riemann surface of the function $\ln (\fa_0)$.
It further follows that the function $1 + \fa_0$ is real and
negative with range between $- \infty$ and $0$ on those parts of
the contours $\G_n$ that are located inside $\G$. These are
therefore the cuts for the Riemann surface of $\ln(1 + \fa_0)$.

Equation (\ref{lowtfree}) resembles a momentum quantization condition
for free Fermions with momentum replaced by energy $- \e/T$. The
appearance of this terms seems rather natural in view of the
fact that `space and time direction' are interchanged in the
six-vertex model representing the partition function of the XXZ
chain within the quantum transfer matrix formalism.

As against $u_0$ the functions $u$ are generally approximated by $\e$
only in the strict limit $T \rightarrow 0$, when $(2n - 1) \p \i T$
becomes a continuous variable. In this limit the possible
particle and hole positions (shifted downward by $\i \g/2$) 
densely fill the curve $\Re \e (\la) = 0$. In order to obtain
discrete values to the order $T$, one has to take into account the
${\cal O} (T)$ contribution on the right hand side of equation
(\ref{nlieu}).

Assuming that for fixed particle and hole parameters the functions
$u$ admit the low-temperature asymptotic expansion
\begin{equation}
     u(\la) = \e (\la) + T u_1 (\la) + T^2 u_2 (\la) + \CO (T^3)
\end{equation}
and that there are $p \in {\mathbb Z}$ and $Q_\pm = \pm Q + T Q^{(1)}_\pm
+ \CO (T^2)$ such that $u(Q_\pm) = 2 \p \i p T$ we conclude that 
\begin{equation} \label{boundariesot}
     Q_\pm = \pm Q \mp \frac{\uq_1 (\pm Q)}{\e' (Q)} T + \CO (T^2) \epc
\end{equation}
where $\uq_1 (\la) = u_1 (\la) - 2 \p \i p$ by definition. Since
$u$ is close to $\e$ as long as $T$ is small enough we may apply
Lemma~\ref{lem:sommerfeld} to $u$. Using a contour ${\cal C}_u$
which is a deformation of ${\cal C}_0 -\i \g/2$ such that it passes
through $Q_\pm$ and introducing the notation $\uq = u - 2 \p \i p T$
the lemma implies that
\begin{equation} \label{ubart2}
     \uq (\la) = \e_0 (\la) + T r_1 (\la) + T^2 r_2 (\la)
                 + \int_{-Q}^{Q} \frac{\rd \m}{2 \p \i} K(\la - \m) \uq (\m)
		 + \CO (T^3) \epc
\end{equation}
where
\begin{align} \label{defr12}
     & r_1 (\la) = 2 \p \i (\a' - s/2 - p) + \sum_{j=1}^{n_p} \th (\la - x_j^p)
                   - \sum_{j=1}^{n_h} \th (\la - x_j^h) \epc \\[1ex]
     & r_2 (\la) = \frac{\i \p}{4 \e' (Q)}
                   \biggl[ K(\la - Q) \biggl(\frac13 + \frac{\uq_1^2 (Q)}{\p^2}\biggr)
		         + K(\la + Q) \biggl(\frac13 + \frac{\uq_1^2 (-Q)}{\p^2}\biggr)
		         \biggr]
\end{align}
with $x_j^{p, h} = \la_j^{p, h} - \i \g/2$.

Neglecting the ${\cal O} (T^3)$ terms in (\ref{ubart2}) we obtain a linear
integral equation for $\uq$. Due to its linearity we can express its
solution in terms of standard functions known from the description of the
ground state properties of the XXZ chain. We obtain
\begin{subequations}
\label{u1u2exp}
\begin{align} \label{u1exp}
     \uq_1 (\la) & = 2 \p \i \biggl[ (\a' -s/2 - p) Z(\la)
                        + \sum_{j=1}^{n_h} \Ph (\la, x_j^h)
                        - \sum_{j=1}^{n_p} \Ph (\la, x_j^p) \biggr] \epc \\[1ex]
			\label{u2exp}
     u_2 (\la) & = \frac{\i \p}{4 \e' (Q)}
                   \biggl[ R(\la,Q) \biggl(\frac13 + \frac{\uq_1^2 (Q)}{\p^2}\biggr)
		         + R(\la,-Q) \biggl(\frac13 + \frac{\uq_1^2 (-Q)}{\p^2}\biggr)
		         \biggr] \epp
\end{align}
\end{subequations}
The functions appearing here are the dressed charge function $Z$, the
dressed phase $\Ph$ and the resolvent $R$, satisfying the linear integral
equations
\begin{subequations}
\label{zphr}
\begin{align} \label{liz}
     Z (\la) & = 1 + \int_{-Q}^{Q} \frac{\rd \m}{2 \p \i} K(\la - \m) Z (\m) \epc \\
     \Ph (\la, \n) & = - \frac{\th (\la - \n)}{2 \p \i}
                       + \int_{-Q}^{Q} \frac{\rd \m}{2 \p \i} K(\la - \m) \Ph (\m, \n)
		         \epc \label{liph} \\
     R (\la, \n) & = K(\la - \n)
                     + \int_{-Q}^{Q} \frac{\rd \m}{2 \p \i} K(\la - \m) R (\m, \n) \epp
\end{align}
\end{subequations}
For later convenience we also introduce the root density $\r$ as the
solution of
\begin{equation} \label{rho}
     \r (\la) = - \frac{\re (\la + \i \g/2)}{2 \p \i}
                + \int_{-Q}^{Q} \frac{\rd \m}{2 \p \i} K(\la - \m) \r (\m) \epp
\end{equation}

Knowing $u_1$ we know the subsidiary conditions (\ref{subs}) for
$\fa_n (\cdot|\a)$ to linear order in $T$,
\begin{equation} \label{baelint}
     \e(x_j^{p,h}) + T u_1 (x_j^{p,h}) = - (2 n_j^{p, h} - 1) \i \p T \epp
\end{equation}
We insert the explicit expression (\ref{u1exp}) for $\uq_1$ into this
equation and obtain a set of coupled nonlinear algebraic equations
for the particle and hole parameters,
\begin{multline} \label{dressedbaes}
     \frac{\e(x_j^{p,h})}{2 \p \i T} =
                        - n_j^{p,h} + 1/2 - p \\
			- (\a' - s/2 - p) Z(x_j^{p,h})
                        + \sum_{k=1}^{n_p} \Ph (x_j^{p,h}, x_k^p)
                        - \sum_{k=1}^{n_h} \Ph (x_j^{p,h}, x_k^h) \epp
\end{multline}
These equations may be interpreted as a dressed version of the logarithmic
form of the Bethe ansatz equations (with dressed momentum replaced by
dressed energy and $1/L$ replaced by~$T$). The bare two-particle scattering
phases are replaced by the dressed phases and the dressed charge
appears in addition. Equations (\ref{dressedbaes}) have to be solved
numerically for the particle and hole parameters $x_j^p$ and $x_j^h$.

Simplifications occur in two cases. In the XX or free fermion case,
$\g = \p/2$, $\D = 0$, the dressed phases vanish, the dressed charge
equals one, and the dressed energy turns into the bare energy $\e_0$.
Thus,
\begin{equation} \label{freefermi}
     \e_0 (x_j^{p, h})
        = - (2 n_j^{p,h} - 1 - s + \a') \p \i T \epp
\end{equation}
The same set of decoupled equations is obtained from the non-linear
integral equations (\ref{nlieu}) if one sets $\g = \p/2$, meaning
that in the XX case it is valid for any $T$. In this case all solutions
fall into two classes depending on whether $s/2$ is integer or half-odd
integer. Comparing (\ref{freefermi}) and (\ref{lowtfree}) we observe that
the description of the dominant state of the quantum transfer matrix
in terms of the zeros of the auxiliary function $1 + \fa_0$ is very
close to the free fermion paradigm. This provides a useful `almost
free fermion picture' for the understanding of the excited states
of the quantum transfer matrix at low temperatures.

Further simplifications for generic $\g \in (0, \p/2)$ occur if
we restrict ourselves to excitations close to the Fermi surface
consisting of the two points $\pm Q$. Our low-temperature analysis
of the correlation functions is based on the hypothesis that these
excitations contribute predominantly to the large-distance asymptotics.
More precisely, we shall restrict ourselves in the following
to particle and hole parameters which collapse to the Fermi points
$\pm Q$ as $T$ goes to zero,
\begin{equation} \label{abovefermi}
     x_j^{p, h} = \pm Q + {\cal O} (T) \epp
\end{equation}
We denote their numbers by $n_p^\pm$ and $n_h^\pm$, respectively, and
define the particle-hole disbalance at the left Fermi point $-Q$ by
\begin{equation} \label{npmnhmell}
     \ell = n_h^- - n_p^- \epp
\end{equation}
Inserting the lowest order approximation $x_j^{p, h} = \pm Q$ into
(\ref{u1exp}) we obtain the leading low-temperature approximation to
$\uq_1 (\la)$ which we denote $\uq_1^{(\ell)} (\la)$. We shall write
it as
\begin{equation} \label{u1ell}
     \uq_1^{(\ell)} (\la) = 2 \p \i \bigl( w(\la) + \a' - p - s/2 \bigr) \epc
\end{equation}
where
\begin{equation}
     w(\la) = (\a' - p - \ell) \bigl(Z(\la) - 1 \bigr)
	+ \frac{s}{2} \bigl(\Ph (\la, Q) + \Ph (\la, -Q)\bigr) \epp
\end{equation}
The function $\uq_1^{(\ell)}$ determines the points $Q_\pm$ in
(\ref{boundariesot}) to linear order in $T$,
\begin{equation} \label{boundariesotdet}
     Q_\pm = \pm Q \mp \frac{\uq_1^{(\ell)} (\pm Q)}{\e' (Q)} T + \CO (T^2) \epp
\end{equation}

Using (\ref{u1ell}) in (\ref{baelint}), the equations for the particle
and hole parameters decouple. We obtain
\begin{equation} \label{brnearq}
     \frac{\e(x_j^{p,h})}{2 \p \i T} =
        - n_j^{p,h} + \frac{1 + s}2 - \a' - w(x_j^{p,h}) \epp
\end{equation}
Comparison with (\ref{freefermi}) shows that the function $w$ determines
the deviation from free-Fermion like behaviour to linear order in $T$.
\begin{figure}[t]
\begin{tabular}{@{}cc@{}}
\includegraphics[width=.48\textwidth]{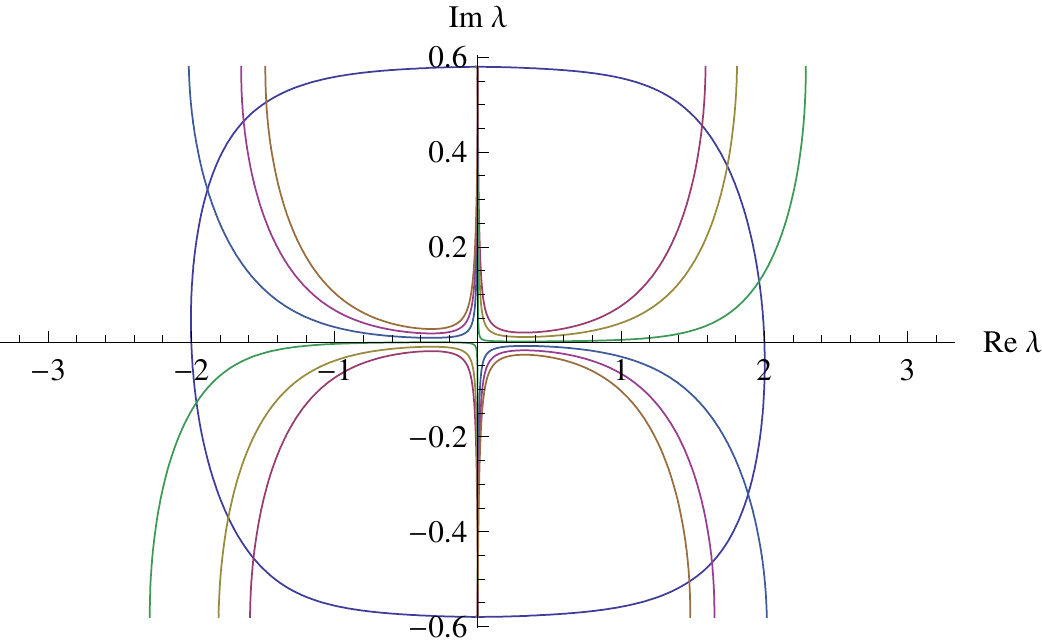} &
\includegraphics[width=.48\textwidth]{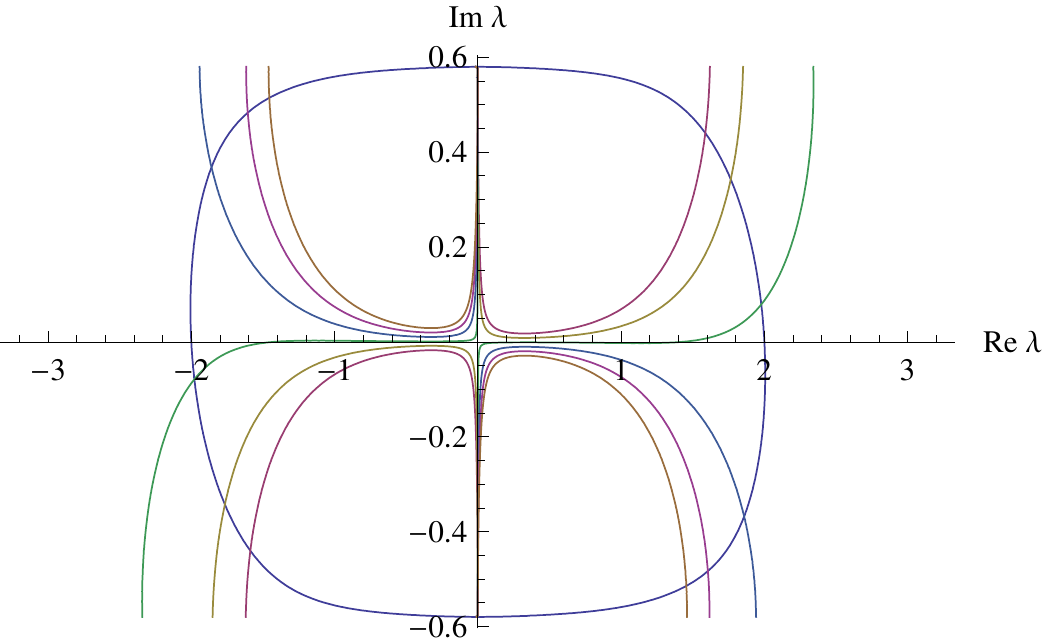} \\
$p + \ell = 2$, $s = 0$ &
$p + \ell = 3$, $s = 0$ \\[3ex]
\includegraphics[width=.48\textwidth]{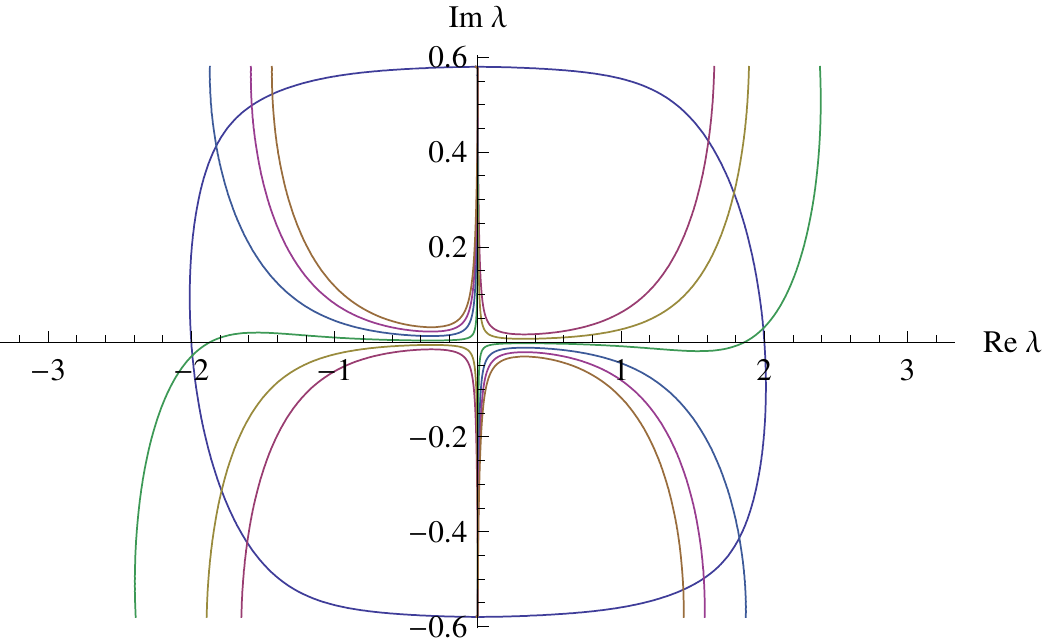} &
\includegraphics[width=.48\textwidth]{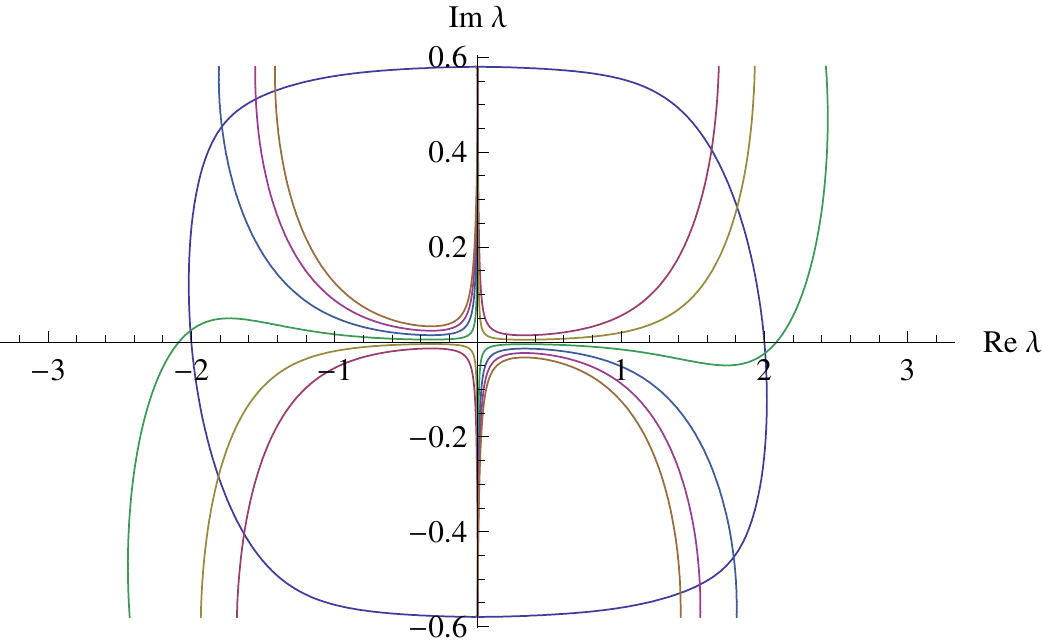} \\
$p + \ell = 4$, $s = 0$ &
$p + \ell = 5$, $s = 0$
\end{tabular}
\caption{\label{fig:br_an} Graphical solutions of equation (\ref{brnearq})
in the complex plane for various values of $p + \ell$ and $s = 0$. Parameter
values are the same as in Figure~\ref{fig:br_anull}: $J = 1$, $T = 0.01$,
$\D = 0.4$, $h = 0.051$. Shown are the solutions closest to the real axis.
}%
\end{figure}%
As we further see from (\ref{brnearq}), to this order, the
`excitations above the Fermi surface' (\ref{abovefermi}) fall
into classes parameterized by two sets of integers, $p + \ell$
and $s$. In our case $s = 0$ or $s = 1$ for the longitudinal
and transversal two-point functions. For fixed $s$ all excitations
above the Fermi surface are obtained by letting $p + \ell$ run
through all integers and by calculating the associated
particle-hole patterns satisfying (\ref{npnhs}) and (\ref{npmnhmell})
from (\ref{brnearq}). We shall see below that the amplitudes that
remain after summing over all particle-hole patterns for fixed
$\ell$ depend on $p$ and $\ell$ indeed only through their sum
$p + \ell$.
 
So far we did not discuss the meaning of $p$. It entered our
calculation when we used Lemma~\ref{lem:sommerfeld} to determine
the low-temperature approximation to $\ln \fa_n (\cdot|\a)$.
As we see from the lemma $- 2 \p p$ is the phase of $\fa_n (Q_\pm|\a)$.
Thus, it is tightly connected with the choice of the reference
contour. In principle, we would like to choose the real
axis (corresponding to the upper part of the canonical
contour) as a reference contour. Then $p$ would be determined
by the condition that both, $Q_-$ and $Q_+$ are located
at the zeros of $1 - \fa_n (\cdot|\a)$ that are closest to the
real axis (\textit{i.e.} to $\pm Q$). We shall see below that
such a choice of reference contour is possible for $s = 0$.
In the general case the reference contour can pass only
either through $- Q$ or through $Q$. Below we shall choose
$- Q$. Then $Q_\pm$ are determined by (\ref{boundariesotdet}).

In order to get an intuitive understanding of $p$ let us
consider an example. In Figure~\ref{fig:br_an} we have depicted
the solutions of (\ref{brnearq}) by plotting the real
and imaginary parts of the difference between left and right
hand side of the equation. For the same set of parameters as
in Figure~\ref{fig:br_anull} we have set $s = 0$ and have
increased the value of $p + \ell$ in unit steps from $2$ to $5$.
For $p + \ell = 0$ we would obtain again Figure~\ref{fig:br_anull}.
Possible positions of particle and hole parameters $x_j^p$, $x_j^h$
are the intersection points of the open contours with the closed
contours that are located above and below the real line, respectively.
In this case the real axis (corresponding to the upper part of
the contour ${\cal C}_0 - \i \g/2$ and run through in negative
direction) is a possible reference contour, since left and right
intersection points of the real axis with the curve $\Re u = 0$
are located on the same sheet of the Riemann surface of $\ln
\fa_n (\cdot |\a)$, whose branch cuts are the open contours in
the figure. As soon as $p + \ell$ is as large as $5$ in our
example, our reference contour intersects one of the branch
cuts of $\ln \fa_n (\cdot|\a)$ before it intersects the closed
contour corresponding to the real part of equation (\ref{brnearq}).
Then the phase of $\fa_n (Q_\pm| \a)$ in units of $- 2 \p$, which
is the value of $p$, is incremented by one to $p = 1$. Hence,
in the first three pictures we have $p = 0$ and $\ell = 2, 3, 4$,
but in the last one $p = 1$ and $\ell = 4$.
\begin{figure}[t]
\begin{tabular}{@{}cc@{}}
\includegraphics[width=.48\textwidth]{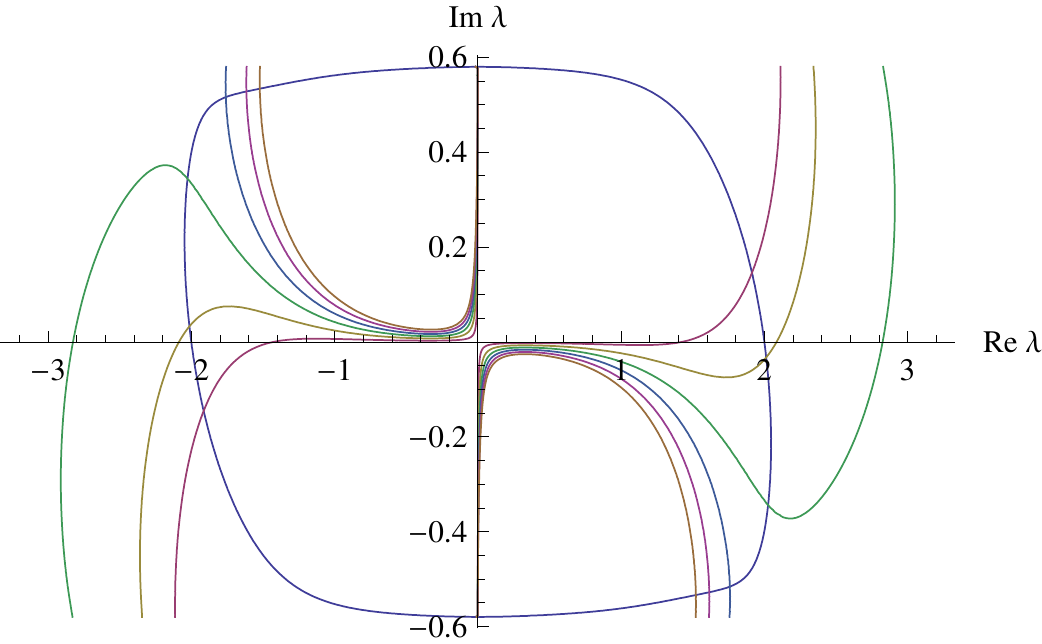} &
\includegraphics[width=.48\textwidth]{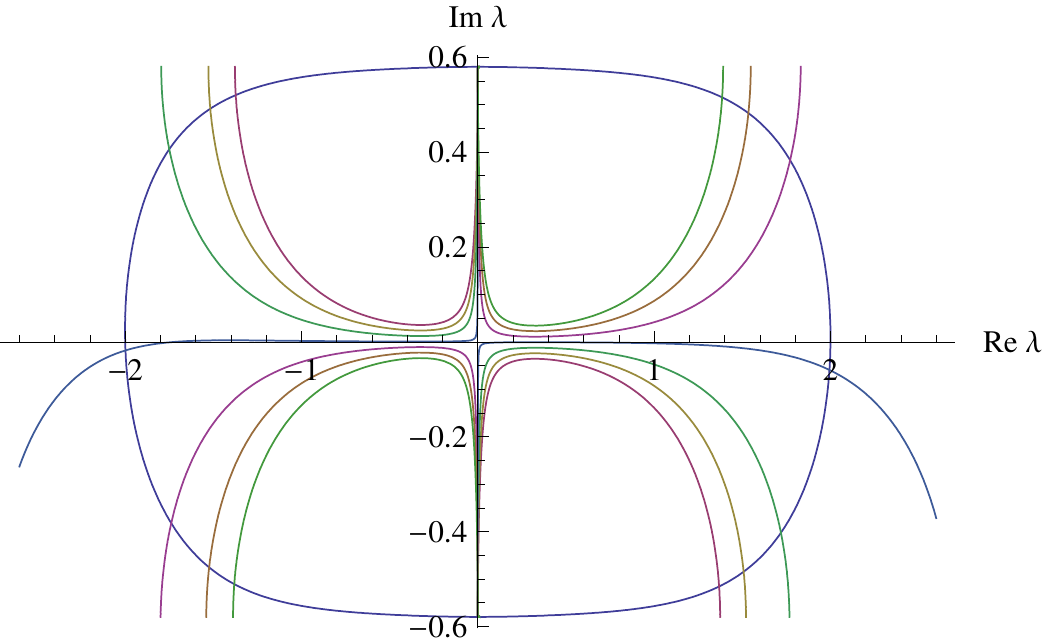} \\[.5ex]
$p + \ell = 15$, $s = 0$ & $p + \ell = 0$, $s = 1$
\end{tabular}
\caption{\label{fig:br_an2} Graphical solutions of equation (\ref{brnearq})
in the complex plane for various values of $p + \ell$ and $s$. Parameter
values in the right panel $J = 1$, $T = 0.0125$, $\D = 0.4$, $h = 0.051$
and $\a = 0.3$. In the left panel the temperature is reduced to $T = 0.005$
and $\a = 0$.
}%
\end{figure}%
The left panel in Figure~\ref{fig:br_an2} shows an example where $p = 2$.

For non-zero $s$ the pictures become more asymmetric. An example is
shown in the right panel of Figure~\ref{fig:br_an2}, where $p + \ell = 0$
and $s = 1$. A contour along the real line, entering the picture
from the right and leaving to the left, now crosses one of the lines
of constant imaginary part. Let $Q_\pm$ be the zeros of $1 - \fa_n (\cdot| \a)$
closest to the real axis. Then it follows, due to the crossing of the
line, that $\ln \fa_n (Q_+|\a) - \ln \fa_n (Q_-|\a) = - 2 \p \i$. Hence,
the contour does not fit the requirements of Lemma~\ref{lem:sommerfeld}.
A contour ${\cal C}_{0,1}$ as close as possible to the real axis and
meeting the requirements of the lemma must pass one of the two intersection
points with negative imaginary part closest to the real axis from 
above and the other one from below.

From equation (\ref{brnearq}) we obtain the deviation of the
particle and hole parameters $x_j^{p, h}$ from the Fermi points
to linear order in $T$,
\begin{equation} \label{phlint}
     x_j^{p, h} = \pm Q
        - \Bigl(n_j^{p,h} - \frac{1 + s}2 + \a' + w(\pm Q) \Bigr)
	  \frac{2 \p \i T}{\e' (Q)} \epp
\end{equation}
In this approximation the particle and hole parameters are located
on lines perpendicular to the real axis and intersecting it at $\pm Q$.

We now distinguish particle and hole parameters pertaining the the
right and left Fermi edges $\pm Q$, writing $x_j^{p \pm}$ and $x_j^{h \pm}$,
respectively. The corresponding quantum numbers are denoted $n_j^{p \pm}$
and $n_j^{h \pm}$. We reparameterize these integers by positive
integers $p_j^\pm$, $h_j^\pm$, setting
\begin{equation} \label{phqnumber}
     n_j^{p \pm} = - p_j^\pm - p + 1 \epc \qd n_j^{h \pm} = h_j^\pm - p \epp
\end{equation}
In order to obtain a consistent interpretation of the positive numbers
$p_j^\pm$ and $h_j^\pm$ as `particle and hole quantum numbers' we have
to fix $p$ in such a way that e.g.\ $\Im x_j^{p-} > 0 > \Im x_j^{h-}$
for $p_j^- = h_j^- = 1$. Using (\ref{u1ell}) and (\ref{phlint}), this
is equivalent to demanding that
\begin{equation} \label{choosep}
     \bigl| \Im \uq_1^{(\ell)} (-Q) \bigr| < \p \epp
\end{equation}
Inserting now (\ref{boundariesotdet}) and (\ref{phqnumber}) into (\ref{phlint})
the particle and hole parameters become parameterized as
\begin{subequations}
\label{hppms}
\begin{align}
     & x_j^{p \pm} = Q_\pm + \biggl(p_j^\pm - \2 \biggr) \frac{2 \p \i T}{\e' (Q)}
                   + {\cal O}(T^2) \epc \qd \\[1ex]
     & x_j^{h \pm} = Q_\pm - \biggl(h_j^\pm - \2 \biggr) \frac{2 \p \i T}{\e' (Q)}
                   + {\cal O}(T^2)
\end{align}
\end{subequations}
where $p_j^\pm, h_j^\pm \in {\mathbb Z}_+$. The inequality (\ref{choosep})
guarantees that $Q_-$ is as close to the real axis as possible. If $s = 0$
we have $\uq_1^{(\ell)} (Q) = \uq_1^{(\ell)} (-Q)$ and (\ref{choosep})
automatically holds at the right Fermi edge as well. This is no longer true
for $s \ne 0$. In that case, using the well known identities
\cite{KoSl98}
\begin{equation}
     {\cal Z} = 1 + \Ph (Q, Q) - \Ph (Q, -Q) \epc \qd
     \frac{1}{\cal Z} = 1 + \Ph (Q, Q) + \Ph (Q, -Q) \epc
\end{equation}
we can only conclude that
\begin{equation}
     0 < \Im \uq_1^{(\ell)} (Q) - \Im \uq_1^{(\ell)} (-Q)
       = 2 s \p \bigl(1/{\cal Z} - 1 \bigr) < s \p \epp
\end{equation}
The inequalities hold, since $1/\sqrt 2 < {\cal Z} < 1$ as long as
$\g \in (0, \p/2)$ (see e.g.\ \cite{DGK13bpp}). Thus, for $s = 1$
it may happen that $\p < \Im \uq_1^{(\ell)} (Q) < 2 \p$. For
$p_j^+ = 1$ the latter implies that $\Im x_j^{p +} =
\bigl(\p - \Im \uq_1^{(\ell)} (Q) \bigr) T /\e'(Q) < 0$,
\textit{i.e.}\ according to our definition the lowest particle
excitation at the right Fermi edge may correspond to an $x_j^{p+}$
below the real axis. Of course, this is in accordance with the
example considered above, illustrated in the right panel of
Figure~\ref{fig:br_an2}, and with our general picture which
includes that we have to choose the contour ${\cal C}_{0, 1}$
carefully.

Before closing this section let us stress again that $p$ is
an auxiliary parameter associated with the choice of the
integration contour, in which we have a certain freedom and
which does not influence the final result of our calculation.
The true parameter entering the classification of the elementary
excitations is the sum $p + \ell$, since only this parameter enters
the definition of the function $w$ which determines the particle
and hole parameters through (\ref{brnearq}).

For the low-temperature analysis of the eigenvalue ratios in the next
subsection we shall need $\uq_1$ up to the first order in $T$.
We insert (\ref{u1ell}), (\ref{hppms}) into (\ref{u1exp}) and use
$\6_\n \Ph (\la, \n) = R(\la, \n)/2 \p \i$ to obtain
\begin{multline} \label{u1expexp}
     \uq_1 (\la) = \uq_1^{(\ell)} (\la) + \frac{2 \p \i T}{\e'(Q)}
                   \biggl\{ R(\la,Q) \biggl[ \frac{(\ell - s) \uq_1^{(\ell)} (Q)}{2 \p \i}
		      - \sum_{j=1}^{n_h^+} \Bigl(h_j^+ - \frac12\Bigr)
		      - \sum_{j=1}^{n_p^+} \Bigl(p_j^+ - \frac12\Bigr) \biggr] \\
                    + R(\la, -Q) \biggl[ \frac{\ell \uq_1^{(\ell)} (-Q)}{2 \p \i}
		      - \sum_{j=1}^{n_h^-} \Bigl(h_j^- - \frac12\Bigr)
		      - \sum_{j=1}^{n_p^-} \Bigl(p_j^- - \frac12\Bigr) \biggr] \biggr\}
		      + \CO (T^2) \epp
\end{multline}
Equations (\ref{u2exp}) and (\ref{u1expexp}) together with the corresponding
linear integral equations then determine $\ln(\fa_n (\la|\a))$ up to the order $T$.
\subsection{Correlation lengths and universal part of the amplitudes}
Using the results of the previous subsection it is not difficult to
calculate the leading order low-temperature contribution to the
eigenvalue ratios for the transversal correlation functions ($s = 1$). Setting
$\a'' = \a - p - \ell$ we obtain
\begin{multline}
     \r_n (0|\a) =
        q^\a \exp \biggl\{ \i \p - 2 \i \a'' k_F \\
	- \frac{2 \p T}{v_0} \biggl[ \a''^2 {\cal Z}^2
	+ \frac{1}{4 {\cal Z}^2} - \ell^2 + \ell -1
	+ \sum_{j=1}^{n_h} h_j + \sum_{j=1}^{n_p}  (p_j - 1) \biggr]
	  \biggr\} + \CO(T^2) \epp
\end{multline}
Here we have introduced the Fermi momentum $k_F$ and the sound
velocity $v_0$ as
\begin{equation} \label{deffermimom}
     k_F = 2 \p \int_0^Q \rd \la \: \r(\la) \epc \qd
     v_0 = \frac{\e'(Q)}{2 \p \r(Q)} \epp
\end{equation}

Starting from the expression (\ref{amps}) for the amplitudes and using
the insight from the previous subsection we can also calculate the
leading low-temperature asymptotics of the universal part of the amplitudes.
After slightly tedious calculations we obtain
\begin{equation}
     A_n^{(0)} (\a) = - \i q^\a \re^{E(-2Q)} A_- (\a) A_+ (\a) \epc
\end{equation}
where
\begin{multline} \label{lowtamps}
     A_\pm (\a) =
        \exp \biggl\{ C_\pm [w] + \4 \th (2 Q) \Bigl[ 2 \a''^2 {\cal Z}^2
                       + \frac1{2 {\cal Z}^2} + 1
		       - \a'' \Bigl( {\cal Z} - \frac1{\cal Z} \Bigr)\Bigr]\biggr\}
		       \\[1ex] \times
	\biggl( \frac{2 \p T \re^{- E (2Q)}}
	             {\e' (Q) \sh(\h)} \biggr)^{\a''^2 {\cal Z}^2 + \frac1{4 {\cal Z}^2}}
	\frac{G \bigr(3/2 \pm (w(\pm Q) + \a'') \bigr)}
	     {G \bigr(1/2 \mp (w(\pm Q) + \a'') \bigr)} \\[1ex] \times
        G^2 \biggr(1 \mp \frac{\uq_1^{(\ell)} (\pm Q)}{2 \p \i} \biggr)
        \biggl( \frac1{\p}
	        \sin \biggl( \frac{\uq_1^{(\ell)} (\pm Q)}{2 \i} \biggr) \biggr)^{2 n_h^\pm}
        {\cal R}_{n_h^\pm, n_p^\pm}
	   \biggl( \{h_j^\pm\}, \{p_j^\pm\} \Big|
	           \pm \frac{\uq_1^{(\ell)} (\pm Q)}{2 \p \i} \biggr).
\end{multline}
In this expression $G$ is the Barnes $G$-function and $C_\pm$ are the functionals
\begin{multline}
     C_\pm [v] = \4 \int_{- Q}^Q \rd \la \int_{- Q}^Q \rd \m \:
		 \bigl( v' (\la) v (\m) - v (\la) v' (\m) \bigr) \re (\la - \m) \\
                 \pm \bigl(v (\pm Q) \pm 1 + \a''\bigr) \int_{- Q}^Q \rd \la
	         \bigl(v (\la) -  v (\pm Q)\bigr) \re (\la \mp Q) \epp
\end{multline}
The functions ${\cal R}$ comprise the dependence on the particle-hole
quantum numbers,
\begin{multline} \label{rcomb}
     {\cal R}_{n_1, n_2} \bigl( \{h_j\} , \{p_j\} \big| v \bigr)
        = \frac{\prod_{1 \le j < k \le n_1}
	        (h_j - h_k)^2 \prod_{1 \le j < k \le n_2} (p_j - p_k)^2}
	       {\prod_{j=1}^{n_1} \prod_{k=1}^{n_2} (h_j + p_k - 1)^2} \\ \times
          \Biggl[ \prod_{j=1}^{n_1}
	       \frac{\G^2 ( h_j + v)} {\G^2 (h_j)} \Biggr]
          \Biggl[ \prod_{j=1}^{n_2}
	       \frac{\G^2 ( p_j - v)} {\G^2 (p_j)} \Biggr] \epp
\end{multline}

The hardest part of the calculation leading to (\ref{lowtamps}) is the
evaluation of the singular integrals in the exponent on the right
hand side of (\ref{amps}). It can be achieved by means of the following
lemmas.
\begin{lemma} \label{lem:sing1}
Let $u$ and ${\cal C}_u$ be subject to the same assumptions as in
Lemma~\ref{lem:sommerfeld}. Let $\la_+$ be located above ${\cal C}_u$
and $\la_-$ below. Then the Cauchy-type integral
\begin{equation}
     I_u (\la_\pm) = \int_{{\cal C}_u} \rd \la \: \cth(\la - \la_\pm)
                        \ln \Bigl(1 + \re^{- \frac{u(\la)}T} \Bigr)
\end{equation}
admits a low-temperature expansion whose form depends on the distance
of $\la_\pm$ from the zeros $Q_\pm$ of the real part of $u$.

If $\la_\pm$ are uniformly away from $Q_\pm$, then
Lemma~\ref{lem:sommerfeld} applies, and
\begin{equation}
     I_u (\la_\pm) =
        - \int_{Q_-}^{Q_+} \rd \la \: \cth(\la - \la_\pm) \frac{\uq (\la)}{T}
	+ \CO (T) \epp
\end{equation}

For $\de > 0$ define $V_\pm = \bigl\{ z \in {\mathbb C} \big| |u(z)| < \de/2,
\text{$z$ close to $Q_\pm$} \bigr\}$. If $\la_\pm \in V_+$, then there is
a $\de > 0$ and independent of $\la_\pm$ such that
\begin{multline} \label{iuright}
    I_u (\la_\pm) =  - \int_{Q_-}^{Q_+} \rd \la \:
                       \cth(\la - \la_\pm) \frac{\uq (\la) - \uq (\la_\pm)}{T}
                     \ \mp 2 \p \i \ln \biggl\{ \G \biggl(
		        \2 \pm \frac{\uq (\la_\pm)}{2 \p \i T} \biggr) \biggr\} \\
      \pm \p \i \ln (2 \p) + \frac{\uq (\la_\pm)}T \biggl\{
                     \ln \biggl(\frac{\uq (\la_\pm)}{\pm 2 \p \i T} \biggr) - 1
                     - \ln \biggl(\frac{\sh (Q_+ - \la_\pm)}{\sh (Q_- - \la_\pm)} \biggr)
		     \biggr\} + \CO (T) \epp
\end{multline}
If $\la_\pm \in V_-$, then
\begin{multline} \label{iuleft}
    I_u (\la_\pm) =  - \int_{Q_-}^{Q_+} \rd \la \:
                       \cth(\la - \la_\pm) \frac{\uq (\la) - \uq (\la_\pm)}{T}
                     \ \mp 2 \p \i \ln \biggl\{ \G \biggl(
		        \2 \mp \frac{\uq (\la_\pm)}{2 \p \i T} \biggr) \biggr\} \\
      \pm \p \i \ln (2 \p) - \frac{\uq (\la_\pm)}T \biggl\{
                     \ln \biggl(\frac{\uq (\la_\pm)}{\mp 2 \p \i T} \biggr) - 1
                     + \ln \biggl(\frac{\sh (Q_+ - \la_\pm)}{\sh (Q_- - \la_\pm)} \biggr)
		     \biggr\} + \CO (T) \epp
\end{multline}
\end{lemma}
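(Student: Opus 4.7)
The plan is to handle the three regimes of $\la_\pm$ separately. When $\la_\pm$ stays uniformly away from $Q_\pm$, the function $f(\la) = \cth(\la - \la_\pm)$ is holomorphic and bounded on an open neighbourhood of ${\cal C}_u$, so Lemma~\ref{lem:sommerfeld} applies directly with this choice of $f$; dividing its conclusion by $T$ yields the first assertion, with the $T^2$ Sommerfeld correction absorbed into the $\CO(T)$ error.

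For $\la_\pm \in V_+$ the pole of $\cth(\la - \la_\pm)$ lies at distance $\CO(T)$ from ${\cal C}_u$, so Lemma~\ref{lem:sommerfeld} cannot be applied naively near $Q_+$. I would split ${\cal C}_u = {\cal C}_u^{\rm bulk} \cup {\cal C}_u^{\rm loc}$, where ${\cal C}_u^{\rm loc}$ is a short sub-arc of fixed, $T$-independent length $\de > 0$ centred on $Q_+$ and chosen large enough to contain $V_+$. On ${\cal C}_u^{\rm bulk}$ the kernel is uniformly bounded in $T$, so Lemma~\ref{lem:sommerfeld} applies; writing $\uq(\la) = [\uq(\la) - \uq(\la_\pm)] + \uq(\la_\pm)$ under the integrand then produces the regularised integral $\int_{Q_-}^{Q_+ - \de'} \cth(\la - \la_\pm) [\uq(\la) - \uq(\la_\pm)]/T\,\rd\la$ of the statement (up to $\CO(T)$), together with an auxiliary boundary piece $-(\uq(\la_\pm)/T) \ln [\sh(Q_+ - \de' - \la_\pm) / \sh(Q_- - \la_\pm)]$, where $Q_+ - \de'$ denotes the endpoint of ${\cal C}_u^{\rm bulk}$ facing $Q_+$.

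On the local arc I would introduce the rescaled variable $w = u'(Q_+)(\la - Q_+)/T$. To leading order $u(\la)/T = 2 \p \i p + w$ and $\cth(\la - \la_\pm) = u'(Q_+)/[T(w - w_\pm)]$ with $w_\pm = \uq(\la_\pm)/T$, so the local contribution collapses to
\begin{equation*}
    J(w_\pm) = \int_{-L}^{L} \frac{\rd w}{w - w_\pm} \ln \bigl(1 + \re^{-w}\bigr) + \CO(T \ln T),
\end{equation*}
with $L = \de u'(Q_+)/T$. To compute $J(w_\pm)$ I would subtract from $\ln(1 + \re^{-w})$ its asymptotic $-w$ behaviour on the $w < 0$ half-line so as to render the integrand exponentially decaying at both ends, and recognise the remaining integral via a Malmsten-type integral representation of $\ln \G$: it yields the $\mp 2 \p \i \ln \G(\2 \pm w_\pm)$ and $\pm \p \i \ln(2\p)$ pieces of (\ref{iuright}), plus the $w_\pm[\ln (w_\pm/(\pm 2\p\i)) - 1]$ and $w_\pm \ln L$ terms generated by the subtraction. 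The $\pm$ alternatives track the residue of the kernel picked up when the contour is deformed past $\la_\pm$, and flip in the natural way between $\la_+$ (above ${\cal C}_u$) and $\la_-$ (below). Because $\sh(Q_+ - \de' - \la_\pm) \sim - TL/u'(Q_+)$ as $T \to 0$, the $\ln L$ divergence produced by the Mellin--Barnes evaluation cancels exactly against the corresponding $\ln \sh$ factor coming from the bulk, and the surviving finite combination reconstitutes the $\ln [\sh(Q_+ - \la_\pm) / \sh(Q_- - \la_\pm)]$ of the statement. The case $\la_\pm \in V_-$ is treated by an entirely parallel local analysis at $Q_-$, with the sign flips in (\ref{iuleft}) dictated by the orientation of ${\cal C}_u$.

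The main obstacle is the clean evaluation of $J(w_\pm)$: ${\cal C}_u^{\rm loc}$ is a deformation of ${\cal C}_0 - \i \g/2$ rather than a straight real segment, so the effective $w$-contour is tilted and the Binet-type integral representation of $\ln \G$ must be analytically continued to that contour while keeping track of the branch cuts. A transparent consistency check that pins down all additive constants is to differentiate both sides of (\ref{iuright}) with respect to $\la_\pm$: the derivative $\sh^{-2}(\la - \la_\pm)$ is again a singular kernel of the same form, so the left-hand side reduces to an integral handled by the same local argument one step lower, while the right-hand side produces the digamma function $\ps(\2 \pm \uq(\la_\pm)/(2 \p \i T))$; matching the two fixes the $\pm \p \i \ln(2\p)$ constant unambiguously.
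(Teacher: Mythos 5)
Your proposal is correct in substance and follows the same basic strategy as the paper -- localize at the Fermi point, reduce to a model Cauchy integral against the Fermi-weight $\ln(1+\re^{-w})$ with $w_\pm=\uq(\la_\pm)/T$ kept as an order-one parameter, and evaluate that model integral in closed form in terms of $\ln\G$ -- but the execution differs in three places. (a) The paper does not split the contour into bulk and local arcs and match logarithmic divergences across the cut; instead it subtracts the full principal term $\int_{Q_-}^{Q_+}\cth(\la-\la_\pm)\,\uq(\la)/T$ globally (re-using the branch-switching argument of Lemma~\ref{lem:sommerfeld}), so that the remainder $\D I_u$ is supported, up to $\CO(T^\infty)$, on the small arcs $J_\pm^\de$ where $\Im u=2\p pT$; the $\ln\sh$ term then arises trivially from writing $\uq=(\uq-\uq(\la_\pm))+\uq(\la_\pm)$ in the principal term, with no $\ln L$ cancellation to track by hand. (b) On the local arc the paper does not linearize $u$; it regularizes the kernel by subtracting $u'(\la)/(u(\la)-u(\la_\pm))$ (the difference being bounded, hence contributing $\CO(T)$) and then substitutes $x=v(\la)/T$ \emph{exactly}. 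This is what secures the stated $\CO(T)$ error; your rescaling $w=u'(Q_+)(\la-Q_+)/T$ incurs quadratic Taylor corrections of size $Tw^2$, which are not uniformly small over the whole arc $|w|\le L=\de\,u'(Q_+)/T$, and your own $\CO(T\ln T)$ estimate is weaker than the lemma claims -- to close this you would either need to shrink the local arc to $o(\sqrt{T})$ and enlarge the bulk accordingly, or adopt the paper's exact change of variables. (c) The model integral $\int\rd x\,\ln(1+\re^{-|x|})/(x-a_\pm)$ is evaluated in the paper by the residue theorem (following the Bose-gas computation of KMS11b) rather than by a Malmsten/Binet representation of $\ln\G$; both give the same $\G(\2\pm a_\pm/2\p\i)$ and $\p\i\ln(2\p)$ terms, with the sign selected by whether $a_\pm$ lies in the upper or lower half plane, consistent with your residue-tracking remark. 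Your digamma consistency check is a nice independent confirmation of the constants but is not needed in either version.
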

\begin{lemma} \label{lem:sing2}
\begin{multline}
     - \int_{{\cal C}_{0,s}} \rd \la \: \int_{{\cal C}_{0,s}'} \rd \m \:
          z(\la) \cth' (\la - \m) z(\m) = \\
          + \ln \biggl\{
	           G \biggr(1 + \frac{\uq_1^{(\ell)} (Q)}{2 \p \i} \biggr)
	           G \biggr(1 - \frac{\uq_1^{(\ell)} (Q)}{2 \p \i} \biggr)
	           G \biggr(1 + \frac{\uq_1^{(\ell)} (- Q)}{2 \p \i} \biggr)
	           G \biggr(1 - \frac{\uq_1^{(\ell)} (- Q)}{2 \p \i} \biggr)
	        \biggr\} \\
	  + C_1 \biggl[ \frac{\uq_1^{(\ell)}}{2 \p \i} \biggr]
	  - \biggl( \biggl( \frac{\uq_1^{(\ell)} (Q)}{2 \p \i} \biggr)^2
	     + \biggl( \frac{\uq_1^{(\ell)} (-Q)}{2 \p \i} \biggr)^2 \biggr)
	     \ln \biggl( \frac{\e' (Q) \sh(2 Q)}{2 \p T} \biggr)
		+ o (1) \epc
\end{multline}
where $o(1)$ denotes terms that go to zero as $T \rightarrow 0^+$, and
\begin{multline}
     C_1 [v] = \2 \int_{- Q}^Q \rd \la \int_{- Q}^Q \rd \m \:
		  \frac{v' (\la) v (\m) - v (\la) v' (\m)}{\tgh (\la - \m)} \\
               + v (Q) \int_{- Q}^Q \rd \la
	         \frac{v (\la) -  v (Q)}{\tgh (\la - Q)}
               - v (-Q) \int_{- Q}^Q \rd \la
	         \frac{v (\la) -  v (-Q)}{\tgh (\la + Q)} \epp
\end{multline}
\end{lemma}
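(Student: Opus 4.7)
The approach is to convert the $\cth'$ kernel to a $\cth$ kernel by integration by parts, then apply Lemma~\ref{lem:sing1} to the resulting Cauchy integrals, and finally recognize the Barnes $G$-functions by integrating the $\ln\Gamma$ factors thus generated against $z'$ near the shifted Fermi points, using the low-temperature parameterisation (\ref{hppms}) of the particle and hole positions. The technical engine is thus Lemma~\ref{lem:sing1}, and all the hard asymptotic analysis has been localised there.

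Since $\cth'(\la-\m)=-\6_\m\cth(\la-\m)$ and ${\cal C}_{0,s}'$ is closed and can be assumed to lie in a domain where $z$ is holomorphic, integrating by parts in $\m$ gives
\begin{equation*}
-\int_{{\cal C}_{0,s}}\rd\la\int_{{\cal C}_{0,s}'}\rd\m\,z(\la)z(\m)\cth'(\la-\m)
=\int_{{\cal C}_{0,s}}\rd\la\int_{{\cal C}_{0,s}'}\rd\m\,z(\la)z'(\m)\cth(\la-\m),
\end{equation*}
the double pole of $\cth'$ on the diagonal being harmless thanks to the infinitesimal separation of the two contours, and $z'(\m)$ being single-valued because the log branch of $z$ disappears upon differentiation. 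Writing $z=-(2\p\i)^{-1}[\ln(1+\fa_n)-\ln(1+\fa_0)]$ then reduces the inner $\la$-integral to a difference of two quantities of the form $I_u(\m)$ treated in Lemma~\ref{lem:sing1}: for $\m$ uniformly away from the shifted Fermi points $\pm Q-\i\g/2$ the non-singular branch applies, while for $\m$ inside the Fermi neighbourhoods $V_\pm$ the singular branch (\ref{iuright})--(\ref{iuleft}) is invoked and supplies $\ln\Gamma(\tfrac12\pm\uq(\m)/(2\p\i T))$ factors, the terms $(\uq(\m)/T)[\ln(\uq(\m)/(\pm 2\p\i T))-1]$, and the boundary $\sh$-logarithms.

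The outer $\m$-integration then produces the three contributions on the right-hand side of the claim. Inserting the asymptotics (\ref{ubart2})--(\ref{u1expexp}) for $\uq$ in the non-singular regime gives a bulk double integral that collapses, by the very definition of $C_1$, to $C_1[\uq_1^{(\ell)}/(2\p\i)]$. In the Fermi neighbourhoods the change of variable $\uq(\m)=2\p\i T\,y$, under which $z'(\m)\,\rd\m$ reduces to $\rd y$ to leading order in $T$ by (\ref{hppms}), turns the $\ln\Gamma$ pieces into integrals of the form $\int_0^{\pm\uq_1^{(\ell)}(\pm Q)/(2\p\i)}\ln\Gamma(1/2\pm y)\,\rd y$, which are expressible in closed form in terms of the Barnes $G$-function; summing over both Fermi points and both signs delivers the four advertised $G$-factors. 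The $(\uq(\m)/T)\ln(\uq(\m)/T)$ pieces, combined with the $\sh$-boundary logarithms after the same change of variable, reduce to the single divergent term $-[(\uq_1^{(\ell)}(Q)/(2\p\i))^2+(\uq_1^{(\ell)}(-Q)/(2\p\i))^2]\ln(\e'(Q)\sh(2Q)/(2\p T))$ of the claim.

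The principal obstacle is the delicate matching of the two branches of Lemma~\ref{lem:sing1} at the boundary of the neighbourhoods $V_\pm$: the arbitrary cutoff $\de$ must drop out and the apparent singularities on either side of the transition must cancel exactly. A subsidiary subtlety is the precise identification of $z$ and $z'$ with the quantities entering Lemma~\ref{lem:sing1} near the Fermi points, which rests on the low-temperature structure of the zeros of $1+\fa_n$ illustrated in Figures~\ref{fig:br_an} and~\ref{fig:br_an2}. All remaining errors are $o(1)$ as $T\rightarrow 0^+$, consistent with the stated remainder.
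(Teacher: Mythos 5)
Your overall architecture coincides with the paper's: integrate by parts to trade $\cth'$ for $\cth$ acting on $z\,z'$, evaluate the Cauchy transform of $z$ through Lemma~\ref{lem:sing1} (the paper packages this as Corollary~\ref{cor:zcauchy}), and split the outer contour into a bulk piece, which produces the double-integral part of $C_1$, plus two Fermi-point neighbourhoods, which produce the boundary terms of $C_1$, the Barnes $G$-factors and the $\ln T$ divergence. The step that fails as written is your evaluation of the Fermi-point contributions. Near $Q_+$, after the substitution $t=\uq(\m)/T$, the outer measure is $z'(\m)\,\rd\m = \frac{\rd t}{2\p\i}\bigl[\frac{1}{1+\re^{t}}-\frac{1}{1+\re^{t-\uq_1(Q)}}\bigr]$, a difference of two Fermi weights whose scale of variation ($t\sim 1$) is comparable to that of the $\ln\G\bigl(\2+t/2\p\i\bigr)$ factors supplied by Lemma~\ref{lem:sing1}. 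Replacing this weight by the flat measure $\rd y$ on $[0,\uq_1(Q)/2\p\i]$ and integrating $\ln\G(1/2\pm y)$ by the Alexeiewsky formula would yield Barnes $G$-factors of the form $G(3/2+v)$ together with a $v\ln\G(1/2+v)$ term, \emph{not} the symmetric combination $G(1+v)\,G(1-v)$ asserted in the lemma; the smoothing of the Fermi weights contributes at order one and cannot be discarded. The paper instead keeps the exact weight and evaluates $\int_{\mathbb R}\frac{\rd t}{2\p\i}\bigl[\frac{1}{1+\re^t}-\frac{1}{1+\re^{t-a}}\bigr]\ln\frac{\G(1/2+t/2\p\i)}{\G(1/2+t/2\p\i-a/2\p\i)}$ by residue calculus, which is what actually delivers $G\bigl(1\pm\uq_1(\pm Q)/2\p\i\bigr)$.

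Two further points your sketch does not cover. First, because $z'$ near a Fermi point is a \emph{difference} of the $\fa_n$- and $\fa_0$-weights, the cross terms in which the part of the density proportional to $\de u'=(\uq-u_0)'/T$ multiplies the singular ($\ln\G$ and $\ln T$) part of the Cauchy transform --- the paper's $I^{(2)}_{J_\pm^\de}=I^{(3)}_{J_\pm^\de}+I^{(4)}_{J_\pm^\de}$ --- must be shown to be $o(1)$; this requires the uniform-integrability Lemma~\ref{lem:limint} combined with a Stirling estimate and does not follow from the localisation you describe. Second, the bulk contribution carries an extra boundary term $-\frac{1}{8\p\i}\bigl(\uq_1^2(Q)-\uq_1^2(-Q)\bigr)$ that must cancel against the imaginary parts of $\ln(\pm 2\p\i T)$ coming from the two Fermi points in order to land on the real logarithm $\ln\bigl(\e'(Q)\sh(2Q)/2\p T\bigr)$ of the statement; your accounting of the divergent pieces does not exhibit this cancellation.
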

Proofs of Lemma~\ref{lem:sing1} and Lemma~\ref{lem:sing2} are provided
in Appendix~\ref{app:proofs}.

\subsection{Determinant part and factorized part of the amplitudes}
For the calculation of the determinant part and of the `factorized
part', by which we mean the product of functions $\overline{G}_\pm$
in (\ref{amp}), we can closely follow Appendix D of \cite{DGK13a}.
An important property of these contributions to the amplitudes is
that they do not depend on the particle and hole quantum numbers.
They are functionals of $w$ and depend only on $p + \ell$.
Anticipating this fact we shall write
\begin{equation} \label{defdl}
     {\cal D} (p + \ell) = \lim_{T \rightarrow 0}
        \frac{\det_{\rd m^\a_+, {\cal C}_n} \bigl\{ 1 - \widehat{K}_{1-\a} \bigr\}
	      \det_{\rd m^\a_-, {\cal C}_n} \bigl\{ 1 - \widehat{K}_{1+\a} \bigr\}}
	     {\det_{\rd m^\a_0, {\cal C}_n} \bigl\{ 1 - \widehat{K} \bigr\}
	      \det_{\rd m, {\cal C}_n} \bigl\{ 1 - \widehat{K} \bigr\}} \epp
\end{equation}

The measures of the determinants in the denominator do not contain
$\r_n (\cdot |\a)$. Their zero temperature limit is readily understood
by recalling that the weight functions $1/\bigl( 1 + \fa_0^{-1} \bigr)$
and $1/\bigl(1 + \fa_n^{-1} (\cdot|\a) \bigr)$ turn into the characteristic
functions of the sub-contour $\i \g/2 + [-Q, Q]$ on ${\cal C}_0$. Thus,
\begin{equation}
     \lim_{T \rightarrow 0}
        \det_{\rd m^\a_0, {\cal C}_n} \bigl\{ 1 - \widehat{K} \bigr\} = 
     \lim_{T \rightarrow 0}
	\det_{\rd m, {\cal C}_n} \bigl\{ 1 - \widehat{K} \bigr\} = 
	\det_{\rd \la/2 \p \i, [-Q, Q]} \bigl\{ 1 - \widehat{K} \bigr\}
\end{equation}
(for more details see \cite{DGK13a}). Here the right hand side does not
depend on any characteristic of the excitation and is a simple function
of the magnetic field.

With the determinants in the numerator we proceed as in Appendix D of
\cite{DGK13a}. Using an idea borrowed from \cite{BoGo10} we decomposed
the measures into
\begin{multline} \label{repweights}
     \rd m_\pm^\a (\la) =
        \frac{\rd \la}{2 \p \i} \Biggl[
	\biggl( \r_n (\la|\a) \frac{1 + \fa_0 (\la)}{1 + \fa_n (\la|\a)} \biggr)^{\pm 1}
	   \frac{1}{1 - \bigl(\fa_0 (\la)/\fa_n (\la|\a)\bigr)^{\pm 1}} \\
        + \frac{\r_n^{\pm 1} (\la|\a)}
	       {1 - \bigl(\fa_n (\la|\a)/\fa_0 (\la)\bigr)^{\pm 1}} \Biggr] \epp
\end{multline}
Then we argued that the second term in the square brackets is
holomorphic inside the contours ${\cal C}_n$ at least for $T$
small enough. In fact, the functions $\r_n (\cdot|\a)^{\pm 1}$
are holomorphic inside ${\cal C}_n$. Moreover, 
\begin{equation} \label{repden}
     1 - \fa_0 (\la)/\fa_n (\la|\a)
        = 1 - \re^{2 \p \i (w (\tilde \la) + \a'' - 1/2)} + {\cal O} (T) \epc
\end{equation}
where $\tilde \la = \la - \i \g/2$ and $\la$ inside ${\cal C}_n$.
We have numerical evidence that the ${\cal O} (1)$ term on the
right hand side is nonzero inside ${\cal C}_n$, where it is
holomorphic as well. Since the latter is also true for the kernels,
we may replace the measures $\rd m^\a_\pm$ by the first terms
on the right hand side of (\ref{repweights}). This has the 
advantage that the new measures do not have poles that would
pinch the contour at the Fermi points when $T \rightarrow 0$.
Hence, we can shift the contour away from the Fermi points
and avoid the calculation of singular integrals. Starting from
(\ref{evratc0}) we then obtain
\begin{equation}
     \r_n (\la|\a) \frac{1 + \fa_0 (\la)}{1 + \fa_n (\la|\a)} =
        \re^{\i \p \a' + E(Q - \tilde \la) 
	      + \int_{-Q}^Q \rd \m \: \re(\m - \tilde \la) (w(\m) + \a'' - 1/2)}
	      + {\cal O} (T) \epc
\end{equation}
where $\la = \tilde \la + \i \g/2$ is outside ${\cal C}_n$ and
uniformly away from the Fermi points at $\pm Q + \i \g/2$.

Let us define
\begin{multline} \label{defcapms}
     \rd \hat M^\a_\pm (\la) = \frac{\rd \la}{2 \p \i}
        \frac{q^{\pm \a} \re^{\pm \int_{-Q}^Q \rd \m \: \re(\m - \la) (w(\m) - w(\la))}}
             {1 - \re^{\pm 2 \p \i (w (\la) + \a'' - s/2)}} \\ \times
        \re^{\pm \{ (w(\la) + \a'' + s/2) E(Q - \la)
	              - (w(\la) + \a'' - s/2) E(-Q - \la) \}} \epp
\end{multline}
Then, if we perform the zero temperature limit and shift the Fermi points
down to the real axis, we can replace the measures $\rd m^\a_\pm$ 
in the Fredholm determinants in the numerator in (\ref{defdl}) by
$\rd \hat M^\a_\pm$ with $s = 1$. In (\ref{defcapms}) we have also
separated the factors which are singular at the Fermi points from
the regular factors. Subsequently, assuming that there are no
singularities between the lower part of the integration contour
and the real axis, we deform the integration contour into a narrow
contour $\G [-Q, Q]$ encircling the interval $[-Q,Q]$ in positive
direction (see Figure~\ref{fig:gqqcontour}). Finally,
\begin{figure}[t]
\begin{center}
\includegraphics[width=.6\textwidth]{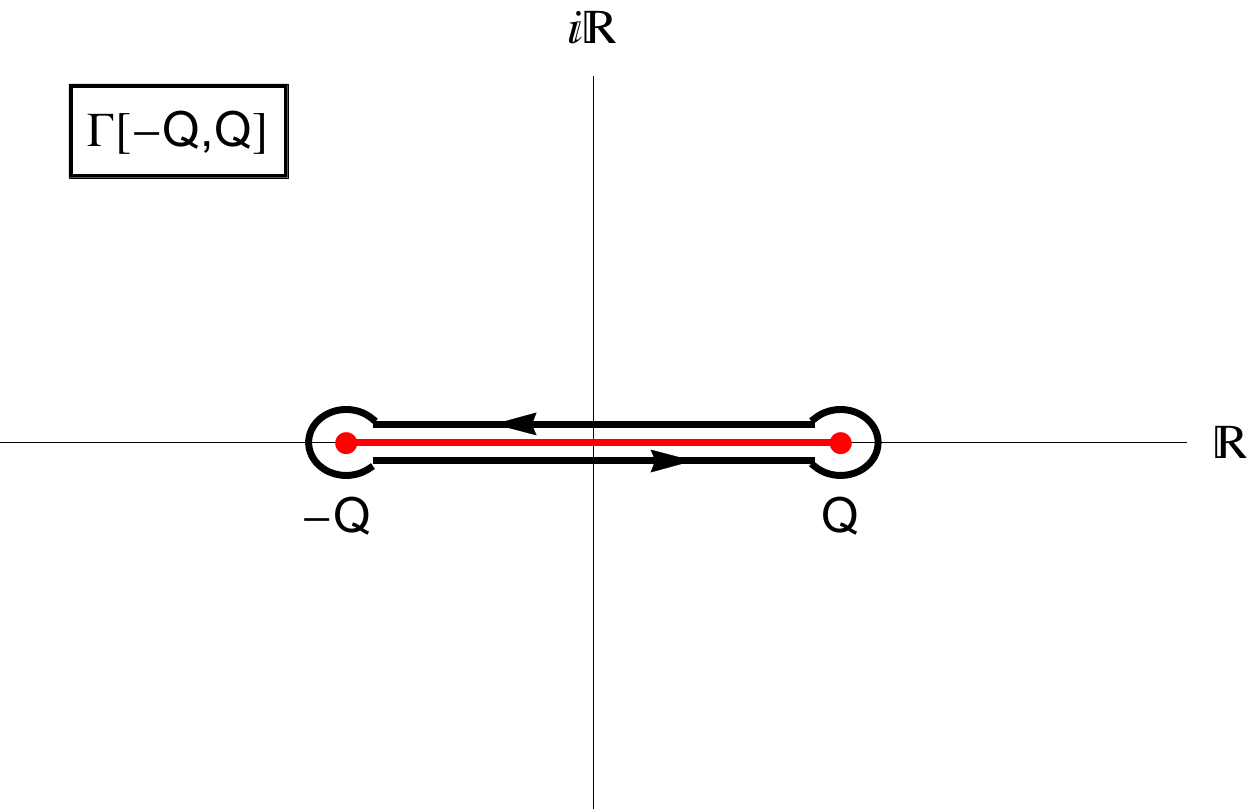}
\caption{\label{fig:gqqcontour}
Integration contour $\G[-Q, Q]$ involved in the Fredholm determinants
in the numerator of equation~(\ref{fredzero}).
}
\end{center}
\end{figure}%
\begin{equation} \label{fredzero}
     {\cal D} (p + \ell) =
        \frac{\det_{\rd \hat M^\a_+, \G[-Q, Q]} \bigl\{ 1 - \widehat{K}_{1-\a} \bigr\}
	      \det_{\rd \hat M^\a_-, \G[-Q, Q]} \bigl\{ 1 - \widehat{K}_{1+\a} \bigr\}}
	     {\det^2_{\rd \la/2 \p \i, [-Q, Q]} \bigl\{ 1 - \widehat{K} \bigr\}} \epp
\end{equation}

Here several remarks are in order. First, for the holomorphicity of the
second terms on the right hand side of (\ref{repweights}) inside
${\cal C}_n$ as well as for the contraction of the contour leading
to $\G [-Q,Q]$ we needed that the right hand side of (\ref{repden})
is non-zero inside ${\cal C}_n$. We verified this numerically with
examples, but it should be justified more rigorously e.g.\ by
establishing bounds on the imaginary part of $w$. Second, for
$\a = 0$ and $|p + \ell|$ large enough there may appear zeros
of the right hand side of (\ref{repden}) in the interval $[- Q,Q]$.
These are by definition outside the contour $\G[-Q,Q]$.

Our third remark concerns the discontinuity of the measures across the
interval $[-Q, Q]$. Taking into account the branch cut of $E(- Q - \la)$
along the real axis from $- Q$ to~$+ \infty$ we obtain
\begin{multline} \label{measuredelta}
     \rd \D_\pm^\a (\la) = \rd \hat M_\pm^\a (\la_-) - \rd \hat M_\pm^\a (\la_+) \\
        = \frac{\rd \la}{2 \p \i}
          q^{\pm \a} \, \re^{\pm \int_{-Q}^Q \rd \m \: \re(\m - \la) (w(\m) - w(\la))} \;
          \re^{\pm \{ (w(\la) + \a'' + s/2) E(Q - \la)
	              - (w(\la) + \a'' - s/2) E(-Q - \la_-) \}} \epp
\end{multline}
Hence, for a small $\e > 0$, we can interpret the integral over
$\G[-Q,Q]$ with measures $\rd \hat M_\pm^\a$ as a sum of an integral
over $[- Q + \e, Q -\e]$ with measures $\D_\pm^\a$ and two integrals
over infinitesimal circles of radius $\e$ around $- Q$ and $Q$  with
measures $\rd \hat M_\pm^\a$ (see Figure~\ref{fig:gqqcontour}). In
general the individual contributions do not exist in the limit
$\e \rightarrow 0$, because of the singularities of the measures at
the Fermi points. For $s = 1$ these are determined by the exponents
\begin{equation}
     w(\pm Q) + \a'' \pm 1/2 = \a'' {\cal Z} \pm \frac1{2 \cal Z} \epp
\end{equation}
In the special case $\a = p + \ell = 0$ which determines the leading
low-temperature asymptotics (see below), however,
\begin{equation}
     |w(\pm Q) + \a'' \pm 1/2| = \frac1{2 \cal Z} < \frac{1}{\sqrt{2}} \epc
\end{equation}
and the singularities of the measure are integrable. In this case we
may neglect the integrals over the small circles of radius $\e$ and
replace $\rd \hat M_\pm^\a$ by $\rd \D_\pm^\a$ and $\G [- Q, Q]$ by
$[- Q, Q]$ in the Fredholm determinants in the numerator of
(\ref{fredzero}).

Using similar ideas as above we can also obtain the zero temperature
form of the integral equations (\ref{liegbar}). But when we insert
(\ref{repweights}) into the integrals in (\ref{liegbar}) we have
to take into account that the functions $\overline{G}_\pm (\cdot, \x)$
are meromorphic with a single simple pole with residue $-1$ at
$\la = \x$. For this reason the second terms on the right hand side
of (\ref{repweights}) cannot be neglected. They contribute to the
driving term in the zero-temperature form of the integral equation.
Another contribution is obtained when we contract the integration
contours. Finally in the zero temperature limit the functions
$\overline{G}_\pm (\cdot, \x)$ are determined by the integral
equations
\begin{multline} \label{liegbarzero}
     \overline{G}_\pm (\la, \x) = - \cth(\la - \x) \\
        + q^{- \a \pm 1} \cth(\la - \x + \h) q^{\pm \a}
        \re^{\pm \{E(Q - \x + \i \g/2) 
	     + \int_{-Q}^Q \rd \m \: \re(\m - \x + \i \g/2) (w(\m) + \a'' - 1/2)\}}
	     \\[1ex]
	+ \int_{\G[-Q,Q]} \rd \hat M_\pm^\a (\m) \: \overline{G}_\pm (\m + \i \g/2, \x)
	  K_{\a \mp 1} (\m + \i \g/2 - \la) \epp
\end{multline}
Clearly the solutions depend only on $p + \ell$. For the physical
correlation functions we will later set $\x = 0$. In this case the
exponential contribution to the driving term simplifies,
\begin{equation}
     \re^{\pm \{E(Q + \i \g/2) 
          + \int_{-Q}^Q \rd \m \: \re(\m + \i \g/2) (w(\m) + \a'' - 1/2)\}}
        = - \re^{\mp 2 \i \a'' k_F} \epp
\end{equation}

\subsection{Summation}
We now turn to the summation over all excitations close to the Fermi
points (in the sense of (\ref{abovefermi})). An explicit  summation over
the particle and hole quantum numbers is possible for each value of $\ell$.
We may use the same summation formula as employed in \cite{KKMST11b}
in the context of the so-called critical form factors pertaining to
the eigenstates of the ordinary transfer matrix. This formula, adapted
to our notation, takes the form
\begin{multline} \label{critsum}
     \sum_{\substack{n_p, n_h \ge 0\\ n_p - n_h = \ell}}
     \sum_{\substack{p_1 < \dots < p_{n_p}\\ p_j \in {\mathbb Z}_+}}
     \sum_{\substack{h_1 < \dots < h_{n_h}\\ h_j \in {\mathbb Z}_+}} \mspace{-6.mu}
     \re^{- \frac{2 \p m T}{v_0}
          \bigl[ \sum_{j=1}^{n_p} (p_j - 1) + \sum_{j=1}^{n_h} h_j \bigr]}
	  \biggl( \frac{\sin(\p v)}{\p} \biggr)^{2 n_h} \mspace{-12.mu}
	  {\cal R}_{n_h, n_p} \bigl( \{h_j\}, \{p_j\} \big| v \bigr) \\[-2.ex]
        = \frac{G^2 (1 + \ell - v)}{G^2 (1 - v)} \,
          \frac{\re^{- \frac{\p m T \ell (\ell -1)}{v_0}}}
	       {\bigl( 1 - \re^{- \frac{2 \p m T}{v_0}} \bigr)^{(\ell - v)^2}} \epp
\end{multline}
It can be directly applied to the results of the previous subsections.
The terms that remain after the summation over the particle and
hole quantum numbers depend on $p$ and $\ell$ only through their sum.
For this reason we can shift the index and remain with a sum over
$\ell$. Performing also the limit $\a \rightarrow 0$ and setting
$\x = 0$ we obtain the following result:
\begin{equation} \label{osc}
     \<\s_1^- \s_{m+1}^+\>_{\rm osc}
        = (-1)^m \sum_{\ell \in {\mathbb Z}} A_{0, \ell}^{-+} \re^{2i m \ell k_F}
                 \biggl( \frac{\p T/v_0}{\sh(\p m T/v_0)}
		         \biggr)^{2 \ell^2 {\cal Z}^2 + \frac{1}{2 {\cal Z}^2}} \epc
\end{equation}
where
\begin{multline} \label{oscamp}
     A_{0, \ell}^{-+} =
        \frac{\re^{\2 \th(2Q) + E(-2Q) + C_- [w] + C_+ [w]}}{4 \g \sh(\h)}
	{\cal D}(\ell) \, \overline{G}_+^- (0) \,
	\6_\a \overline{G}_-^+ (0) \big|_{\a = 0} \,
	\re^{\ell \th (2Q) \bigl(\frac{\cal Z}{2} - \frac{1}{2 \cal Z}\bigr)} \\[1ex]
        \times \biggl[ \prod_{\e_1, \e_2 = \pm 1}
	               G \biggl(1 + \e_1 \ell {\cal Z}
		                  + \frac{\e_2}{2 {\cal Z}} \biggr) \biggr]
        \biggl( \frac{\re^{\2 \th(2Q) - E(2Q)}}{2 \p \r (Q) \sh(\h)}
	        \biggr)^{2 \ell^2 {\cal Z}^2 + \frac{1}{2 {\cal Z}^2}} \epp
\end{multline}
and where it is understood that $\a = p = 0$ in those terms which depend
implicitly on $\a$ and $p$.

The series (\ref{osc}) is not an asymptotic series, neither in $T$ nor in
$m$. In each order of exponential decay we have neglected algebraic corrections
in $T$ that would contribute lower order terms than the next-order exponentials.
Moreover, we have neglected higher temperature corrections to the
correlation lengths that would contribute terms of the form $\exp{{\cal O}
(m T^2)}$. The series (\ref{osc}) is systematic in that it gives the
leading amplitudes in front of every oscillating term $\re^{2i m \ell k_F}$.

The leading low-temperature large-distance asymptotics of the transversal
correlation functions is given by the $\ell = 0$ term in the sum, \textit{i.e.}
\begin{multline} \label{transverseasymp}
     \<\s_1^- \s_{m+1}^+\> \sim
          \frac{\re^{\2 \th(2Q) + E(-2Q) + C_- [w] + C_+ [w]}}{4 \g \sh(\h)}
	  {\cal D}(0) \, \overline{G}_+^- (0)\,
	  \6_\a \overline{G}_-^+ (0) \big|_{\a = 0} \\[1ex] \times
          G^2 \Bigl(1 + \frac{1}{2 {\cal Z}}\Bigr)
	  G^2 \Bigl(1 - \frac{1}{2 {\cal Z}}\Bigr)
          \biggl( \frac{\re^{\2 \th (2Q) - E(2Q)}}{2 \p \r (Q) \sh(\h)}
		  \biggr)^{\frac{1}{2 {\cal Z}^2}} (-1)^m
          \biggl( \frac{\p T/v_0}{\sh(\p m T/v_0)} \biggr)^{\frac{1}{2 {\cal Z}^2}} \epp
\end{multline}
This is our main result. We shall see below that this formula is numerically
efficient and matches well with known results.
\section{Discussion}
In our previous work \cite{DGK13a} in which we derived our formulae for
the amplitudes we also analyzed a generating function of the longitudinal
correlation functions for small temperatures. In that work we omitted
a detailed discussion of the meaning of $p$ which was supplemented here.
We also postponed the numerical analysis of the longitudinal case.
Before we catch up on this let us recall the formulae.
\subsection{A summary of the longitudinal case}
In \cite{DGK13a} we obtained an `oscillating series' of similar
form and meaning as (\ref{osc}) for a generating function of the
longitudinal correlation functions,
\begin{equation} \label{osczz}
     \< \re^{2 \p \i \a S(m)} \>_{\rm osc}
        = (-1)^{m \a} \sum_{\ell \in {\mathbb Z}} A_{0, \ell} \,
	  \re^{2i m (\ell - \a) k_F} \biggl( \frac{\p T/v_0}{\sh(\p m T/v_0)}
		                             \biggr)^{2 (\ell - \a)^2 {\cal Z}^2} \epp
\end{equation}
Here $S(m) = \sum_{j=1}^m \s_j^z/2$. The amplitudes consist of two
factors, $A_{0, \ell} = {\cal D}_z (\ell) {\cal A} (\ell - \a)$, where
\begin{equation}
     {\cal A} (x) = \re^{C_z [x Z]} G^2 (1 + x {\cal Z}) G^2 (1 - x {\cal Z})
        \biggl( \frac{\re^{\2 \th (2Q) - E(2Q)}}{2 \p \r (Q) \sh(\h)}
	        \biggr)^{2 x^2 {\cal Z}^2}
\end{equation}
with
\begin{multline}
     C_z [v] = \2 \int_{- Q}^Q \rd \la \int_{- Q}^Q \rd \m \:
	       \bigl( v' (\la) v (\m) - v (\la) v' (\m) \bigr) \re (\la - \m) \\
               + 2 v (Q) \int_{- Q}^Q \rd \la
	         \bigl(v (\la) -  v (Q)\bigr) \re (\la - Q) \epp
\end{multline}
The other factor stems from the Fredholm determinant part of the
amplitudes. It is similar to (\ref{fredzero}) and can be written as
\begin{equation} \label{ratdetzero}
     {\cal D}_z (\ell) = 
        \frac{\det_{\rd \hat{M}^\a_+, \G [- Q, Q]}
	      \bigl\{ 1 - \widehat{\cal K}_{- \a} \bigr\}
	      \det_{\rd \hat{M}^\a_-, \G [- Q, Q]}
	      \bigl\{ 1 - \widehat{\cal K}_{\a} \bigr\}}
	     {\det^2_{\frac{\rd \la}{2\p \i}, [-Q, Q]}
	      \bigl\{ 1 - \widehat{K} \bigr\}} \epp
\end{equation}
Here it is understood that $s = 0$ in the measures (cf.\ (\ref{defcapms}))
in the numerator. In this case it is convenient to absorb the factorized
part of the form factors into the Fredholm determinant part \cite{DGK13a},
leading to a modification of the kernel in the Fredholm determinants
in the numerator,
\begin{equation} \label{kernela}
     {\cal K}_\a (\la) = \frac{\re^{(\a - 1)(\la - \h)}}{\sh(\la - \h)}
                         - \frac{\re^{(\a - 1)(\la + \h)}}{\sh(\la + \h)} \epp
\end{equation}

Starting from the above oscillating series for the generating
function we calculated the leading asymptotics of the longitudinal
correlation functions in \cite{DGK13a}. Taking into account
only the terms with $\ell = -1, 0 , 1$ in (\ref{osczz}) we obtained
\begin{subequations}
\label{longitudinalasymp}
\begin{align}
     & \<\s_1^z \s_{m+1}^z\> - \<\s_1^z\> \<\s_{m+1}^z\> \sim
          \notag \\ & \mspace{108.mu}
	  A^{zz}_{0,0} \biggl( \frac{\p T/v_0}{\sh( m \p T/v_0)} \biggr)^2 + \:
	  A^{zz}_{0,1} \cos(2m k_F)
	  \biggl( \frac{\p T/v_0}{\sh( m \p T/v_0)} \biggr)^{2 {\cal Z}^2}
	  \mspace{-18.mu} \epc \\[1ex]
     & A^{zz}_{0,0} = - \frac{2 {\cal Z}^2}{\p^2} \epc \qd
       A^{zz}_{0,1} = \frac{4 \sin^2 (k_F)}{\p^2}
                      {\cal A} (1) {\cal D}_z'' (1) \epp
\end{align}
\end{subequations}
Here we have used the shorthand notation ${\cal D}_z'' (1) = \6_\a^2
{\cal D}_z (1)|_{\a = 0}$. This $\a$-derivative can be calculated
analytically (see e.g.\ Appendix D of \cite{DGK13a}).

\subsection{Numerical evaluation and comparison with known results}
We would like to stress that the asymptotic formulae
(\ref{transverseasymp}) and (\ref{longitudinalasymp}) are numerically
efficient and can be evaluated with standard software on a laptop
computer. In fact, what has to be calculated are basically the
solutions of the linear integral equations (\ref{zphr}) and
(\ref{rho}) which then have to be integrated over or evaluated
at the Fermi points. Moreover, we have to solve the linear
integral equations (\ref{liegbarzero}) and have to calculate the
Fredholm determinants in (\ref{fredzero}). As we have learned
from \cite{Bornemann10} Fredholm determinants can be efficiently
calculated be discretization. The only additional problem
we encounter in the Fredholm determinants in the numerator of
equation (\ref{fredzero}) and also in the linear integral
equations (\ref{liegbarzero}) is the weakly singular behaviour
of the integration measures $\rd \D_\pm^\a$, equation
(\ref{measuredelta}). It can be dealt with by means of standard
Gau{\ss}-Jacobi quadrature.
\begin{figure}[t]
\begin{tabular}{@{}ll@{}}
\includegraphics[width=.48\textwidth]{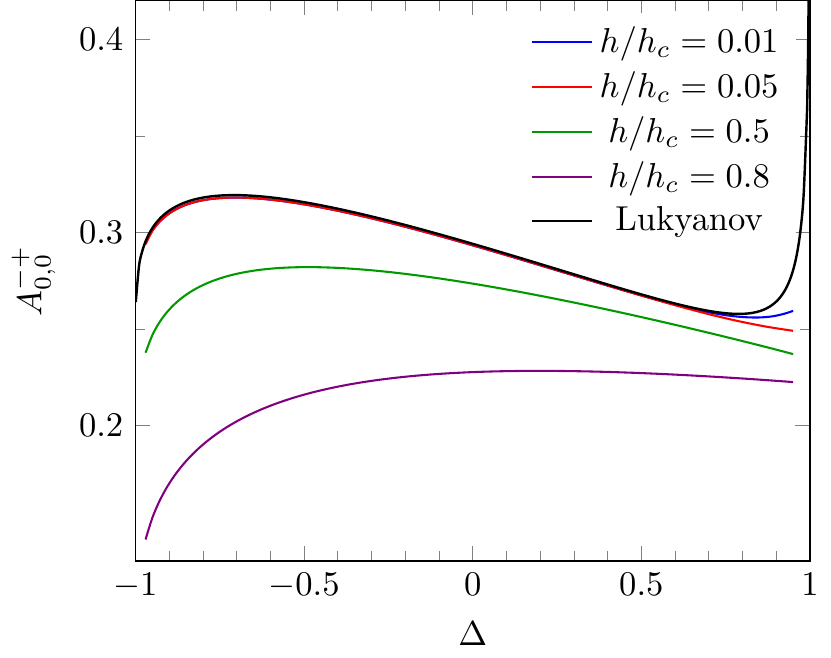} &
\includegraphics[width=.48\textwidth]{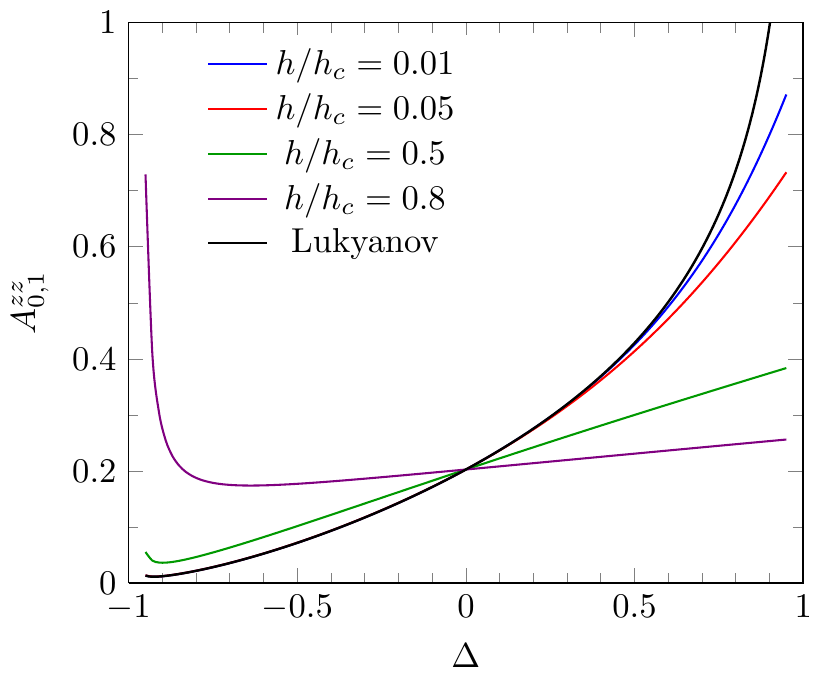} \\
\includegraphics[width=.48\textwidth]{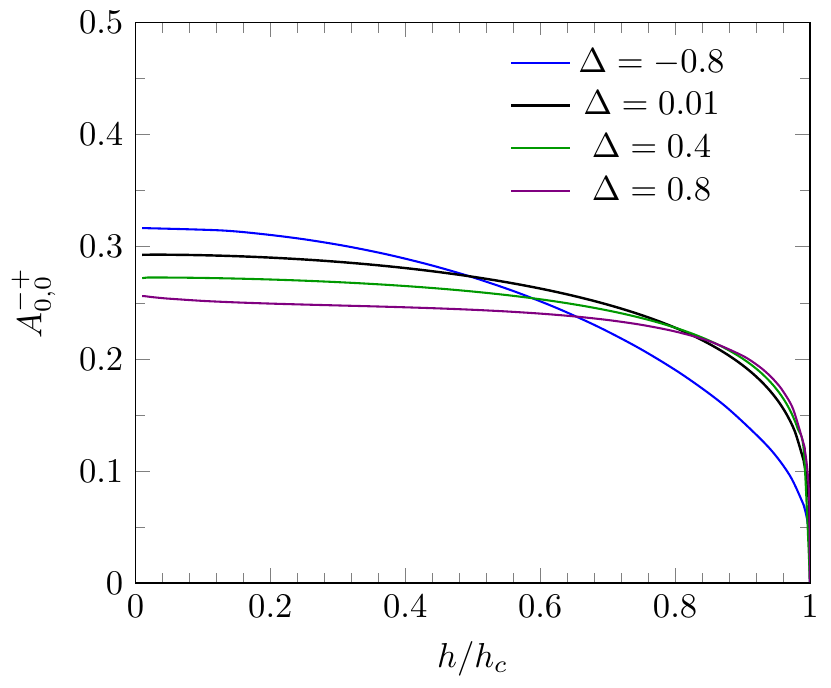} &
\includegraphics[width=.48\textwidth]{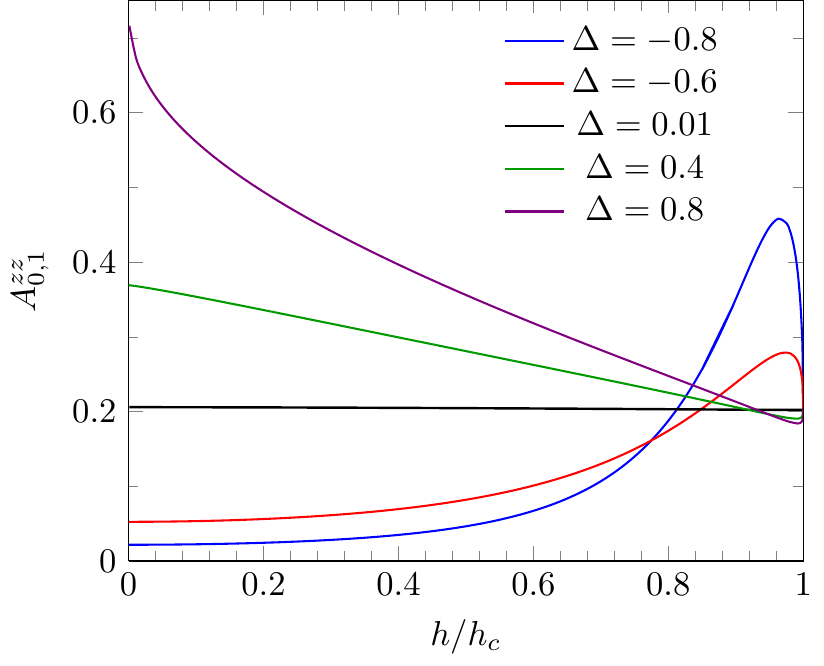}
\end{tabular}
\caption{\label{fig:ampsvsdelta} Amplitudes in the leading asymptotic
terms as functions of the anisotropy parameter for various values
of the magnetic field (upper panels) and as functions of the
magnetic field for various values of the anisotropy parameter
(lower panels). Transversal case in the left panels and longitudinal
case in the right panels. In the longitudinal case the amplitude
is leading only for $\D > 0$. For decreasing values of the magnetic
field we observe numerical convergence to the $h = 0$ result
\cite{Lukyanov98,Lukyanov99} of Lukyanov.
}%
\end{figure}%

In Figure \ref{fig:ampsvsdelta} we show the amplitudes $A^{zz}_{0,1}$
and $A^{-+}_{0,0}$ as functions of the anisotropy parameter $\D$
for various values of the magnetic field. Recall that our formulae
are valid for any positive magnetic field $h$. They are complementary
to the $h = 0$ results \cite{Lukyanov99}
\begin{align} \label{luckyamps}
     & A^{zz}_{0,1} = \frac{8}{\p^2}
                      \Biggl[ \frac{\G \bigl(\frac{\p}{2 \g} - \2\bigr)}
		                   {2 \sqrt{\p} \; \G \bigl(\frac{\p}{2 \g} \bigr)}
				           \Biggr]^{\frac{\p}{\p - \g}}
		      \mspace{-9.mu}
                      \exp \Biggl\{
		           \int_0^\infty \frac{\rd k}{k} \biggl[
			   \frac{\sh \bigl( (1 - \frac{2 \g}{\p})k \bigr)}
			        {\sh \bigl( (1 - \frac{\g}{\p})k \bigr)
				 \ch(\frac{\g k}{\p})} -
		           \Bigl(1 - \frac{\g}{\p - \g} \Bigr) \re^{- 2k} \biggr]
			   \Biggr\} \epc \notag \\[1ex]
     & A^{-+}_{0,0} = \frac{\p^2}{4 \g^2}
                      \Biggl[ \frac{\G \bigl(\frac{\p}{2 \g} - \2\bigr)}
		                   {2 \sqrt{\p} \; \G \bigl(\frac{\p}{2 \g} \bigr)}
				           \Biggr]^{1 - \frac{\g}{\p}}
                      \exp \Biggl\{
		           \int_0^\infty \frac{\rd k}{k} \biggl[
		           \Bigl(1 - \frac{\g}{\p} \Bigr) \re^{- 2k}
			   - \frac{\sh \bigl( (1 - \frac{\g}{\p})k \bigr)}
			          {\sh (k) \ch(\frac{\g k}{\p})} \biggr] \Biggr\}
\end{align}
obtained in a quantum field theoretic setting \cite{LuPe75,Cardy86,%
Lukyanov98,LuTe03} starting from the Gaussian model and taking into
account the most relevant irrelevant operators. The amplitudes
(\ref{luckyamps}) are plotted in black in the upper panels of
Figure~\ref{fig:ampsvsdelta}. Clearly, our field-dependent amplitudes
numerically converge to these amplitudes for $h \rightarrow 0$.
The discrepancies close to $\D = 1$ are not numerical artifacts.
They rather indicate the highly singular behaviour of the amplitudes
close to this point. So far we do not know how to obtain the
expressions (\ref{luckyamps}) directly from our formulae
(\ref{longitudinalasymp}) and (\ref{transverseasymp}). In
the lower panels of Figure~\ref{fig:ampsvsdelta} we show the
field dependence of the amplitudes. Their slope as functions
of $h$ becomes infinite at the critical field $h_c$, where
the phase transition to the fully polarized phase occurs.
\begin{figure}[t]
\begin{tabular}{@{}ll@{}}
\includegraphics[width=.48\textwidth]{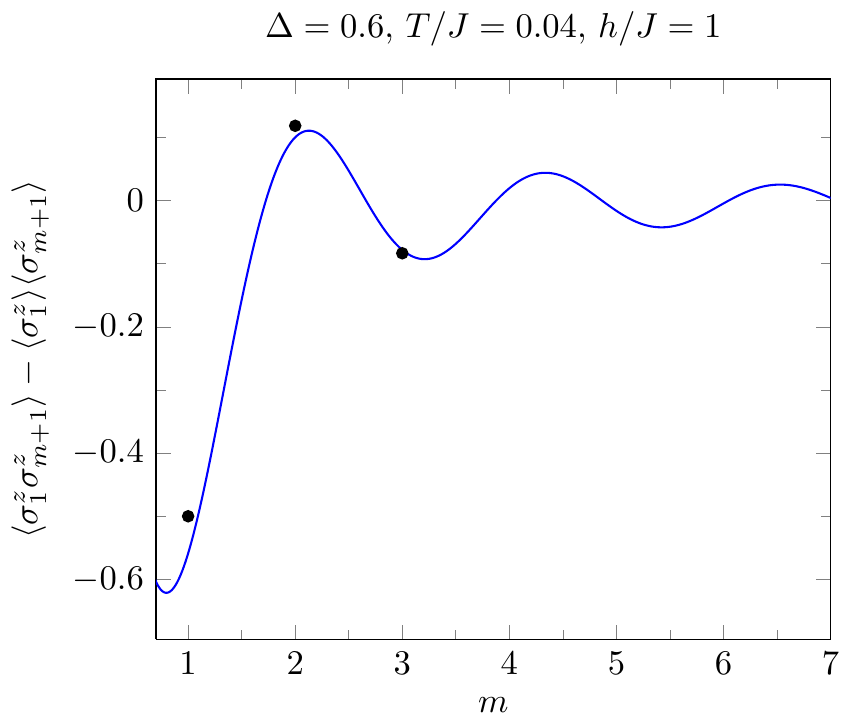} &
\includegraphics[width=.48\textwidth]{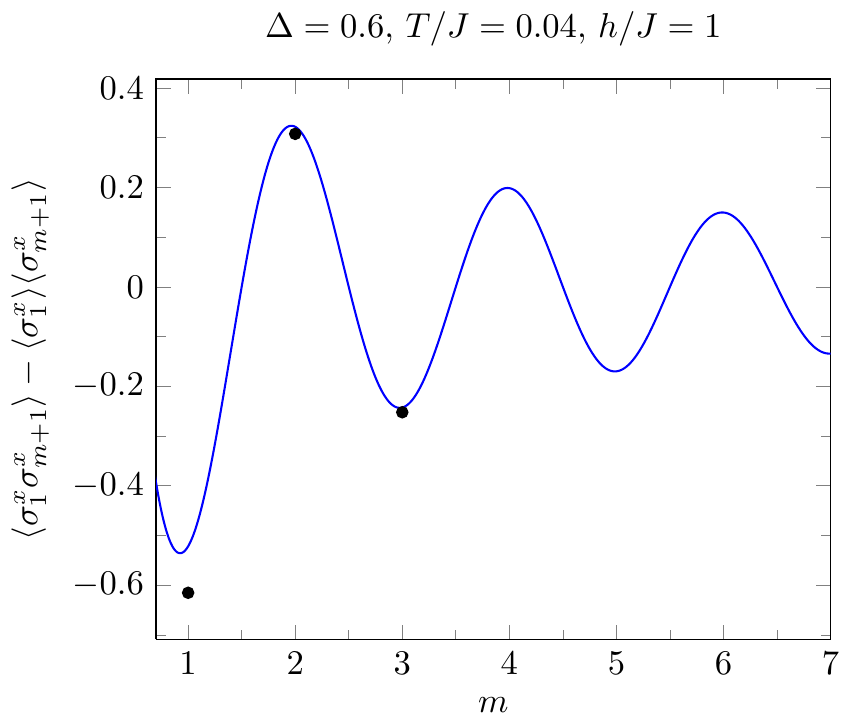}
\end{tabular}
\caption{\label{fig:corrvsdistance} 
Correlation functions according to our asymptotic formulae
(\ref{longitudinalasymp}) (left panel) and (\ref{transverseasymp})
(right panel). Black dots depict the exact values at short distances
obtained in \cite{BDGKSW08}.
}%
\end{figure}%

Knowing the amplitudes we can plot the correlation functions
as functions of the distance. This is shown in
Figure~\ref{fig:corrvsdistance}, where we also compare the
asymptotic behaviour with the exact short-distance behaviour
of the two-point functions as obtained in \cite{BDGKSW08}.
As we see, the asymptotic formulae provide rather accurate
approximations down to the smallest possible distance $m = 1$.
This seems less amazing if we take into account that fairly good
agreement between the leading order asymptotic formulae and
the exact short-distance results was even obtained at the
isotropic point, where logarithmic corrections are important
\cite{SABGKTT11}.

This encouraged us to compare the magnetic field dependence of
the third-neighbour correlation functions as obtained from
the asymptotic formulae with the exact results, which is
shown in Figure~\ref{fig:thirdneighbourh}. Again the
agreement is good. It becomes worse if we approach the
isotropic point. The temperature behaviour of the third-neighbour
longitudinal two-point function for two different values of
anisotropy is shown in Figure~\ref{fig:thirdneighbourhlongt}.
\begin{figure}[t]
\begin{tabular}{@{}ll@{}}
\includegraphics[width=.48\textwidth]{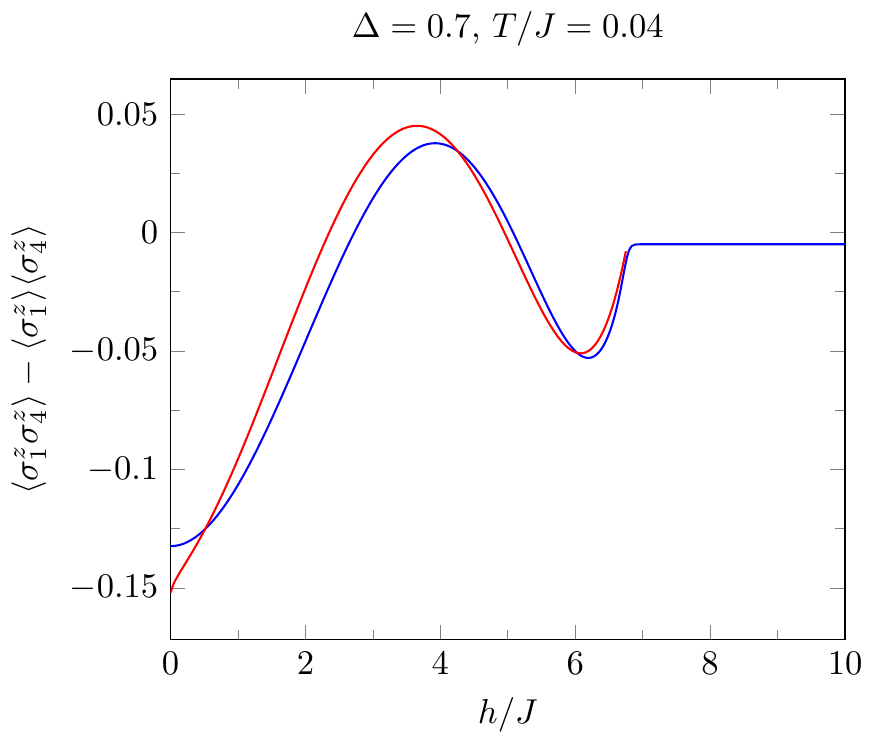} &
\includegraphics[width=.48\textwidth]{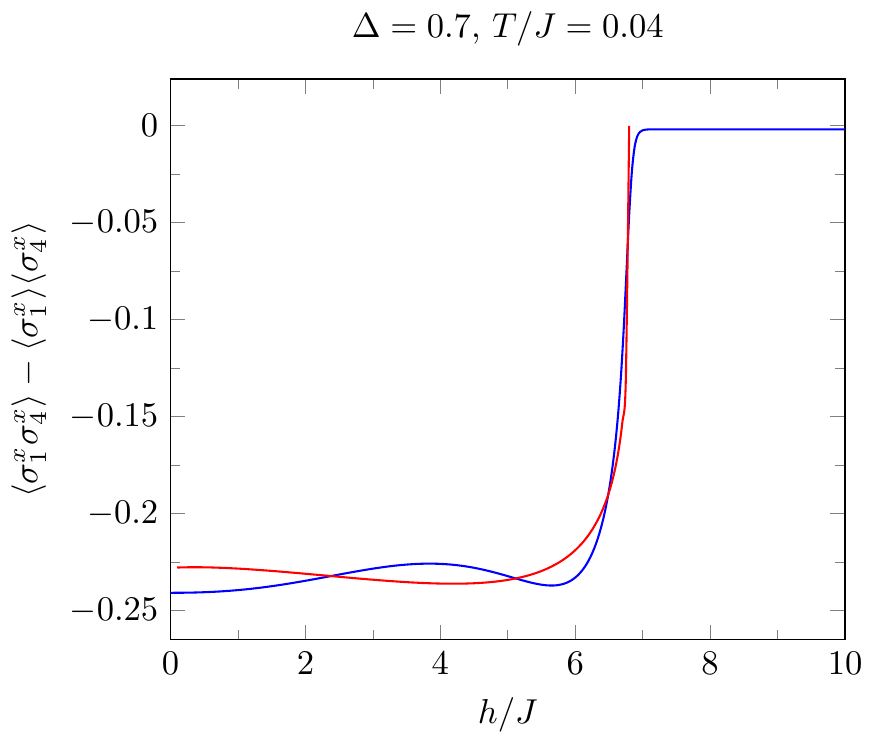}
\end{tabular}
\caption{\label{fig:thirdneighbourh} Comparison of magnetic
field dependence of the third-neighbour correlators, exact
\cite{BDGKSW08} blue lines, asymptotic red lines. Longitudinal
case in the left panel, transversal case in the right panel.
}%
\end{figure}%
\begin{figure}[t]
\begin{tabular}{@{}ll@{}}
\includegraphics[width=.48\textwidth]{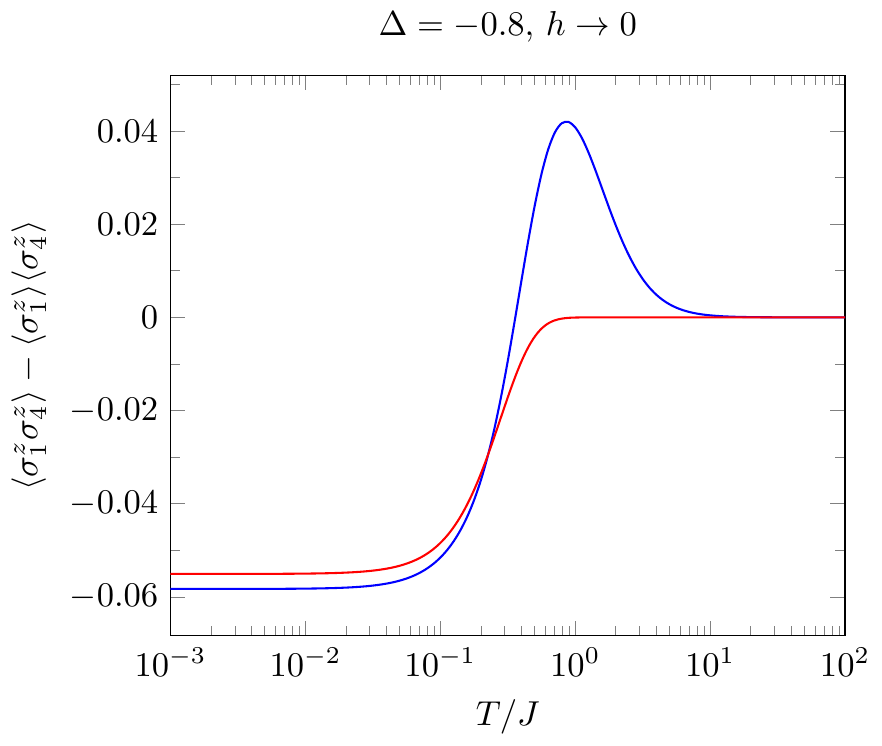} &
\includegraphics[width=.48\textwidth]{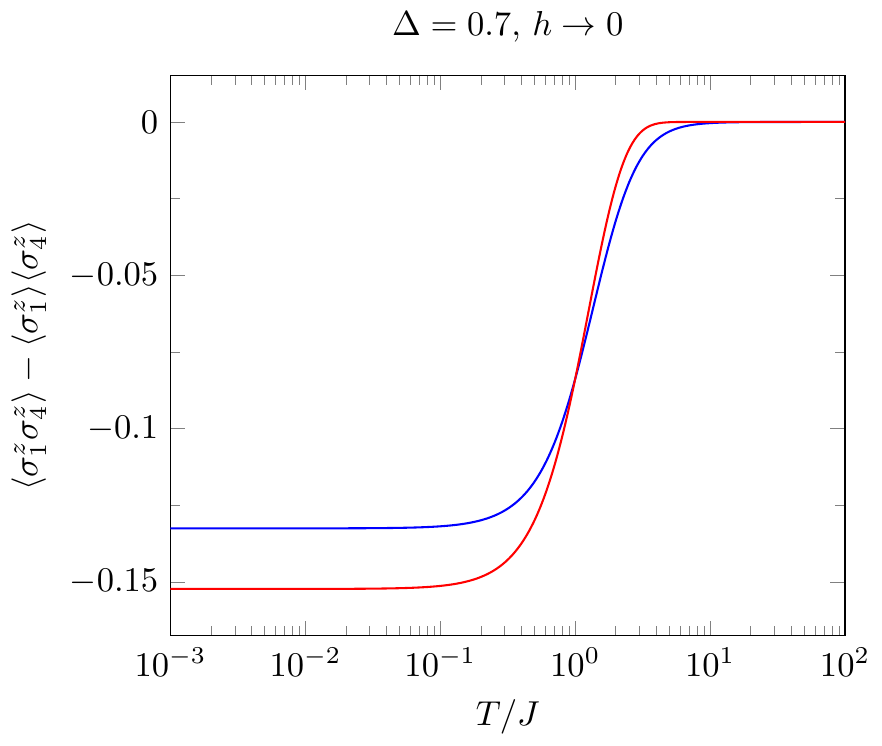}
\end{tabular}
\caption{\label{fig:thirdneighbourhlongt} Comparison of temperature
dependence of third-neighbour correlators, exact \cite{BDGKSW08}
blue lines, asymptotic red lines. The sign change of the correlation
function for negative $\D$ (blue line in left panel) has been explained
by a `quantum classical crossover' \cite{FaMc99,FKM99,BDGKSW08}.
Accordingly it is not seen in our low-temperature asymptotic
expansion which covers only the `quantum regime' of the phase
diagram.
}%
\end{figure}%
For negative $\D$, where these correlation functions exhibit a
`quantum classical crossover' \cite{FaMc99,FKM99,BDGKSW08},
the asymptotic result ceases to be a good approximation for
temperatures above the crossover temperature. For positve
$\D$, however, where no such crossover occurs, the asymptotic
formula provides a reasonably accurate description of the
correlation functions up to arbitrary temperatures.
\section{Conclusion}
In this work we have continued the low-temperature analysis of
the correlation lengths and amplitudes occurring in the form
factor expansion of the two-point correlation functions of the
spin-$1/2$ XXZ chain. We started our analysis with expressions
for the amplitudes which were obtained in \cite{DGK13a}, where
we combined algebraic Bethe ansatz methods for the calculation
of form factors with the quantum transfer matrix approach to
thermodynamics \cite{Suzuki85,SuIn87} and with the method of
nonlinear integral equations \cite{Kluemper92,Kluemper93}.
Similar formulae hold for finite-size systems at zero
temperature. In both cases the formulae are exact up to this
stage. Their low-temperature (or finite-size) analysis is a
logically independent task. Here we dealt with the low-temperature
case for $|\D| < 1$ and positive magnetic field. We provided,
in particular, an extensive discussion of the low-temperature behaviour
of the solution of the fundamental nonlinear integral equations.
Higher temperatures and the case $\D > 1$ will be addressed in
separate works.

Based on the low-temperature analysis of the nonlinear integral
equations we obtained the leading low-temperature expressions
for the correlation lengths and for the amplitudes associated
with those excitations that `collapse to the Fermi points' for $T
\rightarrow 0$. These amplitudes show `critical behaviour':
they vanish as fractional powers of the temperature with
critical exponents determined by the scaling dimensions
of the underlying conformal field theory. Using a summation
formula obtained in \cite{KOV93,Olshanski03,KKMST11b}, we
could sum up the corresponding terms in the form factor
series and obtained the leading low-temperature large-distance
asymptotics of the transversal two-point functions. A similar
asymptotic formula for the longitudinal case was obtained
in \cite{DGK13a}. Our formulae include explicit expressions
for the amplitudes for any positive magnetic field which
do not follow directly from conformal field theory and are
complementary to Lukyanov's formulae \cite{Lukyanov99}
for $h = 0$. Similar but different expressions for the
amplitudes were obtained previously in the context of scaling
analysis for large system size at $T = 0$ \cite{KKMST09b,%
KKMST11a,SPCI12}. Our formulae have turned out to be
numerically efficient. They can be evaluated on a laptop
computer and the resulting curves for amplitudes and correlation
functions match well with known results.

\smallskip \noindent {\bf Acknowledgment.}
The authors are grateful to A. Kl\"umper, J. Suzuki and A. Wei{\ss}e
for helpful discussions and encouragement. The numerical data for the
exact short-range correlation functions were generously provided
by M. Brockmann. MD and FG acknowledge financial support by the
Volkswagen Foundation and by the Deutsche Forschungsgemeinschaft
under grant number Go 825/7-1. KKK is supported by the CNRS. His work has
been partly financed by the grant PEPS-PTI `Asymptotique d'int\'egrales
multiples' and by a Burgundy region PARI 2013 FABER grant `Structures et
asymptotiques d'int\'egrales multiples'.

\clearpage

{\appendix
\Appendix{Two proofs}
\label{app:proofs}
In this appendix we would like to provide the proofs of
Lemma~\ref{lem:sing1} and Lemma~\ref{lem:sing2} that were left
out in the main text.
\subsection{Proof of Lemma \ref{lem:sing1}}
\label{app:prooflemma2}
\noindent
By the same reasoning as in the proof of Lemma~\ref{lem:sommerfeld} we
obtain for any $\la_\pm$ not on ${\cal C}_u$
\begin{align} \label{prepdom}
     \D I_u (\la_\pm) := & I_u (\la_\pm)
        + \int_{Q_-}^{Q_+} \rd \la \: \cth(\la - \la_\pm)
	                   \frac{\uq (\la)}T \notag \\[1ex]
          = & \int_{{\cal C}_u} \rd \la \: \cth(\la - \la_\pm)
	                      \ln \Bigl(1 + \re^{- \frac{u(\la) \sign (v(\la))}T} \Bigr)
			      \notag \\[1ex]
          = & \int_{J_-^\de \cup J_+^\de} \rd \la \:
                  \cth(\la - \la_\pm) \ln \Bigl(1 + \re^{- \frac{|v(\la)|}T} \Bigr)
            + \CO (T^\infty) \epc
\end{align}
where $v(\la) = \Re u (\la)$, and $\de > 0$ and $J_\pm^\de$ are chosen as
in the proof of Lemma~\ref{lem:sommerfeld}.

If $\la_\pm \in V_+$, then
\begin{align} \label{domplus}
     \int_{J_-^\de \cup J_+^\de} \rd \la \: &
         \cth(\la - \la_\pm) \ln \Bigl(1 + \re^{- \frac{|v(\la)|}T} \Bigr)
	 \notag \\[1ex]
     & = \int_{J_+^\de} \rd \la \:
         \cth(\la - \la_\pm) \ln \Bigl(1 + \re^{- \frac{|v(\la)|}T} \Bigr)
         + \CO (T) \notag \\[1ex]
     & = \int_{J_+^\de} \rd \la \: \biggl[
         \cth(\la - \la_\pm) - \frac{u'(\la)}{u(\la) - u(\la_\pm)} \biggr]
	 \ln \Bigl(1 + \re^{- \frac{|v(\la)|}T} \Bigr) \notag \\
     & \mspace{72.mu} + \int_{J_+^\de} \rd \la \:
         \frac{v'(\la)}{v(\la) + \i w(\la) - u(\la_\pm)}
	 \ln \Bigl(1 + \re^{- \frac{|v(\la)|}T} \Bigr) + \CO (T) \notag \\[1ex]
     & = \int_{- \de/T}^{\de/T} \rd x \:
         \frac{\ln \bigl(1 + \re^{- |x|} \bigr)}{x - \uq (\la_\pm)/T}
	 + \CO (T) \epp
\end{align}
Here we used (\ref{correctminus}) in the first equation and a similar
identity with $J_+^\de$ replacing $J_-^\de$ in the third equation. We
further employed the fact that $w (\la) = 2 \p \i T$ on $J_+^\de$ in
the second and third equation. As in the proof of Lemma~\ref{lem:sommerfeld}
we substituted $x = v(\la)/T$.

For $\la_\pm$ in the vicinity of $Q_+$ we cannot assume that $a_\pm
:= \uq (\la_\pm)/T$ is large for small $T$. We therefore treat $a_\pm$
as an independent parameter. Then (\ref{prepdom}) and (\ref{domplus})
imply that
\begin{equation}
     \D I_u (\la_\pm) = \int_{- \infty}^\infty \rd x \:
         \frac{\ln \bigl(1 + \re^{- |x|} \bigr)}{x - a_\pm} + \CO (T) \epp
\end{equation}
Note that $a_+$ is in the upper half plane and $a_-$ is in the lower
half plane. The integral on the right hand side can be calculated by
means of the residue theorem (for a similar calculation in the
context of the Bose gas with delta function interaction see
\cite{KMS11b}),
\begin{multline}
     \int_{- \infty}^\infty \rd x \:
        \frac{\ln \bigl(1 + \re^{- |x|} \bigr)}{x - a_\pm} \\
	= \mp 2 \p \i \ln \Bigl\{ \G \Bigl( \2 \pm
	                          \frac{a_\pm}{2 \p \i} \Bigr) \Bigr\}
          \pm \p \i \ln (2 \p)
	  + a_\pm \ln \Bigl( \frac{a_\pm}{\pm 2 \p \i} \Bigr) - a_\pm \epp
\end{multline}
On the other hand
\begin{multline} \label{subtractuqlapm}
     \int_{Q_-}^{Q_+} \rd \la \: \cth(\la - \la_\pm) \frac{\uq (\la)}T \\
        = \int_{Q_-}^{Q_+} \rd \la \: \cth(\la - \la_\pm)
	     \frac{\uq (\la) - \uq(\la_\pm)}T
	     + \frac{\uq(\la_\pm)}T
	       \ln \biggl(\frac{\sh (Q_+ - \la_\pm)}{\sh (Q_- - \la_\pm)} \biggr)
	       \epc
\end{multline}
and (\ref{iuright}) follows from (\ref{prepdom})-(\ref{subtractuqlapm}).
The proof of (\ref{iuleft}) is similar and is left to the reader.
\subsection{Proof of Lemma \ref{lem:sing2}}
\label{app:prooflemma3}
\noindent
In preparation of the proof we shall need some results which are
either easy to see or were proved elsewhere. First of all we have
the following corollary of Lemma~\ref{lem:sommerfeld} and
Lemma~\ref{lem:sing1}.
\begin{corollary} \label{cor:zcauchy}
Let $\la \in V_\pm$ and $\s = \sign (\Im \la )$. Then $\la$ is above
${\cal C}_{0,s} - \i \g/2$ if $\s > 0$, below ${\cal C}_{0,s} - \i \g/2$
if $\s < 0$, provided that $T$ is small enough, and
\begin{multline} \label{closetoqpm}
     L_{{\cal C}_{0,s} - \i \g/2} [z] (\la) =
     \int_{{\cal C}_{0,s} - \i \g/2} \rd \m \: \cth(\m - \la) z(\m + \i \g/2) \\[1ex]
        = - \int_{-Q}^Q \frac{\rd \m}{2 \p \i} \:
	      \cth(\m - \la) \bigl(\uq_1 (\m) - \uq_1 (\la)\bigr)
	      + \frac{\uq_1 (\la)}{2 \p \i}
	        \ln \biggl( \frac{\e^{\pm 1} (\la) \sh(\la + Q)}{\sh(\la - Q)} \biggr) \\
	      \mp \frac{\uq_1 (\la)}{2 \p \i} \ln( \pm \s 2 \p \i T)
	      - \s \ln \biggl( \frac{\G \bigl(1/2 \pm \s \uq (\la)/2 \p \i T\bigr)}
	                           {\G \bigl(1/2 \pm \s \uq_0 (\la)/2 \p \i T\bigr)}
				   \biggr) + \CO (T) \epp
\end{multline}
On the other hand, if $\la$ is uniformly away from $\pm Q$, it follows that
\begin{equation}
     L_{{\cal C}_{0,s} - \i \g/2} [z] (\la) =
        - \int_{- Q}^Q \frac{\rd \m}{2 \p \i} \: \cth(\m - \la) \uq_1 (\m) + \CO (T) \epp
\end{equation}
\end{corollary}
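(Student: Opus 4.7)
My approach starts from the definitions of $u$ and $u_0$ in (\ref{nlieu})--(\ref{nlieu0}), which entail $\fa_n(\m + \i \g/2 |\a) = \re^{-u(\m)/T}$ and $\fa_0(\m + \i \g/2) = \re^{-u_0(\m)/T}$, so that
\begin{equation*}
 z(\m + \i \g/2) = \frac{1}{2 \p \i} \Bigl[ \ln\bigl(1 + \re^{-u_0(\m)/T}\bigr) - \ln\bigl(1 + \re^{-u(\m)/T}\bigr) \Bigr] \epp
\end{equation*}
Substituting this into $L_{\mathcal{C}_{0,s} - \i \g/2}[z](\la)$ and taking the orientation of $\mathcal{C}_{0,s} - \i \g/2$ (inherited from $\mathcal{C}_0$) into account relative to the convention of Lemma~\ref{lem:sommerfeld} yields the basic identity $2 \p \i \, L = I_{u_0}(\la) - I_u(\la)$, with $I_u$ and $I_{u_0}$ the integrals controlled by Lemma~\ref{lem:sing1} on the contour $\mathcal{C}_u := \mathcal{C}_{0,s} - \i \g/2$. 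The Fermi points of $u$ on $\mathcal{C}_u$ are $Q_\pm$ from (\ref{boundariesot}), those of $u_0$ lie at $\pm Q + \mathcal{O}(T^2)$ with $p = 0$ (so that $\uq_0 = u_0$), and since $u'(Q) \simeq \e'(Q) \in \mathbb{R}$, $\mathcal{C}_u$ is horizontal to leading order near $Q_\pm$; consequently $\s = \sign(\Im \la)$ selects the same branch ($\la_+$ or $\la_-$) of Lemma~\ref{lem:sing1} for both integrals.

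For the second case of the corollary, when $\la$ is uniformly away from $\pm Q$, I would apply the non-singular part of Lemma~\ref{lem:sing1} to both $I_u$ and $I_{u_0}$. The boundary mismatch $Q_\pm - (\pm Q) = \mathcal{O}(T)$ contributes only $\mathcal{O}(T)$, because $\uq/T$ is bounded on an $\mathcal{O}(T)$-neighbourhood of the Fermi points. The two resulting integrals then combine via $(\uq - u_0)/T = \uq_1 + \mathcal{O}(T)$, and division by $2 \p \i$ yields the claimed formula.

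For the first case, $\la \in V_\pm$, I would apply instead (\ref{iuright}) or (\ref{iuleft}) according to the vicinity of $\la$. Subtracting the expansion of $I_u$ from that of $I_{u_0}$ produces four groups of contributions: the regularized integrals combine, as in the preceding paragraph, into the first term of (\ref{closetoqpm}); the $\pm \p \i \ln(2 \p)$ constants cancel identically; the Gamma-function contributions combine, after division by $2 \p \i$, into the stated ratio $\G(\tfrac12 \pm \s \uq(\la)/2 \p \i T)/\G(\tfrac12 \pm \s \uq_0(\la)/2 \p \i T)$ multiplied by $-\s$; the remaining singular group consists of $(\uq/T)\{\ln(\uq/{\pm 2 \p \i T}) - 1 - \ln(\sh(Q_+-\la)/\sh(Q_--\la))\}$ minus its $u_0$ analogue.

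This last group is the main technical obstacle. Although each of its two pieces is individually of order $T^{-1}$, their cancellation produces a finite remainder. The systematic expansion proceeds from $\uq = u_0 + T \uq_1 + \mathcal{O}(T^2)$ and $Q_\pm = \pm Q \mp \uq_1(\pm Q) T / \e'(Q) + \mathcal{O}(T^2)$ of (\ref{boundariesot}): the $\mathcal{O}(T^{-1})$ pieces proportional to $\e/T$ cancel between $u$ and $u_0$, the $-1$ terms are absorbed by the expansion of $(u_0/T) \ln(\uq/u_0)$, and the careful Taylor expansion of $\ln \sh(Q_\pm - \la)$ near $\la = Q_\pm$---where $\uq/T$ itself behaves like $\e'(Q)(\la - Q)/T + \uq_1(\la)$---delivers the remaining terms $\frac{\uq_1(\la)}{2 \p \i}\ln\bigl(\e^{\pm 1}(\la)\sh(\la + Q)/\sh(\la - Q)\bigr)$ and $\mp \frac{\uq_1(\la)}{2 \p \i}\ln(\pm \s \, 2 \p \i \, T)$ of (\ref{closetoqpm}). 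Throughout this reorganization the branches of the logarithm must be kept consistent with those fixed in Lemmas~\ref{lem:sommerfeld} and~\ref{lem:sing1}.
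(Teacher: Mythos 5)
The paper itself offers no written proof of this corollary---it is asserted to follow from Lemma~\ref{lem:sommerfeld} and Lemma~\ref{lem:sing1}---and your strategy (write $2\p\i\, z(\m+\i\g/2)$ as the difference of the two logarithms, apply Lemma~\ref{lem:sing1} to $I_{u_0}$ and $I_u$ separately, and subtract) is exactly the intended route. Two points, however, keep your sketch from being a complete proof. First, the orientation bookkeeping: the raw substitution on ${\cal C}_{0,s}-\i\g/2$ does give $2\p\i L = I_{u_0}-I_u$, but Lemma~\ref{lem:sing1} is stated for a contour on which $Q_-$ precedes $Q_+$ along ${\cal C}_u^-$, which for the canonical counterclockwise orientation is the \emph{reversed} contour $-({\cal C}_{0,s}-\i\g/2)$ (this is why the paper's proof of Lemma~\ref{lem:sing2} works with $-({\cal C}_{0,s}-\i\g/2)$ throughout). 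One must therefore flip once more, obtaining $2\p\i L = I_u - I_{u_0}$ with both integrals taken in the Lemma-compatible orientation; applying (\ref{iuright})--(\ref{iuleft}) to your stated identity as written produces the negative of (\ref{closetoqpm}) --- for instance $+\s\ln\bigl(\G/\G\bigr)$ instead of $-\s\ln\bigl(\G/\G\bigr)$.

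Second, and more substantively, in the singular case you assert that the regularized integrals ``combine, as in the preceding paragraph''. The preceding paragraph's $\CO(T)$ estimate of the boundary mismatch between $[Q_-,Q_+]$ and $[-Q,Q]$ used that $\cth(\m-\la)$ is bounded and that $\uq(\m)/T=\CO(1)$ on the mismatch segments. For $\la\in V_+$ the relevant integrand is the regularized one, and on $[Q,Q_+]$ one has $\cth(\m-\la)\bigl(\uq(\m)-\uq(\la)\bigr)/T \approx \e'(Q)/T$, so that a segment of length $|Q_+-Q|=\CO(T)$ contributes $\uq_1(Q)+\CO(T)$, i.e.\ an $\CO(1)$ term, not $\CO(T)$. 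Likewise replacing $Q_\pm$ by $\pm Q$ in the arguments of the $\ln\sh$ factors of (\ref{iuright}) generates further $\CO(1)$ pieces, since $\ln\sh(Q_+-\la)-\ln\sh(Q-\la)=\CO(1)$ when $\la$ is within $\CO(T)$ of $Q_+$. These finite contributions must be tracked explicitly and shown to combine with the ``$-1$'' terms and with the Stirling tail of the ratio $\G\bigl(1/2\pm\s\uq(\la)/2\p\i T\bigr)/\G\bigl(1/2\pm\s\uq_0(\la)/2\p\i T\bigr)$ so as to reproduce precisely $\frac{\uq_1(\la)}{2\p\i}\ln\bigl(\e^{\pm1}(\la)\sh(\la+Q)/\sh(\la-Q)\bigr)\mp\frac{\uq_1(\la)}{2\p\i}\ln(\pm\s 2\p\i T)$; your sketch names the mechanism but does not carry out this cancellation, which is where the actual content of the corollary lies.
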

The following useful lemma was proved in Appendix~B of \cite{KKMST09c}.
\begin{lemma} \label{lem:limint}
Let $I \subset {\mathbb R}$ be an open interval containing $0$.
Let ${\cal R}: I \times {\mathbb R}_+ \rightarrow {\mathbb C}$ such
that $u \rightarrow {\cal R} (u, t)$ is ${\cal C}^1 (I)$ for all
but finitely many $t$ and $t \rightarrow {\cal R} (u, t)$ is
Riemann integrable uniformly in $u$, \textit{i.e.}, $\forall\ \e > 0,
\forall\ M > 0, \forall\ u_0 \in I\ \exists\ \de > 0$ such that
$u \in (u_0 - \de, u_0 + \de) \cap I\ \then$
\begin{equation}
     \biggl| \int_M^\infty \rd t \bigl( \6_1^k {\cal R} (u, t)
                - \6_1^k {\cal R} (u_0, t) \bigr) \biggr| < \e
\end{equation}
for $k = 0, 1$.

Then, for $g \in {\cal C}^1 (I)$,
\begin{equation} \label{limint}
     \int_0^\de \rd t \: x g(t) {\cal R} (t, xt)
        = g(0) \int_0^\infty \rd t \: {\cal R} (0, t) + o(1) \epc
\end{equation}
where $o(1)$ denotes terms that vanish in the ordered limit
first $x \de \rightarrow \infty$ and then $\de \rightarrow 0$.
\end{lemma}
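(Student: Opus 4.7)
The plan is to exploit the scaling structure via the change of variable $s = xt$, which recasts the integral as
\[
\int_0^\delta \rd t \: x g(t) {\cal R}(t, xt) = \int_0^{x\delta} \rd s \: g(s/x) {\cal R}(s/x, s) \epp
\]
The comparison with the target $g(0) \int_0^\infty {\cal R}(0, s) \rd s$ is then organized by splitting the rescaled domain at an auxiliary scale $M > 0$ into a bulk $[0, M]$ and a tail $[M, x\delta]$, each handled with different techniques and then recombined.

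For the bulk, as $x \to \infty$ with $\delta$ fixed, $g(s/x) \to g(0)$ uniformly in $s \in [0, M]$ by continuity of $g$, while ${\cal R}(s/x, s) \to {\cal R}(0, s)$ for almost every $s$ by the ${\cal C}^1$ hypothesis on ${\cal R}$ in the first argument. Local boundedness of $(u, s) \mapsto {\cal R}(u, s)$ on the compact rectangle $[0, \delta] \times [0, M]$ makes dominated convergence applicable, giving
\[
\int_0^M \rd s \: g(s/x) {\cal R}(s/x, s) \;\longrightarrow\; g(0) \int_0^M \rd s \: {\cal R}(0, s)
\qquad (x \to \infty) \epp
\]
For the tail, I would write $g(s/x) - g(0) = g'(\xi_{s,x}) \cdot s/x$ via the mean value theorem and ${\cal R}(s/x, s) - {\cal R}(0, s) = \int_0^{s/x} \partial_1 {\cal R}(u, s) \rd u$, then perform a Fubini exchange that recasts the tail difference as integrals of the form $\int_{\max(M, xu)}^{x\delta} \partial_1^k {\cal R}(u, s) \rd s$ against slowly varying factors in $u$. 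The uniform Riemann integrability hypothesis, applied at $u_0 = 0$ for both $k = 0$ and $k = 1$, then makes these inner integrals arbitrarily small uniformly in $u \in [0, \delta]$, provided $\delta$ is chosen below a threshold $\delta_0(\varepsilon)$.

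Combining the two contributions, the original integral equals $g(0) \int_0^\infty {\cal R}(0, s) \rd s$ plus two error terms: one that vanishes as $x \to \infty$ with $(\delta, M)$ fixed, and one controlled by $M$ and $\delta$ alone, which vanishes as first $M \to \infty$ (using Riemann integrability of ${\cal R}(0, \cdot)$, implicit in the hypothesis) and then $\delta \to 0^+$. This matches exactly the ordered limit $x\delta \to \infty$ followed by $\delta \to 0$ given in the statement. The main technical obstacle is justifying the Fubini manipulation and the uniform tail control over the diagonal region $\{(u, s) : u \in [M/x, \delta], \ s \in [xu, x\delta]\}$, where $u$ is small but $s$ ranges to infinity; this is precisely the regime for which the two-fold ($k = 0, 1$) uniform Riemann integrability hypothesis has been crafted.
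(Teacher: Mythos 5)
The paper does not actually prove this lemma: it is quoted with a pointer to Appendix~B of \cite{KKMST09c}, so there is no in-paper proof to compare your attempt against. Your overall strategy --- rescaling $s = xt$, splitting at an auxiliary cutoff $M$, dominated convergence on the bulk $[0,M]$ and hypothesis-driven control of the tail --- is the natural route and almost certainly the one taken in the cited reference. Two soft spots already appear before the tail, though. First, the dominating function for the bulk is asserted rather than derived: pointwise ${\cal C}^1$-dependence on $u$ off finitely many $t$-slices plus continuity of the \emph{tail integrals} does not give joint boundedness of $(u,s)\mapsto{\cal R}(u,s)$ on $[0,\delta]\times[0,M]$; this has to be read into the (informal) phrase ``Riemann integrable uniformly in $u$''. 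Second, your mean-value bound $|g(s/x)-g(0)|\le \|g'\|_\infty\,\delta$ ends up multiplying $\int_M^{x\delta}|{\cal R}(s/x,s)|\,\rd s$, a quantity the hypotheses do not control, since they only bound integrals \emph{without} absolute values. The clean fix is to note that ${\cal S}(u,t)=g(u){\cal R}(u,t)$ satisfies the same hypotheses as ${\cal R}$ (using $g\in{\cal C}^1$ and the implicit convergence of $\int_M^\infty{\cal R}(u_0,t)\,\rd t$), so one may assume $g\equiv 1$ from the start and dispense with the mean-value step entirely.

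The genuine gap is in the tail estimate, precisely where you defer to ``the main technical obstacle''. After the Fubini exchange the inner integrals are $\int_{\max(M,xu)}^{x\delta}\partial_1{\cal R}(u,s)\,\rd s$, with a lower limit that sweeps through $[M,x\delta]$ as $u$ ranges over $[M/x,\delta]$. The hypothesis, however, furnishes for each \emph{fixed} cutoff $M'$ a neighbourhood $(-\delta(M'),\delta(M'))$ of $u_0=0$ on which the tail of $\partial_1{\cal R}(u,\cdot)$ is $\varepsilon$-close to that of $\partial_1{\cal R}(0,\cdot)$; invoking it at the $u$-dependent cutoff $M'=xu$ would require $|u|<\delta(xu)$, which is circular because $xu$ grows with $x$. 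Splitting $\int_{\max(M,xu)}^{x\delta}=\int_M^{x\delta}-\int_M^{\max(M,xu)}$ does not escape this: the subtracted piece still contains $\int_{xu}^{\infty}\bigl[\partial_1{\cal R}(u,s)-\partial_1{\cal R}(0,s)\bigr]\rd s$ with the same circularity. To close the argument one must either read the hypothesis as uniform in the cutoff (a single $\delta$ valid for all lower limits $\ge M$, which is presumably the intended meaning) or find a decomposition that only ever invokes the hypothesis at the fixed cutoff $M$, using the Cauchy criterion for the convergent integrals $\int_M^\infty\partial_1^k{\cal R}(0,s)\,\rd s$ to absorb the variable limits. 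As it stands your proposal is a correct plan whose decisive estimate is acknowledged but left open.
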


Now, for the proof of Lemma~\ref{lem:sing2}, we rewrite the double
integral as
\begin{multline}
    A = - \int_{{\cal C}_{0,s}} \rd \la \: \int_{{\cal C}_{0,s}'} \rd \m \:
          z(\la) \cth' (\la - \m) z(\m) = \\
        \int_{- ({\cal C}_{0,s} - \i \g/2)} \frac{\rd \la}{2 \p \i T} \:
	   \biggl[ \frac{u'(\la)}{1 + \re^{u(\la)/T}} -
	           \frac{u_0'(\la)}{1 + \re^{u_0(\la)/T}} \biggr]
           L_{{\cal C}_{0,s}' - \i \g/2} [z] (\la) \epp
\end{multline}

Then we deform the contour $- ({\cal C}_{0,s} - \i \g/2)$ and decompose
it into four disjoint pieces, $- ({\cal C}_{0,s} - \i \g/2) \rightarrow
{\cal C}_+^\de \cup {\cal C}_-^\de \cup J_+^\de \cup J_-^\de$ (see Figure~%
\ref{fig:contour_decomposition}). Here $\de > 0$, and $J_\pm^\de$ are
chosen as in the proof of Lemma~\ref{lem:sommerfeld}. They contain $Q_\pm$,
$\Im u = 2 \p p T$ in $J_\pm^\de$ and $\Re u$ grows (decreases) monotonically
from $- \de$ to $\de$ in $J_+^\de$ ($J_-^\de$). Moreover, $\Re u > 0$ on
${\cal C}_+^\de$, $\Re u < 0$ on ${\cal C}_-^\de$. If $T$ is sufficiently
small the latter will be also true for $\Re u_0$. It follows that
\begin{figure}[t]
\begin{center}
\includegraphics[width=.7\textwidth]{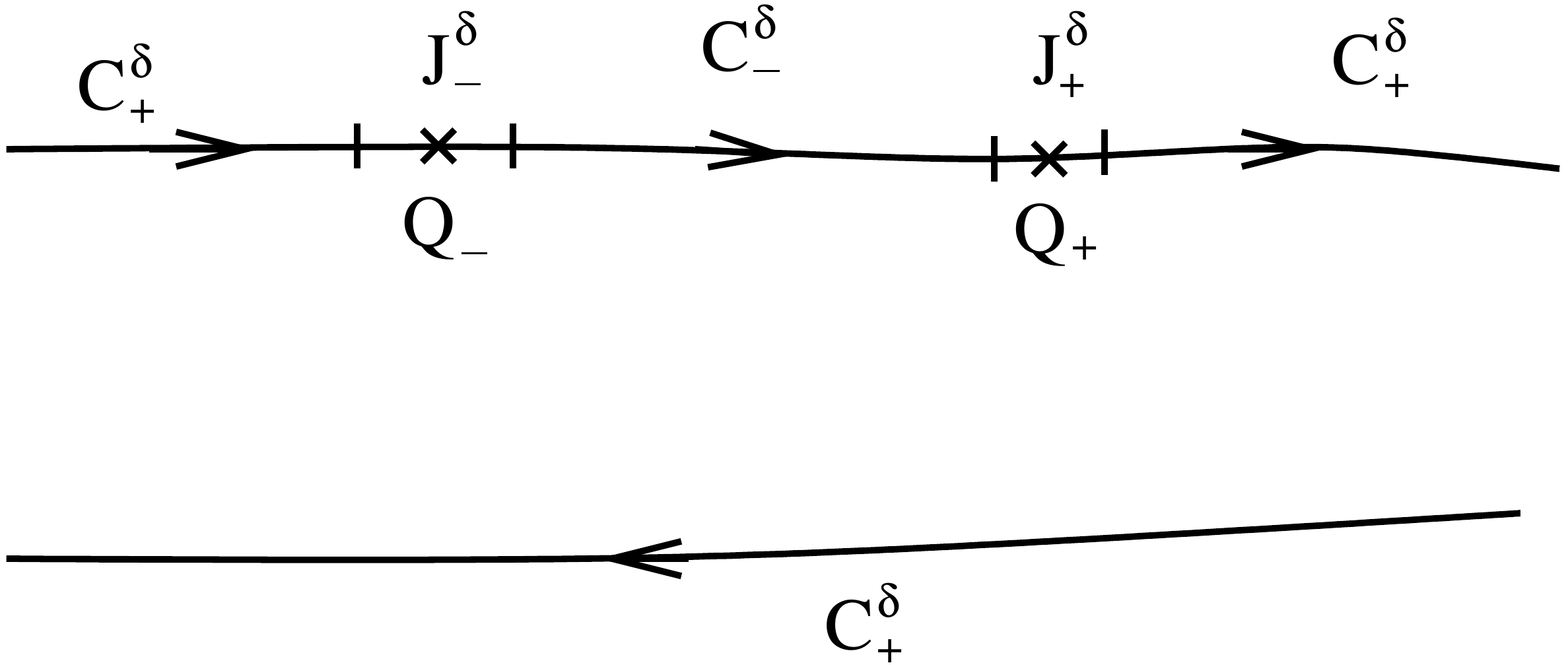}
\caption{\label{fig:contour_decomposition}
Deformation and decomposition of the contour $- ({\cal C}_{0,s} - \i \g/2)$.
}
\end{center}
\end{figure}%
\begin{equation}
     A = I_{{\cal C}_-^\de} + I_{J_-^\de} + I_{J_+^\de} + \CO (T^\infty) \epc
\end{equation}
where
\begin{equation}
     I_{\cal C} =
        \int_{\cal C} \frac{\rd \la}{2 \p \i T} \:
	   \biggl[ \frac{u'(\la)}{1 + \re^{u(\la)/T}} -
	           \frac{u_0'(\la)}{1 + \re^{u_0(\la)/T}} \biggr]
           L_{{\cal C}_{0,s} - \i \g/2} [z] (\la_+)
\end{equation}
and $\la_+$ is the boundary value from above.

Using Corollary~\ref{cor:zcauchy} we obtain
\begin{align} \label{icminus}
     I_{{\cal C}_-^\de} & =
        \int_{{\cal C}_-^\de} \frac{\rd \la}{2 \p \i T} \:
	   \bigl(u'(\la) - u_0'(\la)\bigr) L_{{\cal C}_{0,s} - \i \g/2} [z] (\la_+)
	   + \CO (T^\infty) \notag \\[1ex]
        & = - \int_{{\cal C}_-^\de} \frac{\rd \la}{2 \p \i} \: u_1' (\la)
              \int_{- Q}^Q \frac{\rd \m}{2 \p \i} \:
	         \cth(\m - \la_+) \uq_1 (\m) + \CO (T) \notag \\[1ex]
        & = - \int_{- Q}^Q \frac{\rd \la}{2 \p \i} \: u_1' (\la)
              \int_{- Q}^Q \frac{\rd \m}{2 \p \i} \:
	         \cth(\m - \la_+) \uq_1 (\m) + o(1) \notag \\[1ex]
        & = \int_{- Q}^Q \frac{\rd \la}{2 \p \i} 
            \int_{- Q}^Q \frac{\rd \m}{2 \p \i} \:
               \frac{\uq_1'(\la) \uq_1 (\m) - \uq_1'(\m) \uq_1 (\la)}
	            {2 \tgh (\la - \m)} \notag \\[1ex] & \mspace{216.mu}
               - \frac{1}{8 \p \i} \bigl( \uq_1^2 (Q) - \uq_1^2 (-Q) \bigr)
	       + o(1) \epc
\end{align}
where $o(1)$ denotes terms that vanish in the ordered limit
first $\de/T \rightarrow \infty$ and then $\de \rightarrow 0$.

In order to evaluate the remaining integrals we split them further. Setting
$\de u = (\uq - u_0)/T$ we shall write $I_{J_\pm^\de} = I_{J_\pm^\de}^{(1)}
+ I_{J_\pm^\de}^{(2)}$, where
\begin{subequations}
\begin{align}
     I_{J_\pm^\de}^{(1)} & =
        \int_{J_\pm^\de} \frac{\rd \la}{2 \p \i T} \: u' (\la)
	   \biggl[ \frac{1}{1 + \re^{\uq (\la)/T}} -
	           \frac{1}{1 + \re^{\uq (\la)/T - \de u(\la)}} \biggr]
           L_{{\cal C}_{0,s} - \i \g/2} [z] (\la_+) \epc \\[1ex]
     I_{J_\pm^\de}^{(2)} & =
        \int_{J_\pm^\de} \frac{\rd \la}{2 \p \i} \:
	   \frac{\de u' (\la)}{1 + \re^{\uq (\la)/T - \de u(\la)}}
           L_{{\cal C}_{0,s} - \i \g/2} [z] (\la_+) \epp
\end{align}
\end{subequations}

Recall that $\uq$ maps $J_\pm^\de$ to $[- \de, \de]$ in such a way that
$\uq (Q_\pm) = 0$. Hence, the substitution $t = \uq (\la)$ transforms
$I_{J_+^\de}$ into an integral of the form appearing on the left hand
side of (\ref{limint}). Since the requirements of Lemma~\ref{lem:limint}
are satisfied, we conclude that
\begin{align} \label{ijplus}
     I_{J_+^\de}^{(1)} & = \int_{- \infty}^\infty \frac{\rd t}{2 \p \i} \: 
	   \biggl[ \frac{1}{1 + \re^t} - \frac{1}{1 + \re^{t - \uq_1 (Q)}} \biggr]
	   \biggl\{ - \int_{- Q}^Q \frac{\rd \m}{2 \p \i} \:
	              \cth(\m - Q) \bigl( \uq_1 (\m) - \uq_1 (Q) \bigr) \notag \\[.5ex]
	   & \mspace{36.mu}
           - \frac{\uq_1 (Q)}{2 \p \i}
	     \ln \biggl( \frac{2 \p \i T}{\e'(Q) \sh (2Q)} \biggr)
           - \ln \biggl( \frac{\G (1/2 + t/2 \p \i)}
	                      {\G \bigl(1/2 + t/2 \p \i - \uq_1 (Q)/2 \p \i\bigr)}
			       \biggr) \biggr\} + o(1) \notag \\[1ex]
        & = \frac{\uq_1 (Q)}{2 \p \i}
	    \int_{- Q}^Q \frac{\rd \m}{2 \p \i} \:
	       \cth(\m - Q) \bigl( \uq_1 (\m) - \uq_1 (Q) \bigr)
           + \biggl( \frac{\uq_1 (Q)}{2 \p \i} \biggr)^2
	     \ln \biggl( \frac{2 \p \i T}{\e'(Q) \sh (2Q)} \biggr) \notag \\[.5ex]
	   & \mspace{180.mu}
           + \ln \biggl\{ G \biggl(1 - \frac{\uq_1 (Q)}{2 \p \i}\biggr)
	                  G \biggl(1 + \frac{\uq_1 (Q)}{2 \p \i}\biggr)
			  \biggr\} + o(1) \epp
\end{align}
Here we have used Corollary~\ref{cor:zcauchy} and dropped $\CO (T)$
corrections in the integrand in the first equation. The remaining
integrals have been calculated by means of residue calculus (cf.\
Appendix B of \cite{KMS11b}). A similar calculation yields
\begin{multline} \label{ijminus}
     I_{J_-^\de}^{(1)} = - \frac{\uq_1 (- Q)}{2 \p \i}
	    \int_{- Q}^Q \frac{\rd \m}{2 \p \i} \:
	       \cth(\m + Q) \bigl( \uq_1 (\m) - \uq_1 (- Q) \bigr) \\
           + \biggl( \frac{\uq_1 (- Q)}{2 \p \i} \biggr)^2
	     \ln \biggl( \frac{- 2 \p \i T}{\e'(Q) \sh (2Q)} \biggr) \\[.5ex]
           + \ln \biggl\{ G \biggl(1 + \frac{\uq_1 (- Q)}{2 \p \i}\biggr)
	                  G \biggl(1 - \frac{\uq_1 (- Q)}{2 \p \i}\biggr)
			  \biggr\} + o(1) \epp
\end{multline}

The remaining integrals will turn out to be $o(1)$. In order to
prove this assertion we split them once more into
$I_{J_\pm^\de}^{(2)} = I_{J_\pm^\de}^{(3)} + I_{J_\pm^\de}^{(4)}$, where
\begin{subequations}
\begin{align}
     I_{J_\pm^\de}^{(3)} & =
        \int_{J_\pm^\de} \frac{\rd \la}{2 \p \i} \:
	   \de u' (\la) \biggl[
	   \frac{1}{1 + \re^{\uq (\la)/T - \de u(\la)}}
	   - \Theta \biggl( - \frac{\uq (\la)}{T} \biggr) \biggr]
           L_{{\cal C}_{0,s} - \i \g/2} [z] (\la_+) \epc \\[1ex]
     I_{J_\pm^\de}^{(4)} & =
        \int_{J_\pm^\de} \frac{\rd \la}{2 \p \i} \:
	   \de u' (\la) \Theta \biggl( - \frac{\uq (\la)}{T} \biggr)
           L_{{\cal C}_{0,s} - \i \g/2} [z] (\la_+) \epp
\end{align}
\end{subequations}
Here $\Theta$ is the Heaviside step function. The integrals $I_{J_\pm^\de}^{(3)}$
can be treated as $I_{J_\pm^\de}^{(1)}$. Again Lemma~\ref{lem:limint} applies,
and the resulting terms are $o(1)$.

In order to estimate $I_{J_+^\de}^{(4)}$ we write it as
\begin{multline}
     I_{J_+^\de}^{(4)} =
        \int_{Q_+^-}^{Q_+} \frac{\rd \la}{2 \p \i} \: \uq_1' (\la)
	   \biggl\{ - \frac{\uq_1 (\la)}{2 \p \i} \ln \bigl( 2 \p \i T \bigr) \\
           - \ln \biggl( \frac{\G (1/2 + \uq (\la) /2 \p \i T)}
		         {\G \bigl(1/2 + \uq (\la) /2 \p \i T - \uq_1 (\la)/2 \p \i\bigr)}
		         \biggr) \biggr\} + o(1) \epp
\end{multline}
Here we have denoted the left end point of $J_+^\de$, where $\uq = - \de$, by
$Q_+^-$. We have inserted (\ref{closetoqpm}) in which we have suppressed the
$\CO (1)$ terms. Now the integral over the first term in curly brackets can
be evaluated explicitly. In order to estimate the second term we note that
due to Stirling's formula
\begin{equation}
    \ln \biggl( \frac{\G(1/2 + x)}{\G(1/2 + x - a)} \biggr) - a \ln (1 + x)
       + \frac{a + a^2/2}{1 + x} = \CO \bigl( 1/x^2 \bigr) \epp 
\end{equation}
Hence, if we rewrite $I_{J_+^\de}^{(4)}$ as
\begin{align}
     I_{J_+^\de}^{(4)} = & \frac{\uq_1^2 (Q_+^-) - \uq_1^2 (Q_+)}{8 (\p \i)^2}
                            \ln \bigl( 2 \p \i T \bigr) \notag \\[.5ex]
        & - \int_{Q_+^-}^{Q_+} \frac{\rd \la}{2 \p \i} \: \uq_1' (\la) \biggl\{
          \ln \biggl( \frac{\G (1/2 + \uq (\la) /2 \p \i T)}
		           {\G \bigl(1/2 + \uq (\la) /2 \p \i T
			      - \uq_1 (\la)/2 \p \i\bigr)} \biggr)
			    \notag \\ & \mspace{72.mu}
          - \frac{\uq_1 (\la)}{2 \p \i} \ln \biggl(1 + \frac{\uq (\la)}{2 \p \i T} \biggr)
	  + \frac{\uq_1 (\la)}{2 \p \i} \biggl( 1 + \frac{\uq_1 (\la)}{4 \p \i} \biggr)
	    \biggl[1 + \frac{\uq (\la)}{2 \p \i T} \biggr]^{- 1} \biggr\}
	    \notag \\[1ex] & \mspace{-50.mu}
        - \int_{Q_+^-}^{Q_+} \frac{\rd \la}{2 \p \i} \: \uq_1' (\la) \biggl\{
          \frac{\uq_1 (\la)}{2 \p \i} \ln \biggl(1 + \frac{\uq (\la)}{2 \p \i T} \biggr)
	  - \frac{\uq_1 (\la)}{2 \p \i} \biggl( 1 + \frac{\uq_1 (\la)}{4 \p \i} \biggr)
	    \biggl[1 + \frac{\uq (\la)}{2 \p \i T} \biggr]^{- 1} \biggr\} \notag \\
	    & \mspace{432.mu} + o(1) \epc
\end{align}
then, after substituting $x = \uq (\la)$, the integrand in the first integral
satisfies the requirements of Lemma~\ref{lem:limint}, and the integral turns
out to be $o(1)$. The second term in the second integral is $\CO (T)$ and
can be neglected. It follows that
\begin{multline}
     I_{J_+^\de}^{(4)} = \frac{\uq_1^2 (Q_+^-) - \uq_1^2 (Q_+)}{8 (\p \i)^2}
                            \ln \bigl( 2 \p \i T \bigr)
        - \int_{Q_+^-}^{Q_+} \frac{\rd \la}{2 \p \i} \: \uq_1' (\la)
          \frac{\uq_1 (\la)}{2 \p \i}
	  \ln \biggl(1 + \frac{\uq (\la)}{2 \p \i T} \biggr) + o(1) \\[1ex]
        = \frac{\uq_1^2 (Q_+^-) - \uq_1^2 (Q_+)}{8 (\p \i)^2} \ln ( - \de) + o(1)
	= o(1) \epp
\end{multline}
In order to estimate the remaining integral we performed a partial
integration and used that
\begin{equation}
     \int_{- \de}^0 \frac{\rd t}{2 \p \i} \: \frac{g (t) x}{1 + t x/ 2 \p \i}
        = - g(0) \ln \biggl( - \frac{x \de}{2 \p \i} \biggr) + o(1) \epc
\end{equation}
where $g \in {\cal C}^1 (I)$ for an open interval $I$ containing $0$, and
$o (1)$ denotes terms which vanish in the ordered limit first $x \de
\rightarrow \infty$ and then $\de \rightarrow 0$ (cf.\ Appendix~B of
\cite{KKMST09c}). The integral $I_{J_-^\de}^{(4)}$ can be treated in
a similar way and turns out to be $o(1)$ as well.

Summarizing the above we have shown that $A = I_{{\cal C}_-^\de} +
I_{J_-^\de}^{(1)} + I_{J_+^\de}^{(1)} + o(1)$. Then Lemma~\ref{lem:sing2}
follows from (\ref{icminus}), (\ref{ijplus}) and (\ref{ijminus}).
}

\bibliographystyle{amsplain}
\bibliography{hub,fab,andreas}

\end{document}